\documentclass[12pt,a4paper]{amsart}
\usepackage{amssymb}
\usepackage{amsfonts}
\usepackage{amsmath}
\numberwithin{equation}{section}
\usepackage[mathscr]{eucal}
\usepackage{enumerate}
\usepackage{color,graphicx,url}

\usepackage{comment}

\setlength{\oddsidemargin}{1cm}
\setlength{\evensidemargin}{1cm}
\textheight 8.7 in
\textwidth 5.5 in

%%%%%%%%%%%%%%%%%%%%%%%%%%%%%%%%%%%%%%%%%%%%%%%%%%%%
\renewcommand{\proofname}{\textbf{Proof}}
\makeatletter
\renewenvironment{proof}[1][\proofname]{\par
	\pushQED{\qed}%
	\normalfont \topsep6\p@\@plus6\p@\relax
	\trivlist
	\item\relax
	{\bfseries
		#1\@addpunct{.}}\hspace\labelsep\ignorespaces
}{%
	\popQED\endtrivlist\@endpefalse
}
\makeatother
%%%%%%%%%%%%%%%%%%%%%%%%%%%%%%%%%%%%%%%%%%%%%%%%%%%%

\allowdisplaybreaks
%%%%%%%%%%%%%%%%%%%%%%%%%%%%%%%%MSC2020

%%%%%%%%%%%%%%%%%%%%%%%%%%%%%%%%MSC2020

%%%%%%%%%%%%%%%%%%%%%%%%%%%%%
\newtheorem{theorem}{Theorem}[section]
\newtheorem{corollary}[theorem]{Corollary}

\newtheorem{lemma}[theorem]{Lemma}

\newtheorem{example}[theorem]{Example}

\newtheorem{remark}[theorem]{Remark}

%%%%%%%%%%%%%%%%%%%%%%%%%%%%%%%boldsymbol
\def\1{\mbox{\boldmath $1$}} 

\def\ii{\mbox{$\mathrm{i}$}}

\def\C{\mathbf{C}}

\def\R{\mathbf{R}}

\def\Z{\mathbf{Z}}

%%%%%%%%%%%%%%%%%%%%%%%%%%%%%%%
\def\diag{{\mathrm{diag}}}

\def\rank{{\mathrm{rank}}}
\renewcommand{\Re}{\mathrm{Re}}
\renewcommand{\Im}{\mathrm{Im}}
\newcommand{\ZZ}{\mathbf{Z}}
\def\Spec{{\mathrm{Spec}}}

\def\vol{{\mathrm{vol}}}
\def\t{{}^t\!}

\def\wt{{\mathrm{wt}}}
\def\K{{\mathcal{K}}}

\def\cwe{{\mathrm{cwe}}}
\renewcommand{\mod}{\mathrm{mod}}
%\DeclareMathOperator*{\psum}{\sum\,{}^\prime}

%%%%%%%%%%%%%%%%%%%%%%%%%%%%%%%%%%%%%%%%%%%%%%%%%%%%%%%%%%%
%\title[short title]{}
\title[$I$-Bessel lattice sums]{Lattice sums of $I$-Bessel functions, theta functions, linear codes and heat equations\footnote{Declarations of interest: none.}}
%%%%%%%%%%%%%%%%%%%%%%%%%%%%%%%%%%%%%%%%%%%%%%%%%%%%%%%%%%%
\author{Takehiro Hasegawa}
\address[T. Hasegawa]{Department of Education, Shiga University, Otsu, Shiga 520-0862, JAPAN}
\email{thasegawa3141592@yahoo.co.jp}
%%%%%%%%%%%%%%%%%%%%%%%%%%%%%%%%%%%%%%%%%%%%%%%%%%%%%%%%%%%
\author{Hayato Saigo}
\address[H. Saigo]{Nagahama Institute of Bio-Science and Technology, 1266, Tamura, Nagahama 526-0829, JAPAN}
\email{h\_saigoh@nagahama-i-bio.ac.jp}
%%%%%%%%%%%%%%%%%%%%%%%%%%%%%%%%%%%%%%%%%%%%%%%%%%%%%%%%%%%
\author{Seiken Saito}
\address[S. Saito]{
Division of Liberal Arts, 
Center for Promotion of Higher Education, 
Kogakuin University,
2665-1 Nakano, Hachioji, Tokyo 192-0015, JAPAN
}
\email{saito.seiken@cc.kogakuin.ac.jp}
%%%%%%%%%%%%%%%%%%%%%%%%%%%%%%%%%%%%%%%%%%%%%%%%%%%%%%%%%%%
\author{Shingo Sugiyama}
\address[S. Sugiyama]{Faculty of Mathematics and Physics, Institute of Science and Engineering, Kanazawa University, Kakumamachi, Kanazawa, Ishikawa 920-1192, JAPAN}
\email{s-sugiyama@se.kanazawa-u.ac.jp}
%%%%%%%%%%%%%%%%%%%%%%%%%%%%%%%%%%%%%%%%%%%%%%%%%%%%%%%%%%%

\subjclass[2020]{Primary 33C10; Secondary 11F27, 94B05, 35K05}
\keywords{
The modified Bessel functions of the first kind,
lattice sums, 
Poisson summation formulas,
theta functions,
linear codes, 
semidiscrete heat equations}

\begin{document}
\begin{abstract}
We extend a certain type of identities on sums of $I$-Bessel functions on lattices, 
previously given by G.~Chinta, J.~Jorgenson, A.~Karlsson and M.~ Neuhauser.
Moreover we prove that, with continuum limit, the transformation formulas of theta functions such as the Dedekind eta function can be given by $I$-Bessel lattice sum identities with characters.
We consider analogues of theta functions of lattices coming from linear codes and 
show that sums of $I$-Bessel functions defined by linear codes can be expressed by complete weight enumerators.    
We also prove that $I$-Bessel lattice sums appear as solutions of heat equations on general lattices.
As a further application, we obtain an explicit solution of the heat equation on $\Z^n$ whose initial condition is given by a linear code.
\end{abstract}

\maketitle

%%%%%%%%%%%%%%%%%%%%%%%%%
\section{Introduction}
%%%%%%%%%%%%%%%%%%%%%%%%%

The sum $\sum_{\gamma\in\Gamma} f(\gamma)$ for a function $f$ on a lattice $\Gamma$ is
ubiquitous in mathematics and physics, especially
number theory, lattice field theory, crystal physics, etc.
For example, it is related to theta functions, $L$-functions, Mahler measures, etc.
The Madelung constant of a crystal 
can be expressed by the special value of the corresponding Epstein zeta function
(see \cite{BGMWZ} for details). 

In this article, we consider a lattice sum of the $I$-Bessel functions such as 
\begin{align}\label{latticesum}
	\sum_{\gamma \in \Gamma} \prod_{j=1}^n I_{x_j+\gamma_j}(t_j), \qquad (t_1,\ldots,t_n) \in \C^n.
\end{align}
Here, $I_{x}(t)$ is the $I$-Bessel function or the modified Bessel function of the first kind (\cite{Watson,AAR}), $\Gamma$ is a sublattice of $\Z^n$ and $(x_1,\dots, x_n)\in \Z^n$ is a given vector. 

In \cite{KN,CJK}, it is pointed out that the function $e^{-t}I_{x-y}(t)$ ($x,y\in\Z$) is a discrete analogue of the heat kernel $\frac{1}{\sqrt{4\pi t}}e^{-\frac{|x-y|^2}{4t}}$ ($x,y\in\R$) of the Laplacian of $\R$.
Lattice sums such as  \eqref{latticesum} are important in relation to discrete analogues of theta functions. In fact, we can obtain theta functions by taking a ``continuum limit''
of lattice sums. For example, if $n=1$ and $\Gamma=m\Z$ with a positive integer $m$, then the sum \eqref{latticesum} can be written as follows: 
\begin{align}\label{i-e}
	\sum_{\gamma \in m\Z}  I_{x+\gamma}(t)
	=
	\dfrac{1}{m}
	\sum_{j=0}^{m-1}\exp\left(t\cos \frac{2\pi j}{m}\right)e^{2\pi \ii x \frac{j}{m}}, \qquad (x, t) \in \Z \times \C.
\end{align}
By setting $x=0$, multiplying $me^{-t}$ to the both sides,
replacing $t$ with $2m^2 t$ and taking the limit as $m\to\infty$, we can obtain the transformation formula of the theta function:
\begin{align}\label{thetatrans}
	\dfrac{1}{\sqrt{4\pi t}}\sum_{r\in\Z}e^{-\frac{r^2}{4t}}
	=\sum_{j\in \Z} e^{-4\pi^2 j^2 t}, \qquad t >0.
\end{align}
In this article, we refer to \eqref{thetatrans} as
the \textit{continuum limit} of \eqref{i-e}.
Taking continuum limits
has been discussed in some works \cite{Chung, Chung-Yau97,Chung-Yau2001, KN, Mnev, CJK}.
In \cite{KN}, Karlsson and Neuhauser remarked that the identity \eqref{i-e} could be proved directly by an appropriate (generalized) Poisson summation formula, although they used another method.
In this article, we prove a generalization of the identity of $I$-Bessel lattice sums including \eqref{i-e} for 
any sublattice $\Gamma$ of $\Z^n$ with characters by the Poisson summation formula. The following is a main theorem of this article (see also Theorem \ref{thm:I-E} for trivial characters).

\begin{theorem}\label{thm:I-E:char}
	Set $\chi((a_1,\dots, a_n)):=\prod_{j=1}^n \chi_j(a_j)$, where 
	each $\chi_j$ is a primitive Dirichlet character modulo $q$. 
	Let $\Gamma=\Z^nA$ with $A\in {\rm GL}_n(\R)$ be a sublattice of $\Z^n$. 
	Suppose that every entry of $A$ is divisible by $q$. 
	Then, for any $(t_1,\dots,t_n) \in \C^n$,  any $x=(x_1,\dots, x_n)\in \Z^n$ and any $y=(y_1,\dots, y_n)\in \R^n$,  we obtain 
	\begin{align}\label{chi-I-E}
		&\sum_{\gamma=(\gamma_j)_{j} \in \Gamma} \chi(\gamma A^{-1} )
		\left\{\prod_{j=1}^n I_{x_j+\tfrac{\gamma_j}{q}}(t_j)\right\}e^{2\pi\ii\langle y, x+\tfrac{\gamma}{q}\rangle}\\
		=&
		\dfrac{\prod_{j=1}^n\mathcal{G}(\chi_j)}{\vol(\R^n/\Gamma)}
		\sum_{\gamma^\ast=(\gamma_j^*)_j \in \Gamma^\ast} \bar{\chi} (\gamma^\ast \, \t A)
		\left\{\prod_{j=1}^{n}\1_{[-1/2,1/2]}(y_j-\gamma^\ast_j)e^{t_j\cos 2\pi (y_j- \gamma_j^\ast)}\right\}e^{2\pi \ii\langle x, \gamma^\ast \rangle},
		\notag
	\end{align}
where
each $\chi_j$ is regarded as a function on $\Z$ by
${\chi_j}(a)=0$ if $\gcd(a,q)\neq 1$, 
the sum $\mathcal{G}(\chi_j)=\sum_{a=0}^{q-1}\chi_j(a)e^{2\pi \ii \frac{a}{q}}$ is the Gauss sum of $\chi_j$ if $q>1$ and we put $\mathcal{G}(\chi_j)=1$ if $q=1$,
and ${\bf 1}_{[-1/2,1/2]}$ is the rectangular function given as \eqref{ch}.
\end{theorem}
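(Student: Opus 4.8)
The plan is to deduce \eqref{chi-I-E} from the classical Poisson summation formula after two preparatory reductions: replacing $\Gamma$ by the genuinely integral sublattice $\Gamma':=q^{-1}\Gamma$ of $\Z^n$, and converting the primitive Dirichlet characters $\chi_j$ into additive characters via their Gauss sums. First I would write $\gamma=mA$ with $m\in\Z^n$; since every entry of $A$ is divisible by $q$, the matrix $A':=q^{-1}A$ lies in ${\rm GL}_n(\R)$ with integer entries, $\Gamma':=\Z^nA'$ is a sublattice of $\Z^n$, $\gamma/q=mA'\in\Gamma'$, and $\chi(\gamma A^{-1})=\chi(m)=\chi\bigl((\gamma/q)(A')^{-1}\bigr)$. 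Hence the left-hand side of \eqref{chi-I-E} equals $\sum_{\gamma'\in\Gamma'}\chi(\gamma'(A')^{-1})\prod_{j=1}^nI_{x_j+\gamma'_j}(t_j)\,e^{2\pi\ii\langle y,\,x+\gamma'\rangle}$, where now every Bessel order $x_j+\gamma'_j$ is an integer. Using the Fourier-integral representation $I_k(t)=\int_{-1/2}^{1/2}e^{t\cos 2\pi u}e^{-2\pi\ii ku}\,du$ for $k\in\Z$ (the $k$-th Fourier coefficient of $e^{t\cos 2\pi u}$, extended to $t\in\C$ by analytic continuation) and setting $\Phi(u):=\prod_{j}e^{t_j\cos 2\pi u_j}$, each summand becomes $\int_{[-1/2,1/2]^n}\Phi(u)\,e^{-2\pi\ii\langle x+\gamma',\,u-y\rangle}\,du$; since $|I_k(t)|$ decays faster than any exponential in $|k|$, the $\Gamma'$-sum converges absolutely and the problem reduces to evaluating
\[
\int_{[-1/2,1/2]^n}\Phi(u)\,e^{-2\pi\ii\langle x,\,u-y\rangle}\,\Bigl(\,\sum_{\gamma'\in\Gamma'}\chi(\gamma'(A')^{-1})\,e^{-2\pi\ii\langle\gamma',\,u-y\rangle}\Bigr)\,du .
\]

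The core is to identify the twisted comb in parentheses. Writing $\gamma'=mA'$ and using $\langle\gamma',v\rangle=\langle m,v\,\t A'\rangle$, it factors over $j$ as $\prod_j\sum_{m_j\in\Z}\chi_j(m_j)\,e^{-2\pi\ii m_j(v\,\t A')_j}$. Splitting each $m_j$ modulo $q$, applying $\sum_{r\in\Z}e^{-2\pi\ii qrs}=q^{-1}\sum_{\ell\in\Z}\delta(s-\ell/q)$ (Poisson on $\Z$), and then the Gauss-sum identity $\sum_{b\bmod q}\chi_j(b)\,e^{-2\pi\ii b\ell/q}=\bar\chi_j(-\ell)\,\mathcal{G}(\chi_j)$ — valid for \emph{every} $\ell\in\Z$ exactly because $\chi_j$ is primitive — and finally the linear change of variables $v\mapsto v\,\t A'$ (whose Jacobian supplies $|\det A'|^{-1}$, with $q^n|\det A'|=|\det A|=\vol(\R^n/\Gamma)$), this twisted comb becomes $\dfrac{\prod_{j}\mathcal{G}(\chi_j)}{\vol(\R^n/\Gamma)}$ times the sum over $\gamma^\ast\in\Gamma^\ast$ of $\bar\chi(-\gamma^\ast\,\t A)$ multiplied by the unit point mass at $\gamma^\ast$, where $\Gamma^\ast=\tfrac1q\Z^n(\t A')^{-1}$ and $\gamma^\ast\,\t A=\ell\in\Z^n$. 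Substituting this back with $v=u-y$, the point mass at $\gamma^\ast$ extracts $\1_{[-1/2,1/2]^n}(y+\gamma^\ast)\,\Phi(y+\gamma^\ast)\,e^{-2\pi\ii\langle x,\gamma^\ast\rangle}$, and then replacing $\gamma^\ast$ by $-\gamma^\ast$ (a bijection of $\Gamma^\ast$) turns $\bar\chi(-\gamma^\ast\,\t A)$ into $\bar\chi(\gamma^\ast\,\t A)$, the product $\1_{[-1/2,1/2]^n}(y+\gamma^\ast)\Phi(y+\gamma^\ast)$ into $\prod_j\1_{[-1/2,1/2]}(y_j-\gamma_j^\ast)\,e^{t_j\cos 2\pi(y_j-\gamma_j^\ast)}$, and $e^{-2\pi\ii\langle x,\gamma^\ast\rangle}$ into $e^{2\pi\ii\langle x,\gamma^\ast\rangle}$ — which is precisely the right-hand side of \eqref{chi-I-E}.

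The main obstacle is not a single idea but the bookkeeping: one must check that the reduction $\gamma\mapsto\gamma/q$, the duality $\Gamma^\ast=\tfrac1q\Z^n(\t A')^{-1}$ together with the identification $\ell=\gamma^\ast\,\t A$, the normalisation of the Gauss sums, the Jacobian $|\det A|$, and the concluding reflection $\gamma^\ast\mapsto-\gamma^\ast$ all conspire to give \emph{exactly} the constant $\prod_j\mathcal{G}(\chi_j)/\vol(\R^n/\Gamma)$ and the exact signs appearing in \eqref{chi-I-E}, and that it collapses to the trivial-character statement (Theorem~\ref{thm:I-E}) when $q=1$. The only analytic point needing care is to make the step with point masses rigorous; the clean way is to phrase the bracketed identity as a character-twisted Poisson summation formula applied to the compactly supported function $w\mapsto\1_{[-1/2,1/2]^n}(w+y)\Phi(w+y)$, whose Fourier transform at $x+\gamma'$ reproduces the summand and decays super-exponentially, so that the convention \eqref{ch} for $\1_{[-1/2,1/2]}$ is needed only to treat the measure-zero set of parameters $y$ that meet $\Gamma^\ast$.
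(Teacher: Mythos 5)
Your proposal is correct and, once the distributional ``Dirichlet comb'' step is replaced by the character-twisted Poisson summation formula you yourself describe at the end, it is essentially the paper's proof: the paper proves exactly that twisted summation formula for non-continuous test functions (Theorem \ref{thm:c}), applies it to $f(z)e^{-2\pi\ii\langle x,z\rangle}$ with $f(z)=\prod_{j}\1_{[-1/2,1/2]}(z_j)e^{t_j\cos 2\pi z_j}$ on the lattice $q\Gamma$ (after the same dualization you perform), and evaluates $\hat{\bar\chi}$ by the Gauss-sum identity for primitive characters (Lemma \ref{lem:gauss-sum}). Your bookkeeping --- the identification $\ell=\gamma^\ast\,\t A$, the Jacobian $q^n|\det A'|=\vol(\R^n/\Gamma)$, and the final reflection $\gamma^\ast\mapsto-\gamma^\ast$, which the paper absorbs via $\chi_j(-1)\mathcal{G}(\bar\chi_j)\mathcal{G}(\chi_j)=q$ --- all checks out.
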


By taking the continuum limit, we have the following
transformation formula of the Riemann theta functions.
\begin{corollary}
	\label{thm:renormalization}
	Let $\Gamma=\Z^nA$, $q$, $\chi$, $x$ and $y$ be the same as in Theorem \ref{thm:I-E:char}.
	Suppose $(t_1,\dots,t_n)\in \R_{> 0}^n$. Then we have 
	\begin{align}\label{chi-I-E-infty}
		&\frac{1}{(2\pi)^{n/2}}\sum_{\gamma\in \Gamma } \chi(\gamma A^{-1} )
		e^{-\frac{1}{2}\sum_{j=1}^{n}\frac{(x_j+\tfrac{\gamma_j}{q})^2}{t_j}
			+2\pi\ii\langle y, x+\tfrac{\gamma}{q}\rangle}
		\\
		&=
		\dfrac{\prod_{j=1}^n\mathcal{G}(\chi_j)}{\vol(\R^n/\Gamma)}
		\sum_{\gamma^\ast \in \Gamma^\ast}
		\bar{\chi} (\gamma^\ast \, \t A)
		\{\prod_{j=1}^{n}\sqrt{t_j}\}
		e^{-2\pi^2 \sum_{j=1}^n(y_j-\gamma_j^\ast)^2 t_j
			+2\pi \ii\langle x,\gamma^\ast \rangle}.
		\notag
	\end{align} 
\end{corollary}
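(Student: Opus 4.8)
The plan is to obtain \eqref{chi-I-E-infty} as the \emph{continuum limit} of \eqref{chi-I-E}: apply Theorem \ref{thm:I-E:char} to a family of rescaled data indexed by a large positive integer $N$, multiply by a suitable normalizing factor, and let $N\to\infty$. Concretely, in \eqref{chi-I-E} I would replace $A$ by $NA$ (so $\Gamma$ becomes $N\Gamma$, still a sublattice of $\Z^n$ all of whose defining entries are divisible by $q$), $x$ by $Nx$, $y$ by $y/N$, and each $t_j$ by $N^2t_j$, keeping $\chi$ and $q$ fixed. Writing a general element of $N\Gamma$ as $N\gamma$ with $\gamma\in\Gamma$, and one of $(N\Gamma)^\ast=N^{-1}\Gamma^\ast$ as $\gamma^\ast/N$ with $\gamma^\ast\in\Gamma^\ast$, every arithmetic factor is left untouched: $\chi\bigl((N\gamma)(NA)^{-1}\bigr)=\chi(\gamma A^{-1})$, $\bar\chi\bigl((\gamma^\ast/N)\,\t(NA)\bigr)=\bar\chi(\gamma^\ast\,\t A)$, $e^{2\pi\ii\langle y/N,\,Nx+N\gamma/q\rangle}=e^{2\pi\ii\langle y,\,x+\gamma/q\rangle}$, $e^{2\pi\ii\langle Nx,\,\gamma^\ast/N\rangle}=e^{2\pi\ii\langle x,\gamma^\ast\rangle}$, and $\mathcal{G}(\chi_j)$ depends only on $\chi_j$ and $q$. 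What changes is the Bessel index $x_j+\gamma_j/q\mapsto N(x_j+\gamma_j/q)$, the argument $t_j\mapsto N^2t_j$, the covolume $\vol(\R^n/N\Gamma)=N^n\vol(\R^n/\Gamma)$, and, on the right, the data $\1_{[-1/2,1/2]}\bigl((y_j-\gamma_j^\ast)/N\bigr)$ and $e^{N^2t_j\cos(2\pi(y_j-\gamma_j^\ast)/N)}$. One then multiplies both sides by $N^n\bigl(\prod_{j=1}^n\sqrt{t_j}\,\bigr)e^{-N^2\sum_{j=1}^n t_j}$ and takes $N\to\infty$.

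The analytic input on the left is the local limit theorem for the discrete heat kernel: for $k\in\Z$ and $t>0$,
\[
\lim_{N\to\infty}N\sqrt{2\pi t}\;e^{-N^2t}\,I_{Nk}(N^2t)=e^{-k^2/(2t)},
\]
which follows from $e^{-s}I_m(s)=\frac1{2\pi}\int_{-\pi}^{\pi}e^{s(\cos\theta-1)}e^{\ii m\theta}\,d\theta$ ($m\in\Z$) via the substitution $\theta=\phi/N$, the convergence $N^2t(\cos(\phi/N)-1)\to-t\phi^2/2$, and dominated convergence (using $\cos\theta-1\le-2\theta^2/\pi^2$ on $[-\pi,\pi]$). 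Taking $k_j=x_j+\gamma_j/q$ and multiplying over $j$, the $\gamma$-summand on the left of the rescaled identity, times the normalizing factor, converges to $(2\pi)^{-n/2}e^{-\frac12\sum_j(x_j+\gamma_j/q)^2/t_j}$ times the unchanged arithmetic factors, i.e.\ to the $\gamma$-summand of the left side of \eqref{chi-I-E-infty}. On the right, the prefactor $N^n$ cancels $\vol(\R^n/N\Gamma)=N^n\vol(\R^n/\Gamma)$; for each fixed $\gamma^\ast$ the rectangular function equals $1$ once $N$ is large, and $e^{-N^2t_j}e^{N^2t_j\cos(2\pi(y_j-\gamma_j^\ast)/N)}=e^{N^2t_j(\cos(2\pi(y_j-\gamma_j^\ast)/N)-1)}\to e^{-2\pi^2(y_j-\gamma_j^\ast)^2t_j}$, so the $\gamma^\ast$-summand converges to the corresponding summand of the right side of \eqref{chi-I-E-infty}.

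The step requiring the most care is exchanging $\lim_{N\to\infty}$ with the two sums, which for each fixed $N$ are absolutely convergent (on the right the window confines $\gamma^\ast$ to a box of side $N$; on the left $I_{Nk}(N^2t)$ decays super-exponentially in $|k|$) and become the sums of \eqref{chi-I-E-infty} in the limit. On the $\Gamma^\ast$-side this is easy: the window forces $|y_j-\gamma_j^\ast|\le N/2$, hence $\cos(2\pi(y_j-\gamma_j^\ast)/N)-1\le-\tfrac{8}{N^2}(y_j-\gamma_j^\ast)^2$, so each summand is bounded in modulus by a constant times $\prod_j e^{-8t_j(y_j-\gamma_j^\ast)^2}$, which is independent of $N$ and summable over $\gamma^\ast\in\Gamma^\ast$; dominated convergence applies. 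On the $\Gamma$-side one uses a Gaussian upper bound $e^{-s}I_m(s)\le Cs^{-1/2}e^{-cm^2/s}$ valid for $|m|\le s$ (absolute $C,c>0$; obtained by shifting the contour in the integral representation to $\theta\mapsto\theta+\ii\,\mathrm{arcsinh}(m/s)$ and using $\sqrt{1+u^2}-u\,\mathrm{arcsinh}(u)\le1-\tfrac{11}{24}u^2$ for $|u|\le1$), together with the fact that $e^{-s}I_m(s)$ is exponentially small in $s$ when $|m|\ge s$. With $s=N^2t_j$, $m=N(x_j+\gamma_j/q)$ the prefactor $N^n\prod_j\sqrt{t_j}$ becomes an $N$-independent constant; the terms with $|x_j+\gamma_j/q|\le Nt_j$ for all $j$ are then dominated by the summable function $C^n\prod_j e^{-c(x_j+\gamma_j/q)^2/t_j}$, while the remaining ``edge'' terms contribute a quantity that tends to $0$ as $N\to\infty$. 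An $\varepsilon/3$-splitting of each sum into a fixed finite box and its complement completes the argument, yielding \eqref{chi-I-E-infty}.
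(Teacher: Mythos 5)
Your proposal is correct and follows essentially the same route as the paper: rescale Theorem \ref{thm:I-E:char} by $L$ (replacing $\Gamma$, $x$, $y$, $t_j$ by $L\Gamma$, $Lx$, $y/L$, $L^2t_j$), normalize by $\prod_j L\sqrt{t_j}\,e^{-L^2t_j}$, and pass to the limit via the pointwise asymptotic \eqref{limit-I} and dominated convergence. The only difference is cosmetic: for the domination on the $\Gamma$-side the paper invokes Lemma \ref{lem:L-sum} (built on the inequality $\sqrt{s}\,e^{-s}I_{|m|}(s)\le(1+|m|/s)^{-|m|/2}$ from \cite[Corollary 4.4]{CJK}), whereas you derive an equivalent Gaussian upper bound directly from the integral representation.
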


As an application of our identity \eqref{chi-I-E-infty}, we obtain the transformation formula of 
the Dedekind eta function (see Example \ref{ex:eta}).
If $\chi$ equals the primitive quadratic Dirichlet character $\left(\frac{12}{\cdot }\right)$ modulo $12$,
then we can obtain by Corollary \ref{thm:renormalization} the transformation formula of the Dedekind eta function $\eta$: 
$$
\sqrt{\frac{\ii}{\tau}} \eta(-1/\tau)= \eta(\tau).
$$
Theorem 1.1 is regarded as discrete analogues of
theta inversion formulas
of theta functions of lattices of $\R^n$.
The identity in \cite{CJK} is given only for direct products of $n$-copies of $1$-dimensional lattices.
In contrast, our identity can also be applied for lattices of
non-direct products.

In this article, we also discuss the semidiscrete heat equations for the graph Laplacians of lattices.
For semidiscrete heat equations on $\Z$ and on $\Z^n$, see \cite{GI, KN, CJK, BBDS, Slavik, CJK2, CGRTV}.
As an application of
$I$-Bessel lattice sums,
we give an explicit formula of the heat kernel for the Laplacian on an arbitrary lattice (see Theorem \ref{th:heatLambda}). 
In particular, we give an explicit solution 
of the heat equation on $\Z^n$ whose initial condition is given by a linear code as follows (for the definitions of linear codes, dual codes, complete weight enumerators, see the beginning of Section \ref{Lattice sums of I-Bessel functions coming from linear codes}).

\begin{theorem}\label{thm:heateq-code}
	Let $m\ge 2$ be an integer. 
	Let $C$ be a linear code over $\Z/m\Z$ of length $n$.
	Let $\rho: \Z^n \to (\Z/m\Z)^n$ be the reduction modulo $m$ given as $\rho((a_j)_{j=1}^{n})=(a_j \pmod m)_{j=1}^{n}$. 
	Then the heat equation 
	\begin{align*}
		\begin{cases}
			\partial_t u(x,t) = \Delta_{\Z^n} u(x,t), & \qquad (x, t) \in \Z^n \times \R_{\ge0}, \\
			u(x,0) = u_0(x), & \qquad x \in \Z^n
		\end{cases}
	\end{align*}
has a unique bounded solution $u(x,t)$ $(x\in\Z^n,\ t\in\R_{\ge 0})$ which is differentiable in $t \in \R_{>0}$ and continuous in $t\in \R_{\ge0}$,
		where $\Delta_{\Z^n}$ is the graph Laplacian of $\Z^n$ defined by \eqref{laplacian} and
	$$
	u_0(x)=\1_{\rho^{-1}(C)}(x)=
	\begin{cases}
		1, & (\text{if $x \in \rho^{-1}(C)$}),\\
		0,  & (\text{otherwise}).
	\end{cases}
	$$
The solution $u(x,t)$ is explicitly given by 
	\begin{align}\label{solution}
		u(x,t)
		=e^{-t}\,\frac{\# C}{m^n} \, \sum_{c\in C^\bot}\prod_{j=1}^{n}e^{\tfrac{t}{n}\cos(\frac{2\pi c_j}{m})} e^{\frac{2\pi\ii x_jc_j}{m}},
	\end{align}
where $C^\bot$ denotes the dual code of $C$.
In particular, for any $x \in \rho^{-1}(C)$ we have
\begin{align}\label{solution-c}
	u(x,t)
	&=
	e^{-t}\,\frac{\# C}{m^n} \, \cwe_{C^\bot}(e^{\tfrac{t}{n}\cos \frac{2\pi \cdot 0}{m}}, e^{\tfrac{t}{n}\cos \frac{2\pi \cdot 1}{m}},\dots,e^{\tfrac{t}{n}\cos \frac{2\pi (m-1)}{m}}),
\end{align}
where $\cwe_{C^\bot}$ denotes the complete weight enumerator
of $C^\bot$.
\end{theorem}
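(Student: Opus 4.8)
The plan is to realise the solution as the convolution of $u_0$ with the heat kernel of $\Delta_{\Z^n}$, to put the resulting lattice sum into closed form by Theorem \ref{thm:I-E}, and to settle uniqueness by a discrete parabolic maximum principle.

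\emph{A bounded solution.} Writing $\Delta_{\Z^n}=\frac1n\sum_{j=1}^{n}\Delta_j$ with $\Delta_j$ the one–dimensional graph Laplacian acting on the $j$-th coordinate, using that the $\Delta_j$ commute and that the classical discrete heat kernel of $\Z$ gives $e^{s\Delta_j}$ the kernel $(a,b)\mapsto e^{-s}I_{a-b}(s)$, the heat kernel of $\Delta_{\Z^n}$ is $p_t(x,y)=e^{-t}\prod_{j=1}^{n}I_{x_j-y_j}(t/n)$; this is the case $\Lambda=\Z^n$ of Theorem \ref{th:heatLambda}. Since $\sum_{y\in\Z^n}p_t(x,y)=\prod_{j=1}^{n}\bigl(e^{-t/n}\sum_{k\in\Z}I_k(t/n)\bigr)=1$ and $u_0\ge0$ is bounded, the series
\[
u(x,t):=\sum_{y\in\Z^n}p_t(x,y)\,u_0(y)=e^{-t}\sum_{y\in\rho^{-1}(C)}\ \prod_{j=1}^{n}I_{x_j-y_j}(t/n)
\]
converges absolutely and $|u(x,t)|\le\|u_0\|_\infty=1$; because $\Delta_{\Z^n}$ is a finite combination of shifts, differentiation in $t$ and application of $\Delta_{\Z^n}$ commute with the sum, so $\partial_tu=\Delta_{\Z^n}u$ on $\Z^n\times\R_{>0}$, and $p_t(x,y)\to\delta_{x,y}$ as $t\to0^+$ gives continuity on $\R_{\ge0}$ with $u(\cdot,0)=u_0$. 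These are standard properties of the heat semigroup on $\ell^\infty(\Z^n)$, which I would only sketch.

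\emph{Closed form via the lattice sum.} The set $\Gamma:=\rho^{-1}(C)$ satisfies $m\Z^n\subseteq\Gamma\subseteq\Z^n$, hence is a full–rank sublattice, $\Gamma=\Z^nA$ for some integral $A\in\mathrm{GL}_n(\R)$ (e.g.\ a Hermite normal form of the matrix whose rows are a generating set of $\Gamma$, namely integer lifts of generators of $C$ together with $me_1,\dots,me_n$). From $\Z^n/\Gamma\cong(\Z/m\Z)^n/C$ one gets $\vol(\R^n/\Gamma)=[\Z^n:\Gamma]=m^n/\#C$, and a direct computation shows $\Gamma^\ast=\tfrac1m\rho^{-1}(C^\bot)$, since $\langle\tfrac1m w,\gamma\rangle\in\Z$ for all $\gamma\in\rho^{-1}(C)$ exactly when $w\bmod m$ lies in the Euclidean dual $C^\bot$. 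Using $-\Gamma=\Gamma$ to replace $y$ by $-\gamma$, the sum above equals $e^{-t}\sum_{\gamma\in\Gamma}\prod_{j}I_{x_j+\gamma_j}(t/n)$, which is precisely the left–hand side of Theorem \ref{thm:I-E} with trivial characters, $t_j=t/n$, the given $x\in\Z^n$, and $y=0$. That theorem rewrites it as $\vol(\R^n/\Gamma)^{-1}\sum_{\gamma^\ast\in\Gamma^\ast}\prod_{j}\1_{[-1/2,1/2]}(\gamma^\ast_j)\,e^{(t/n)\cos2\pi\gamma^\ast_j}\,e^{2\pi\ii\langle x,\gamma^\ast\rangle}$. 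Writing $\gamma^\ast=c/m$ with $c\in\rho^{-1}(C^\bot)$, and noting that $\cos(2\pi c_j/m)$ and $e^{2\pi\ii x_jc_j/m}$ depend on $c_j$ only modulo $m$ (here $x\in\Z^n$ is used), the rectangular window leaves exactly one contribution of total weight one for each element of $C^\bot$ (the two half–weights at $c_j=\pm m/2$ recombining when $m$ is even). Substituting $\vol(\R^n/\Gamma)=m^n/\#C$ gives \eqref{solution}; in particular $u$ is a finite sum of functions real–analytic in $t$, hence differentiable in $t\in\R_{>0}$. When moreover $x\in\rho^{-1}(C)$ we have $\langle x,c\rangle\equiv0\pmod m$ for every $c\in C^\bot$, so each $e^{2\pi\ii x_jc_j/m}$ equals $1$ and the sum collapses to $\sum_{c\in C^\bot}\prod_{j}e^{(t/n)\cos(2\pi c_j/m)}$, which by definition of the complete weight enumerator is \eqref{solution-c}. (As a check, setting $t=0$ in \eqref{solution} and using orthogonality of the additive characters of $C^\bot$ recovers $\1_{\rho^{-1}(C)}=u_0$.)

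\emph{Uniqueness.} If $u,v$ are bounded solutions with the same initial datum and the stated regularity, then $w:=u-v$ is bounded, solves $\partial_tw=\Delta_{\Z^n}w$ on $\R_{>0}$, and $w(\cdot,0)=0$. Since $\Delta_{\Z^n}|x|^2$ is a positive constant $c_0$, the function $w_\varepsilon(x,t)=\varepsilon(|x|^2+(c_0+1)t)$ satisfies $\partial_tw_\varepsilon-\Delta_{\Z^n}w_\varepsilon=\varepsilon>0$, so $w-w_\varepsilon$ is a strict subsolution; applying the discrete parabolic maximum principle on the cylinders $\{x\in\Z^n:|x|\le R\}\times[0,t_0]$ — on whose parabolic boundary $w-w_\varepsilon<0$ once $R$ is large, because $w$ is bounded while $w_\varepsilon$ grows like $\varepsilon R^2$ — and then letting $R\to\infty$ and $\varepsilon\to0$ gives $w\le0$; symmetrically $w\ge0$, so $w\equiv0$. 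Hence the bounded solution is unique and equals the $u$ above. The step I expect to be the main obstacle is the evaluation of the lattice sum: fitting the non–product sublattice $\rho^{-1}(C)$ to the hypotheses of Theorem \ref{thm:I-E}, identifying its dual lattice and covolume in code–theoretic terms, and keeping track of the boundary behaviour of $\1_{[-1/2,1/2]}$ when $m$ is even.
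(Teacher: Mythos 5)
Your proposal is correct and follows essentially the same route as the paper: convolve $u_0$ with the kernel $\K_{\Z^n,t}(y)=e^{-t}\prod_{j=1}^nI_{y_j}(t/n)$ and evaluate the resulting sum over $x+\rho^{-1}(C)$ by Theorem \ref{thm:I-E}, identifying $(\rho^{-1}(C))^\ast=\tfrac1m\rho^{-1}(C^\bot)$ and $\vol(\R^n/\rho^{-1}(C))=m^n/\#C$ and collapsing the windowed dual sum onto $C^\bot$ (with the half-weights at $\pm m/2$ recombining), which is precisely the content of Theorem \ref{thm:code-main} that the paper invokes at this point. The only differences are minor: you get the dual-lattice identification by a direct two-way computation using $m\Z^n\subseteq\rho^{-1}(C)$ where the paper proves one inclusion and compares covolumes, and you prove uniqueness self-containedly via a maximum principle with the barrier $\varepsilon(\|x\|^2+2t)$ (valid, since $\Delta_{\Z^n}\|x\|^2=1$) where the paper simply cites Theorem \ref{th:heat:Z^n} and, behind it, Dodziuk.
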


Theta functions of lattices of $\R^n$ appear in coding theory (e.g., \cite{NRS, CS, Ebeling}).
In this article, we discuss some kinds of $I$-Bessel lattice sums as discrete analogues of the theta functions of lattices given by
the pull-backs of linear codes over finite rings.
The theta functions of linear codes over a finite ring satisfy
the MacWilliams identity~(\cite{Wood, Nishimura}).
Similarly, our $I$-Bessel lattice sums as discrete analogues of the theta functions of linear codes 
satisfy the identities coming from the MacWilliams identity (see Theorem \ref{thm:code-main}).

This article is concerned with discrete tori and real tori as in
Example \ref{ex:discretetorus} and Remark \ref{trace formula for torus}.
Recently, Xie, Zhao and Zhao \cite{XZZ} gave some identities connecting
spectral zeta functions associated to cycle graphs and Dirichlet $L$-functions
toward the direction similar to ours.
We hope that a discrete analogue of the theta inversion formula in this article
would be useful for the study of those spectral zeta functions and Dirichlet $L$-functions,
as the usual theta inversion formula plays a pivotal role in the analysis of $L$-functions.

This article is organized as follows. 
In Section \ref{Formulas of lattice sums of I-Bessel functions}, we introduce lattices and $I$-Bessel functions,
and discuss the convergence of the $I$-Bessel lattice sums.
Lattice sum identities cannot be proved by using merely the usual Poisson summation formula since our test functions involving $I$-Bessel functions are not continuous.
For the reason, we modify the Poisson summation formula
by dropping the continuity of test functions as in Theorem \ref{PSF}.
Furthermore we also devise computation of our formula in Theorem 
\ref{thm:I-E} by using alternative functions $\tilde I_{x}(t)$ of the $I$-Bessel functions $I_{x}(t)$ since
the Fourier transformation is more tractable.
In Section \ref{Generalizations of sums of the I-Bessel functions on lattices with characters and theta transformation formulas}, we extend the $I$-Bessel lattice sum identities to those with characters. Next we prove that some transformation formulas of theta functions of lattices can be given by these extended $I$-Bessel lattice sum identities. In particular, we treat an example on the Dedekind eta function (see Example \ref{ex:eta}).   
In Section \ref{Lattice sums of I-Bessel functions coming from linear codes}, we review terminologies on linear codes over rings. We consider the lattice sums coming from linear codes and prove the identities of them involving complete weight enumerators of linear codes.
In Section \ref{Sums of I-Bessel functions and heat equations on lattices}, we consider the heat equations on lattices and 
obtain their solutions as $I$-Bessel lattice sums.  
The $I$-Bessel lattice sum corresponding to the Dedekind eta function is interpreted as the solution of the heat equation on a certain lattice (see Remark \ref{heat eq and eta}).
Finally, we obtain the solution of the heat equation on $\Z^n$ 
related to a linear code.

\medskip
\noindent
{\bf Notation.}
The symbol $\ii$ denotes a square root of $-1$ in $\C$.
For a non-zero complex number $t=re^{\ii\theta}$ with $r>0$ and $\theta\in(-\pi,\pi]$, we set $\arg t :=\theta$ and define a square root of $t$ by $\sqrt{t}=\exp(\frac{1}{2}(\log|t|+\ii\arg t))$.
Set $\sqrt{0}:=0$.
  
Let $\ZZ_{\ge 0}$ and $\ZZ_{\ge 1}$ be the set of non-negative integers and that of positive integers, respectively.
The symbols $\R_{\ge 0}$ and $\R_{>0}$ denote the set of all non-negative real numbers and that of all positive real numbers,
respectively.

The symbol $\ll$ is Vinogradov's notation. Suffixes attached to $\ll$ mean that the implied constant there depends on those suffixes.
For complex valued functions $f$ and $g$ on a set, we write $f\asymp g$ if both $f\ll g$ and $g\ll f$ hold.

%%%%%%%%%%%%%%%%%%%%%%%%%%%%%%%%%%%%%%%%%%%%%%%%%%%%%%%%%%%%%
\section{Formulas of lattice sums of $I$-Bessel functions}
\label{Formulas of lattice sums of I-Bessel functions}
%%%%%%%%%%%%%%%%%%%%%%%%%%%%%%%%%%%%%%%%%%%%%%%%%%%%%%%%%%%%%

Let $(\R^n,\langle \cdot , \cdot \rangle)$ be the $n$-dimensional Euclidean space with the scalar product 
$\langle x, y\rangle = x\,{}^ty=\sum_{j=1}^n x_jy_j$ for $x=(x_1,\dots,x_n),
y=(y_1,\dots,y_n)\in\R^n$. The norm of $x\in \R^n$ is denoted by $\|x\|:=\sqrt{\langle x, x \rangle}$.
Throughout this article, we always treat row vectors instead of column vectors
by following the convention of coding theory.
A lattice $\Gamma$ in $\R^n$ is a $\Z$-submodule such that there is an $\R$-basis $\{a_1,\dots,a_n\}$ of $\R^n$ 
which generates $\Gamma$ as a $\Z$-module.
Let $A$ be the $n\times n$ matrix such that its $j$th row vector is $a_j$ ($j=1,\dots, n$).
Note that $\Gamma=\Z^nA$.
The Gram matrix $G$ for $A$ is defined as $G=A\,\t A =[\langle a_i , a_j \rangle]_{1\le i,j \le n}$.
There exists a unique Haar measure $dy$ of $\R^n/\Gamma$ such that
$$\int_{\R^n}f(x)dx=\int_{\R^n/\Gamma}\sum_{\gamma \in \Gamma}f(y+\gamma) dy, \qquad f \in L^1(\R^n).$$
Then we note $$\vol(\R^n/\Gamma)=|\det A|=\sqrt{\det G}.$$
For a lattice $\Gamma=\Z^nA$ in $\R^n$, $\Gamma^\ast$ is the dual lattice of $\Gamma$ defined by
$$
\Gamma^\ast:=\{ \gamma^\ast \in \R^n \mid \mbox{$\langle \gamma ,\gamma^\ast \rangle \in \Z$ for all $\gamma\in\Gamma$}\}.
$$
Then, $\Gamma^\ast=\Z^n\,\t A^{-1}$
and $(\Gamma^*)^*=\Gamma$ hold.
Moreover, we have $\vol(\R^n/\Gamma^*)=\frac{1}{\vol(\R^n/\Gamma)}$.

In the usual Poisson summation formula, test functions are imposed to be continuous.
In this article, we drop the continuity of test functions 
as follows.

\begin{theorem}{\rm (The Poisson summation formula)}
\label{PSF}
Let $\Gamma$ be a lattice in $\R^n$. 
Let $f:\R^n \to \C$ be an $L^1$-function.
Then the Fourier transform 
$$
\hat{f}(\xi):=\int_{\R^n}f(x)e^{-2\pi \ii\langle x, \xi \rangle}\,dx, \qquad \xi \in \R^n
$$
of $f$ is defined. Suppose the following.
\begin{enumerate}
\item[(i)] The series $\sum_{\gamma \in\Gamma} f(x+\gamma)$ converges absolutely
and continuous as a function in $x \in \R^n$.

\item[(ii)] The series 
$
\sum_{\gamma^\ast \in \Gamma^\ast} \hat{f}(\gamma^\ast)
$
is absolutely convergent.
\end{enumerate}
Then, we have the identity
\begin{align}\label{Fourier exp of series}
\sum_{\gamma \in \Gamma} f(x+\gamma)
=\dfrac{1}{\vol(\R^n/\Gamma)}\sum_{\gamma^\ast \in \Gamma^\ast}
\hat{f}(\gamma^\ast)e^{2\pi \ii \langle x, \gamma^\ast \rangle}, \qquad x \in \R^n.
\end{align}
\end{theorem}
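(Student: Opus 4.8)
The plan is to reduce the identity to the uniqueness of Fourier expansions of continuous functions on the torus $\R^n/\Gamma$. First I would introduce the periodization $F(x):=\sum_{\gamma\in\Gamma}f(x+\gamma)$, which by hypothesis (i) converges absolutely to a continuous, $\Gamma$-periodic function. The characters of $\R^n/\Gamma$ are precisely the maps $x\mapsto e^{2\pi\ii\langle x,\gamma^\ast\rangle}$ with $\gamma^\ast\in\Gamma^\ast$ (this is where $\langle\gamma,\gamma^\ast\rangle\in\Z$ enters), so $F$ has a Fourier expansion indexed by $\Gamma^\ast$; the aim is to identify its coefficients and show the expansion represents $F$.

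Next I would compute the Fourier coefficient
$$c_{\gamma^\ast}=\frac{1}{\vol(\R^n/\Gamma)}\int_{\R^n/\Gamma}F(x)e^{-2\pi\ii\langle x,\gamma^\ast\rangle}\,dx.$$
Applying the unfolding identity recorded above to $g(x):=f(x)e^{-2\pi\ii\langle x,\gamma^\ast\rangle}$ and using $e^{-2\pi\ii\langle\gamma,\gamma^\ast\rangle}=1$ for $\gamma\in\Gamma$ gives $\sum_{\gamma\in\Gamma}g(x+\gamma)=e^{-2\pi\ii\langle x,\gamma^\ast\rangle}F(x)$, whence
$$\int_{\R^n/\Gamma}F(x)e^{-2\pi\ii\langle x,\gamma^\ast\rangle}\,dx=\int_{\R^n}g(x)\,dx=\hat{f}(\gamma^\ast).$$
Thus $c_{\gamma^\ast}=\hat{f}(\gamma^\ast)/\vol(\R^n/\Gamma)$. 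The interchange of summation and integration on the fundamental domain is legitimate because $f\in L^1(\R^n)$ and the periodized series converges absolutely, so Fubini--Tonelli applies.

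Finally I would compare $F$ with its Fourier series. By hypothesis (ii) the coefficients $c_{\gamma^\ast}$ are absolutely summable, so $\sum_{\gamma^\ast\in\Gamma^\ast}c_{\gamma^\ast}e^{2\pi\ii\langle x,\gamma^\ast\rangle}$ converges absolutely and uniformly to a continuous, $\Gamma$-periodic function, and the displayed right-hand side of the theorem is exactly this sum. It remains to identify it with $F$, and this is the crux: the classical Poisson formula assumes $f$ itself continuous so that one may appeal to pointwise inversion of $f$, whereas here $f$ may be discontinuous and only the periodization $F$ is continuous. I would therefore invoke uniqueness of Fourier coefficients on $\R^n/\Gamma$: a continuous function whose Fourier coefficients are absolutely summable equals the uniform sum of its own Fourier series. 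Since $F$ is continuous with precisely the coefficients $c_{\gamma^\ast}$, this yields $F(x)=\frac{1}{\vol(\R^n/\Gamma)}\sum_{\gamma^\ast\in\Gamma^\ast}\hat{f}(\gamma^\ast)e^{2\pi\ii\langle x,\gamma^\ast\rangle}$, which is the claimed identity \eqref{Fourier exp of series}. The single structural ingredient is the completeness of the characters $\{e^{2\pi\ii\langle\cdot,\gamma^\ast\rangle}\}_{\gamma^\ast\in\Gamma^\ast}$ in $L^2(\R^n/\Gamma)$, which forces agreement almost everywhere and then everywhere by continuity of both sides; everything else is bookkeeping with the unfolding formula and Fubini's theorem.
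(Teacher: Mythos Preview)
Your proof is correct and follows essentially the same approach as the paper: periodize $f$, identify the Fourier coefficients of the periodization via unfolding, and then pass from the $L^2$ equality to pointwise equality using the continuity of both sides together with completeness of the characters $\{e^{2\pi\ii\langle\cdot,\gamma^\ast\rangle}\}_{\gamma^\ast\in\Gamma^\ast}$. The paper's proof is considerably terser---it simply asserts the $L^2$ equality without spelling out the unfolding/Fubini computation that you carry out explicitly---but the structure is the same.
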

\begin{proof}
The series $\sum_{\gamma \in\Gamma} f(x+\gamma)$ is regarded as a function on the compact abelian group $\R^n/\Gamma$.
Since the Pontrjagin dual of $\Gamma$ is equal to $\{e^{2\pi \ii\langle x, \gamma^*\rangle}\mid \gamma^*\in \Gamma^*\}$ and
the set $\{\vol(\R^n/\Gamma)^{-1/2}e^{2\pi \ii\langle x, \gamma^*\rangle}\mid \gamma^*\in \Gamma^*\}$
is a complete orthonormal system of $L^2(\R^n/\Gamma)$,
we obtain the equality \eqref{Fourier exp of series}
as an element of $L^2(\R^n/\Gamma)$.
As the both sides of \eqref{Fourier exp of series} are continuous by (i) and (ii), the equality \eqref{Fourier exp of series} holds for all $x \in \R^n$.
\end{proof}
%%%%%%%%%%%%%%%%%%%%%%%%%%%%%%%%%

\begin{remark}\label{conti of series}
	In our Poisson summation formula, our test functions $f$ are not necessarily continuous.
However, we need the continuity of the series $\sum_{\gamma \in\Gamma} f(x+\gamma)$.
For example, if $f$ is the characteristic function of $[-1/2, 1/2]$ on $\R$, then $f$ is not continuous.
In that case, $\sum_{\gamma \in \ZZ}f(x+\gamma)$ is not continuous at the points in  $1/2+\Z$.
For the use of the Poisson summation formula, we need to replace this $f$ with the rectangular function
\begin{align}\label{ch}
\1_{[-1/2,1/2]}(x)
=
\begin{cases}
1 &  \text{$($if $|x| <1/2)$,}\\
1/2 & \text{$($if $|x| = 1/2)$,}\\
0 &  \text{$($if $|x|>1/2)$.}
\end{cases}
\end{align}
Then $\sum_{\gamma \in \ZZ}\1_{[-1/2,1/2]}(x+\gamma)$ is continuous on $\R$. This idea is used in the proofs of Theorems \ref{thm:I-E}
and \ref{thm:c}.
\end{remark}

We consider the $I$-Bessel function (the modified Bessel function of the first kind) defined as
$$I_x(t):=\left(\frac{t}{2}\right)^{x}\sum_{m=0}^{\infty}\frac{1}{m!\Gamma(m+x+1)}\left(\frac{t}{2}\right)^{2m}, \qquad t \in \C-(-\infty, 0]
$$
for a fixed $x\in \C$, where $(t/2)^x$ is defined as $\exp(x\log (t/2))$ with $\arg \log (t/2) \in (-\pi,\pi)$.
This series is a solution of the modified Bessel differential equation
$$t^2\frac{d^2w}{dt^2}+t\frac{dw}{dt}-(x^2+t^2)w=0.$$
If $x\in \Z$, the defining series makes sense for all $t \in \C$. Hence $I_x(t)$ is holomorphic on $\C$ as a function in $t$.
We have $I_{x}(t)=I_{-x}(t)$ for all $x\in \Z$ and $t \in \C$.

For $
(x,t)\in \C \times \{t\in\C \mid |\arg t|<\pi/2\} \cup \{x \in \C \mid \Re(x)>0\}\times \{t\in \C \mid  |\arg t|=\pi/2 \},
$
the $I$-Bessel function $I_{x}(t)$ has the following integral representation
\cite[p.181 (4)]{Watson}:
$$
I_{x}(t)=\dfrac{1}{\pi}\int_0^\pi e^{t\cos \theta}\cos x\theta\,d\theta
-\dfrac{\sin \pi x}{\pi}\int_0^\infty e^{-t \cosh y}e^{-xy}\,dy.
$$
For $(x,t)\in\C^2$, we set
$$
\tilde{I}_x(t) := \dfrac{1}{\pi}\int_0^\pi e^{t\cos \theta}\cos x\theta\,d\theta. 
$$
Note that $\tilde{I}_x(t)=I_{x}(t)$ holds if $x\in \Z$
and that $\tilde{I}_{-x}(t)=\tilde{I}_x(t)$ holds for all $x\in \C$.

The following lemmas are used in the proof of
Theorem \ref{thm:I-E:char} for trivial characters,
i.e., Theorem \ref{thm:I-E}.
%%%%%%%%%%%%%%%%%%%%%%%%%%%%%%%
\begin{lemma}\label{lem:Ftrans-I}
For $(x,t)\in\R\times \C$, we have
\begin{align}\label{invFtrans-I}
\tilde{I}_x(t)
=\int_\R \1_{[-1/2,1/2]} (\xi) \exp(t\cos 2\pi \xi)
e^{-2\pi\ii \xi x}\,d\xi,
\end{align}
where $\1_{[-1/2,1/2]}(x)$ is the rectangular function defined as
\eqref{ch}.
\end{lemma}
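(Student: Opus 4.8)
The plan is to evaluate the right-hand side of \eqref{invFtrans-I} directly, via a linear change of variables and a parity argument, and to recognize the result as the defining integral of $\tilde I_x(t)$.

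First I would note that the rectangular function $\1_{[-1/2,1/2]}$ agrees with the ordinary indicator function of the interval $[-1/2,1/2]$ except at the two points $\pm 1/2$, which form a Lebesgue-null set; since the integrand $\xi\mapsto\exp(t\cos 2\pi\xi)e^{-2\pi\ii\xi x}$ is continuous, hence bounded, on the compact interval $[-1/2,1/2]$, the integral on the right of \eqref{invFtrans-I} exists and equals $\int_{-1/2}^{1/2}\exp(t\cos 2\pi\xi)e^{-2\pi\ii\xi x}\,d\xi$. (This measure-zero discrepancy is harmless for integration but, as explained in Remark \ref{conti of series}, is exactly what is needed later to keep the periodized sums continuous in the Poisson summation formula.)

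Next I would substitute $\theta=2\pi\xi$, so that the integral becomes $\frac{1}{2\pi}\int_{-\pi}^{\pi}\exp(t\cos\theta)e^{-\ii x\theta}\,d\theta$. Writing $e^{-\ii x\theta}=\cos(x\theta)-\ii\sin(x\theta)$, I would use that $\theta\mapsto\exp(t\cos\theta)$ is even while $\theta\mapsto\sin(x\theta)$ is odd, so the contribution of the imaginary part vanishes over the symmetric interval $[-\pi,\pi]$; the real-part integrand is even, so the integral equals $\frac{1}{2\pi}\cdot 2\int_{0}^{\pi}\exp(t\cos\theta)\cos(x\theta)\,d\theta=\frac{1}{\pi}\int_{0}^{\pi}e^{t\cos\theta}\cos(x\theta)\,d\theta$, which is precisely $\tilde I_x(t)$ by definition. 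Since each of these manipulations is valid for arbitrary $x\in\R$ and $t\in\C$, the identity \eqref{invFtrans-I} follows for all $(x,t)\in\R\times\C$.

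Every step here is elementary, so there is no genuine obstacle; the only point deserving a word of care is the one flagged above, namely the distinction between $\1_{[-1/2,1/2]}$ and the sharp characteristic function, together with the (trivial) remark that the integral converges because the interval of integration is compact and the integrand is continuous.
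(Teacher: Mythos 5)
Your proposal is correct and follows essentially the same route as the paper: both reduce the integral to $\int_{-1/2}^{1/2}e^{t\cos 2\pi\xi}e^{-2\pi\ii\xi x}\,d\xi$, discard the odd (sine) part by symmetry, and rescale to obtain $\frac{1}{\pi}\int_0^\pi e^{t\cos\theta}\cos(x\theta)\,d\theta=\tilde I_x(t)$. Your additional remark about the measure-zero discrepancy between $\1_{[-1/2,1/2]}$ and the sharp indicator is a harmless and accurate elaboration of what the paper leaves implicit.
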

\begin{proof}
We set
$$f(\xi)=\1_{[-1/2,1/2]}(\xi) \exp(t\cos 2\pi \xi),\qquad \xi \in \R.$$
The right-hand side of \eqref{invFtrans-I} equals $\hat f(x)$ and can be written as
\begin{align*}
	&\int_{-1/2}^{1/2} e^{t\cos 2\pi \xi}
	(\cos 2\pi\xi x-\ii \sin 2\pi\xi x)\,d\xi
	\\
	=&\int_{-1/2}^{1/2}e^{t\cos 2\pi \xi}
	\cos 2\pi\xi x\,d\xi
	=\dfrac{1}{\pi}\int_0^\pi e^{t\cos \theta}\cos x\theta\,d\theta
	=\tilde{I}_x(t).
\end{align*}
This completes the proof.
\end{proof}
%%%%%%%%%
\begin{lemma}\label{convergence lem}
Let $\Gamma$ be a lattice in $\R^n$. For any $t_1,\ldots, t_n \in \C$, the series
$$\sum_{\gamma=(\gamma_j)_j \in \Gamma} \prod_{j=1}^n \left\{ \tilde{I}_{x_j+\gamma_j}(t_j)-\frac{e^{-t_j}\sin \pi (x_j+\gamma_j)}{\pi (x_j+\gamma_j)} \right\}$$
converges absolutely and locally uniformly on $\{(x,t) \mid x =(x_j)_{j=1}^{n} \in \R^n,\ t=(t_j)_{j=1}^{n} \in \C^n \}$.

If $\Gamma$ is a sublattice of $\Z^n$ and $V$ is a compact set in $\C^n$, then
$$\sum_{\gamma=(\gamma_j)_j \in \Gamma} \prod_{j=1}^n {I}_{x_j+\gamma_j}(t_j)$$
converges absolutely and uniformly on $\{(x,t)\mid x \in \ZZ^n, \ t \in V \}$.
\end{lemma}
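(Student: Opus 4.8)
The lemma splits into two parts of rather different character, and the plan is to treat them separately.

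The second part, where $\Gamma$ is a sublattice of $\Z^n$ so that every $x_j+\gamma_j$ is an integer, is the easy one, and I would deduce it from the super-exponential decay of $I_k(t)$ in the order $k\in\Z$. From the defining power series and $I_{-k}(t)=I_k(t)$, estimating termwise with $(m+k)!\ge k!$ gives $|I_k(t)|\le\frac{|t/2|^{|k|}}{|k|!}e^{|t/2|^2}$ for every $k\in\Z$. Hence $\sum_{\gamma\in\Gamma}\prod_{j=1}^n|I_{x_j+\gamma_j}(t_j)|\le\bigl(\prod_{j=1}^n e^{|t_j/2|^2}\bigr)\sum_{\gamma\in\Gamma}\prod_{j=1}^n\frac{|t_j/2|^{|x_j+\gamma_j|}}{|x_j+\gamma_j|!}$, and since $x+\Gamma\subseteq\Z^n$ while the last summand is a non-negative function of $x+\gamma$, the sum over $\gamma\in\Gamma$ is at most $\sum_{v\in\Z^n}\prod_{j=1}^n\frac{|t_j/2|^{|v_j|}}{|v_j|!}=\prod_{j=1}^n(2e^{|t_j/2|}-1)$. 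The resulting bound on $\sum_{\gamma\in\Gamma}\prod_j|I_{x_j+\gamma_j}(t_j)|$ depends only on $t$, is uniformly bounded for $t$ in the compact set $V$, and is independent of $x\in\Z^n$; together with a routine tail estimate (for $\gamma$ with $\|\gamma\|$ large some coordinate $|x_j+\gamma_j|$ is large, making $\frac{|t_j/2|^{|x_j+\gamma_j|}}{|x_j+\gamma_j|!}$ negligible) this gives the asserted absolute and uniform convergence.

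For the first part the decay is only polynomial, so the real content is an asymptotic estimate for $\tilde I_x(t)$ as $|x|\to\infty$ with $x\in\R$. Starting from $\tilde I_x(t)=\frac1\pi\int_0^\pi e^{t\cos\theta}\cos(x\theta)\,d\theta$ and integrating by parts in $\theta$ once (differentiating $e^{t\cos\theta}$, integrating $\cos(x\theta)$) produces exactly the boundary contribution $\frac{e^{-t}\sin\pi x}{\pi x}$ together with $\frac{t}{\pi x}\int_0^\pi\sin\theta\,e^{t\cos\theta}\sin(x\theta)\,d\theta$; a second integration by parts, now using that $\theta\mapsto\sin\theta\,e^{t\cos\theta}$ vanishes at both endpoints $\theta=0,\pi$, turns the remaining integral into $\frac1x\int_0^\pi\frac{d}{d\theta}\bigl(\sin\theta\,e^{t\cos\theta}\bigr)\cos(x\theta)\,d\theta$, of size $O(1/|x|)$. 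Hence $\bigl|\tilde I_x(t)-\frac{e^{-t}\sin\pi x}{\pi x}\bigr|\le C_V/x^2$ for $x\in\R\setminus\{0\}$ and $t\in V$, with $C_V=\frac1\pi\sup_{t\in V}\bigl(|t|\int_0^\pi\bigl|\frac{d}{d\theta}(\sin\theta\,e^{t\cos\theta})\bigr|\,d\theta\bigr)<\infty$. Since $\tilde I_x(t)-\frac{e^{-t}\sin\pi x}{\pi x}$ is jointly continuous in $(x,t)\in\R\times\C$ (the apparent singularity of $\frac{\sin\pi x}{\pi x}$ at $x=0$ is removable), it is also bounded on $\{|x|\le1\}\times V$, so altogether $\bigl|\tilde I_x(t)-\frac{e^{-t}\sin\pi x}{\pi x}\bigr|\le C_V'/(1+x^2)$ for all $x\in\R$, $t\in V$.

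It then remains to prove that $\sum_{\gamma\in\Gamma}\prod_{j=1}^n\bigl(1+(x_j+\gamma_j)^2\bigr)^{-1}$ converges, locally uniformly in $x$, and this is the step I expect to be the real obstacle: for $n\ge2$ the crude bound $\prod_j(1+w_j^2)^{-1}\le(1+\|w\|^2)^{-1}$ is useless since the right-hand side is not summable over a lattice, so one must genuinely use the product structure. I would first prove a lattice-point count in boxes: covering a box $B=\prod_j[a_j,b_j]$ by $O\bigl(\prod_j(1+b_j-a_j)\bigr)$ translates of $[0,1]^n$ and using that $\Gamma$ is discrete (hence any translate of $[0,1]^n$ meets $\Gamma$ in at most a bounded number $M_\Gamma$ of points) yields a constant $C_\Gamma$ with $\#\{\gamma\in\Gamma:|a_j+\gamma_j|\le R_j\ (1\le j\le n)\}\le C_\Gamma\prod_{j=1}^n(1+R_j)$ for all $a\in\R^n$, $R_j\ge0$. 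Then I would decompose $x+\Gamma$ dyadically by coordinate scale: for $\mathbf{m}\in\Z_{\ge0}^n$ collect those $w=x+\gamma$ with $|w_j|<1$ when $m_j=0$ and $2^{m_j-1}\le|w_j|<2^{m_j}$ when $m_j\ge1$. On the $\mathbf{m}$-th piece $\prod_j(1+w_j^2)\ge4^{-n}2^{2\sum_j m_j}$, while the box count (with $R_j=2^{m_j}$) bounds the number of points there by $C_\Gamma2^n\,2^{\sum_j m_j}$; therefore $\sum_{\gamma\in\Gamma}\prod_j(1+(x_j+\gamma_j)^2)^{-1}\ll_{n,\Gamma}\sum_{\mathbf{m}\in\Z_{\ge0}^n}2^{\sum_j m_j}\cdot2^{-2\sum_j m_j}=\bigl(\sum_{m\ge0}2^{-m}\bigr)^n<\infty$, with a constant independent of $x$. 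Combining this with the decay estimate dominates $\sum_{\gamma\in\Gamma}\prod_j\bigl\{\tilde I_{x_j+\gamma_j}(t_j)-\frac{e^{-t_j}\sin\pi(x_j+\gamma_j)}{\pi(x_j+\gamma_j)}\bigr\}$ uniformly for $t\in V$ and $x\in\R^n$, and the tail over $\{\gamma:\|\gamma\|>R\}$ tends to $0$ as $R\to\infty$ uniformly for $x$ in any fixed compact set, which is the asserted local uniformity. (An alternative to the box-counting step is to write $\prod_j(1+w_j^2)^{-1}=\int_{(0,\infty)^n}e^{-\sum_j s_j}e^{-\sum_j s_jw_j^2}\,ds$, sum over $w\in x+\Gamma$ by applying the Poisson summation formula of Theorem \ref{PSF} to the Gaussian, and bound the integral of the transformed theta series, whose only singularity, $(\prod_j s_j)^{-1/2}$ at $s=0$, is integrable.)
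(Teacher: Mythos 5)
Your proposal is correct, and it is worth comparing the two halves separately. For the first assertion you follow essentially the same route as the paper: two integrations by parts applied to $\tilde{I}_x(t)=\frac1\pi\int_0^\pi e^{t\cos\theta}\cos x\theta\,d\theta$, the first of which produces exactly the subtracted term $\frac{e^{-t}\sin\pi x}{\pi x}$ as a boundary contribution and the second of which yields the kernel $e^{t\cos\theta}(\cos\theta-t\sin^2\theta)$ and the decisive bound $O_V\bigl((1+x^2)^{-1}\bigr)$; this is literally the computation in the paper's proof. Where you diverge is in justifying the convergence of $\sum_{\gamma\in\Gamma}\prod_{j=1}^n(1+(x_j+\gamma_j)^2)^{-1}$ for a general lattice $\Gamma$: the paper outsources this to a comparison with the integral $\int_{\R^n}\prod_j(1+|y_j|)^{-2}\,dy$ via \cite[Lemma A.5]{SugiyamaTsuzuki}, whereas you prove it from scratch by a box-counting estimate $\#\{\gamma\in\Gamma: |a_j+\gamma_j|\le R_j\}\ll_\Gamma\prod_j(1+R_j)$ combined with a coordinatewise dyadic decomposition. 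Your argument is self-contained, correctly identifies that the product structure (rather than a radial bound) must be exploited when $n\ge 2$, and even gives a bound uniform over all $x\in\R^n$. For the second assertion your route is genuinely different and more elementary: the paper simply reuses the $(1+|a|)^{-2}$ bound on integers, noting $I_a(t)=\tilde I_a(t)$ and that the subtracted term reduces to $\delta_{a,0}e^{-t}$ there, while you extract the super-exponential decay $|I_k(t)|\le\frac{(|t|/2)^{|k|}}{|k|!}e^{|t|^2/4}$ directly from the power series and sum over $\Z^n$ in closed form. Both work; your version of the second part avoids the integral representation entirely and gives a much stronger quantitative bound, at the cost of not unifying the two halves the way the paper does. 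The only caveat, which applies equally to the paper's own proof, is that uniformity in $x$ over all of $\Z^n$ (respectively, local uniformity in $x$) is really obtained as a uniform majorant for the whole sum rather than as uniform smallness of tails over $\|\gamma\|>R$; your closing remarks about tails handle the locally uniform case correctly.
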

\begin{proof}
We set $F_j(a):=\tilde I_a(t_j)-\frac{e^{-t_j}\sin \pi a}{\pi a}$.
By integration by parts,
$F_j(a)$ for $a\in \R-\{0\}$ is evaluated as
{\allowdisplaybreaks\begin{align*}
&F_j(a)= \frac{1}{\pi}\int_0^{\pi}e^{t_j\cos \theta}\cos a \theta d \theta 
-\frac{e^{-t_j}\sin \pi a}{\pi a}
=  \frac{1}{\pi a}\int_0^{\pi} e^{t_j \cos\theta}(t_j\sin\theta) \sin a \theta d\theta\\
= & \left[t_je^{t_j \cos\theta}\sin\theta \frac{-1}{\pi a^2}\cos a\theta\right]_0^{\pi} +\frac{1}{\pi a^2}\int_{0}^\pi t_j(e^{t_j\cos\theta}\sin\theta)' \cos a\theta d\theta\\
=& \frac{1}{\pi a^2}\int_{0}^{\pi}e^{t_j\cos\theta}(-t^2_j\sin^2\theta+t_j\cos\theta)\cos a \theta d\theta
\end{align*}
}and
\begin{align*}
F_j(0)=\frac{1}{\pi}\int_0^{\pi}e^{t_j\cos \theta}d\theta 
-e^{-t_j}.
\end{align*}
Then, for any $\delta >0$ we have
$$|F_j(a)| \ll_\delta (1+|t_j|+|t_j|^2)e^{|t_j|}\frac{1}{(1+|a|)^2}, \qquad \mbox{$|a|\ge \delta$ or $a=0$}.$$
Set  $T:=\max_{1\le j \le n} |t_j|$ and $\delta:=\min_{1\le j \le n} \min_{\gamma \in \Gamma, x_j+\gamma_j\neq0}|x_j+\gamma_j|>0 $. We obtain the estimate
$$\sum_{\gamma=(\gamma_j)_j \in \Gamma} \prod_{j=1}^{n}|F_j(x_j+\gamma_j)| \ll_\delta (1+T+T^2)^n e^{nT}
\sum_{\gamma\in\Gamma}\prod_{j=1}^{n}\frac{1}{(1+|x_j+\gamma_j|)^2},$$
where the implied constant is independent of $t$ and $x$.
For any compact set $K \subset \R^n$,
we have the estimate $1+|x_j+\gamma_j|\asymp 1+|\gamma_j|$ uniformly in $x\in K$ and $\gamma \in \Gamma$.
By the same argument in \cite[Lemma A.5]{SugiyamaTsuzuki}, we have
\begin{align}\label{convergence of series}
\sum_{\gamma\in\Gamma}\prod_{j=1}^{n}\frac{1}{(1+|x_j+\gamma_j|)^2}\ll_{\Gamma,K} \int _{\R^n}\left\{\prod_{j=1}^n\frac{1}{(1+|y_j|)^2}\right\}dy<\infty.
\end{align}

Suppose that $\Gamma$ is a sublattice of  $\Z^n$ and $x\in \Z^n$.
By
$$|I_{a}(t_j)|=|\tilde I_{a}(t_j)|\le |F_j(a)|+{\delta_{a,0}}e^{|t_j|}\ll (1+|t_j|+|t_j|^2)e^{|t_j|}\frac{1}{(1+|a|)^2} , \ a \in \Z,$$
where $\delta_{x,y}$ is the Kronecker delta function on $\ZZ\times \ZZ$,
we have the following estimate:
\begin{align*}
	\sum_{\gamma=(\gamma_j)_j \in \Gamma} \prod_{j=1}^n 
	|I_{x_j+\gamma_j}(t_j)|
	\ll & 
	 (1+T+T^2)^ne^{nT}
	\sum_{\gamma=(\gamma_j)_j \in \Gamma}	\prod_{j=1}^n\frac{1}{(1+|x_j+\gamma_j|)^2} \\
	\le & (1+T+T^2)^ne^{nT}
	\sum_{\gamma=(\gamma_j)_j \in \Z^n}	\prod_{j=1}^n\frac{1}{(1+|\gamma_j|)^2} <\infty.
\end{align*}
This completes the proof.
\end{proof}

\begin{remark}\label{rem not abs convergent}
The $I$-Bessel lattice sum $\sum_{\gamma \in \Gamma}\prod_{j=1}^{n}{\tilde I}_{x_{j}+\gamma_j}(t_j)$ 
is not absolutely convergent for a general lattice $\Gamma$ and a general $x \in \R^n$
by the first assertion of Lemma \ref{convergence lem}.
\end{remark}

The following is the main theorem in this section. 
%%%%%%%%%%%%%
\begin{theorem}\label{thm:I-E}
	Let $\Gamma$ be a sublattice of $\Z^n$.
	Take $t_1,\dots,t_n \in \C$. 
	If $x\in \Z^n$ and $y\in \R^n$ then we have 
	\begin{align}\label{I-E-1}
		&\sum_{\gamma \in \Gamma} \left\{\prod_{j=1}^n I_{x_j+\gamma_j}(t_j )\right\}e^{2\pi \ii\langle y, x+\gamma\rangle}\\
		=&
		\dfrac{1}{\vol(\R^n/\Gamma)}
		\sum_{\gamma^\ast \in \Gamma^\ast}
		\left\{\prod_{j=1}^{n}\1_{[-1/2,1/2]}(y_j-\gamma_j^*)\exp(t_j\cos 2\pi (y_j-\gamma_j^\ast))\right\}e^{2\pi \ii\langle x, \gamma^\ast \rangle}, \notag
	\end{align}
where ${\bf 1}_{[-1/2,1/2]}$ is the rectangular function given as \eqref{ch}.
		In particular, if  $y=0$ then we have 
	\begin{align}\label{I-E}
		&\sum_{\gamma \in \Gamma} \prod_{j=1}^n I_{x_j+\gamma_j}(t_j )\\
		=&
		\dfrac{1}{\vol(\R^n/\Gamma)}
		\sum_{\gamma^\ast \in \Gamma^\ast}
		\left\{\prod_{j=1}^{n}\1_{[-1/2,1/2]}(\gamma_j^*)\exp(t_j\cos 2\pi \gamma_j^\ast)\right\}e^{2\pi \ii\langle x,\gamma^\ast \rangle}. \notag
	\end{align}
	Here,
$\gamma_j$ and $\gamma_j^\ast$ are the $j$th components of $\gamma$ and of $\gamma^\ast$, respectively.
\end{theorem}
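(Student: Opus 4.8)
The plan is to apply the generalized Poisson summation formula of Theorem \ref{PSF} to a carefully chosen test function on $\R^n$, after reducing to the case $y=0$. First I would reduce \eqref{I-E-1} to \eqref{I-E}: replacing $\Gamma$ by itself but shifting the character, note that $\{\prod_j I_{x_j+\gamma_j}(t_j)\}e^{2\pi\ii\langle y,x+\gamma\rangle}$ differs from the $y=0$ sum only by the unitary twist $e^{2\pi\ii\langle y,\gamma\rangle}$, which on the dual side amounts to translating the dual-lattice argument by $y$; so it suffices to prove \eqref{I-E} and then track the shift. (Alternatively one can absorb $y$ directly into the test function, which is what I will actually do below.)

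Concretely, for fixed $t_1,\dots,t_n\in\C$ and $y\in\R^n$ I would set
\[
f(\xi) := \left\{\prod_{j=1}^{n}\1_{[-1/2,1/2]}(y_j-\xi_j)\exp\bigl(t_j\cos 2\pi(y_j-\xi_j)\bigr)\right\},\qquad \xi=(\xi_1,\dots,\xi_n)\in\R^n,
\]
apply Theorem \ref{PSF} to the dual lattice $\Gamma^\ast$ (whose dual is $\Gamma$), and evaluate at the point $x\in\Z^n$. The function $f$ is in $L^1(\R^n)$ since it is bounded and compactly supported (a product of rectangular windows), so $\hat f$ is defined. By Fubini and Lemma \ref{lem:Ftrans-I} applied coordinatewise — using $\tilde I_a(t)=I_a(t)$ for $a\in\Z$ — the Fourier transform factors as
\[
\hat f(x)=\prod_{j=1}^{n}\int_\R \1_{[-1/2,1/2]}(y_j-\xi_j)\exp\bigl(t_j\cos 2\pi(y_j-\xi_j)\bigr)e^{-2\pi\ii \xi_j x_j}\,d\xi_j
=\Bigl\{\prod_{j=1}^{n}I_{x_j}(t_j)\Bigr\}e^{?},
\]
where after the substitution $\eta_j=y_j-\xi_j$ and invoking Lemma \ref{lem:Ftrans-I} (valid since $x_j\in\Z$, so $\tilde I_{x_j}=I_{x_j}$) each factor becomes $e^{-2\pi\ii y_j x_j}\,\tilde I_{x_j}(t_j)=e^{-2\pi\ii y_j x_j}I_{x_j}(t_j)$; more relevantly, for a general dual-lattice point $\gamma$ in the Poisson sum on the left I need $\hat f$ at $x+\gamma$ with $\gamma\in\Gamma\subset\Z^n$, so the exponent in each factor is an integer and Lemma \ref{lem:Ftrans-I} applies verbatim. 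Thus $\hat f(x+\gamma)=\{\prod_j I_{x_j+\gamma_j}(t_j)\}e^{-2\pi\ii\langle y,x+\gamma\rangle}$, which is exactly the summand appearing (up to the sign in the exponent, which is handled by noting that $e^{2\pi\ii\langle y, x+\gamma\rangle}$ on the right of \eqref{I-E-1} matches the $e^{2\pi\ii\langle\gamma^\ast,\cdot\rangle}$ factor from Poisson with the roles of $\Gamma$ and $\Gamma^\ast$ interchanged).

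The hypotheses (i) and (ii) of Theorem \ref{PSF} are where the work lies, and I expect (i) to be the main obstacle. Hypothesis (ii) — absolute convergence of $\sum_{\gamma\in\Gamma}\hat f(\gamma)$ — follows immediately from the second assertion of Lemma \ref{convergence lem}, since $\Gamma$ is a sublattice of $\Z^n$, the relevant arguments lie in $\Z^n$, and $\{t_j\}$ ranges over a compact set (a single point suffices). For hypothesis (i), I must show $\sum_{\gamma^\ast\in\Gamma^\ast}f(\xi+\gamma^\ast)$ converges absolutely and is continuous in $\xi$. Absolute convergence is clear because $f$ is compactly supported: for each $\xi$ only finitely many translates $\xi+\gamma^\ast$ meet the support $\prod_j[y_j-1/2,y_j+1/2]$. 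Continuity is the delicate point: although each $\1_{[-1/2,1/2]}$ has jumps, the value $1/2$ assigned at $\pm 1/2$ is precisely what makes the periodized sum $\sum_{\gamma^\ast_j\in(\Gamma^\ast \text{ projection})}\1_{[-1/2,1/2]}(\xi_j+\gamma^\ast_j)$ continuous along each coordinate direction, as explained in Remark \ref{conti of series} for $\Z$; I would formalize this by checking that at a putative discontinuity point the left- and right-hand limits each equal the symmetric average, which matches the assigned value. Here one needs $\Gamma^\ast$ to behave well with respect to the unit-width windows — since $\Gamma\subseteq\Z^n$ we have $\Gamma^\ast\supseteq\Z^n$, and the fact that $\Z^n\subseteq\Gamma^\ast$ guarantees that whenever a window boundary is hit in coordinate $j$ there is a matching lattice translate restoring continuity. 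With (i) and (ii) verified, Theorem \ref{PSF} yields \eqref{Fourier exp of series} for $f$ on $\Gamma^\ast$ evaluated at $x$, and unwinding $\hat f$ via Lemma \ref{lem:Ftrans-I} together with $\vol(\R^n/\Gamma^\ast)=1/\vol(\R^n/\Gamma)$ gives exactly \eqref{I-E-1}; setting $y=0$ gives \eqref{I-E}.
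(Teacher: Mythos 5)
Your overall strategy is exactly the paper's: apply the modified Poisson summation formula (Theorem \ref{PSF}) with the compactly supported window function $\prod_j\1_{[-1/2,1/2]}(\cdot)e^{t_j\cos 2\pi(\cdot)}$ periodized over the lattice containing $\Z^n$, identify its Fourier transform at integer arguments with $\prod_j I_{\cdot}(t_j)$ via Lemma \ref{lem:Ftrans-I}, verify hypothesis (ii) by Lemma \ref{convergence lem}, and verify hypothesis (i) by the half-value convention of the rectangular function together with $\Z^n\subseteq\Gamma^\ast$ (the paper's Remark \ref{conti of series}). All of those ingredients are correctly identified and correctly justified.

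There is, however, one concrete slip in the bookkeeping that, as literally written, breaks the derivation. You apply Theorem \ref{PSF} to the lattice $\Gamma^\ast$ with test function $f$ and ``evaluate at the point $x$''; but PSF at the point $x$ produces $\sum_{\gamma\in\Gamma}\hat f(\gamma)\,e^{2\pi\ii\langle x,\gamma\rangle}$ on the dual side --- a phase factor, not a shift of the argument of $\hat f$. Since $\hat f(\gamma)=e^{-2\pi\ii\langle y,\gamma\rangle}\prod_j I_{\gamma_j}(t_j)$, the Bessel index would be $\gamma_j$ rather than the required $x_j+\gamma_j$, so you would only recover \eqref{I-E-1} for $x=0$. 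Your later assertion that the dual-side summand is $\hat f(x+\gamma)$ contradicts this setup. To actually obtain $\hat f(x+\gamma)$ you must modulate the test function by $e^{-2\pi\ii\langle x,\xi\rangle}$ (as the paper does with its auxiliary function $g(z)=f(z)e^{-2\pi\ii\langle x,z\rangle}$, leading to \eqref{phi-f=phi-f}) and evaluate the periodization at $y$ (or at $0$ after absorbing $y$ into $f$ as you do); the remaining discrepancies of sign in the exponents are then removed by the substitutions $y\mapsto -y$, $\gamma^\ast\mapsto-\gamma^\ast$ together with $I_{-a}=I_a$. This is a one-line repair entirely within your framework, and with it your argument coincides with the paper's proof.
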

%%%%%%%%%%%%
\begin{proof}
	The equation \eqref{I-E} follows immediately from \eqref{I-E-1}.
	In the following, we prove \eqref{I-E-1} by replacing $\Gamma$ with $\Gamma^*$.
	We suppose that $\Gamma^*$ is a sublattice of $\Z^n$.
	%%%%%%%%%%%%%%%%%%%%%%%%%%%
	Set $f(z):=\prod_{j=1}^{n}\1_{[-1/2,1/2]} (z_j)e^{t_j\cos 2\pi z_j}$ for any $z=(z_1,\ldots,z_n) \in \R^n$.
Take $x \in \Z^n$ and $y \in \R^n$. Then the function
	$$
	g(z):=f(z)e^{-2\pi\ii\langle x, z \rangle}, \qquad (z\in\R^n)
	$$
is a compactly supported non-continuous function.
Now $\sum_{\gamma \in \Gamma}g(z+\gamma)$	is a finite sum, and continuous as a function in $z\in \R^n$
as we see in Remark \ref{conti of series}.
We have the estimate
	\begin{align}\label{majorant of FT}
		\sum_{\gamma^* \in \Gamma^*}
		|\hat{g}(\gamma^*)|
		&=
		\sum_{\gamma^* \in \Gamma^*}
		|{\hat f}(\gamma^*+x)|
\le \sum_{\gamma^*\in \Z^n}|\hat f(\gamma^*)|
	\end{align}
 with the aid of $\Gamma^*\subset \Z^n$ and $x \in \Z^n$
	and the majorant series converges absolutely by
	Lemmas \ref{lem:Ftrans-I} and \ref{convergence lem}.
	By applying the Poisson summation formula
	(Theorem \ref{PSF}), we have
	\begin{align}\label{phi-f=phi-f}
		\sum_{\gamma \in \Gamma}f(y+\gamma)e^{-2\pi\ii\langle x, y+\gamma \rangle}=
		\frac{1}{\vol(\R^n/\Gamma)}\sum_{\gamma^* \in \Gamma^*}
		\hat f(\gamma^*+x)e^{2\pi \ii\langle y,\gamma^*\rangle}.
	\end{align}
	As a consequence, we obtain
	{\allowdisplaybreaks\begin{align*}
	& \sum_{\gamma \in \Gamma}
\left\{	\prod_{j=1}^{n}\1_{[-1/2,1/2]}(y_j-\gamma_j)\exp(t_j\cos 2\pi (y_j-\gamma_j))\right\}e^{2\pi\ii\langle x, \gamma \rangle}
		\\
	=	& e^{2\pi\ii\langle x,y \rangle} \times \sum_{\gamma \in \Gamma}
\left\{	\prod_{j=1}^{n}\1_{[-1/2,1/2]}(y_j+\gamma_j)\exp(t_j\cos 2\pi (y_j+\gamma_j))\right\}e^{-2\pi\ii\langle x, y+\gamma \rangle}
		\\
		= & e^{2\pi\ii\langle y,x \rangle} \times \dfrac{1}{\vol(\R^n/\Gamma)}\sum_{\gamma^* \in \Gamma^*} \left\{\prod_{j=1}^n I_{x_j+\gamma^*_j}(t_j)\right\}e^{2\pi \ii\langle y, \gamma^*\rangle}.
	\end{align*}
}By the identity $\vol(\R^n/\Gamma^*)=\dfrac{1}{\vol(\R^n/\Gamma)}$ and exchanging the symbols $\Gamma$ for $\Gamma^*$, we obtain
	\eqref{I-E-1}.
\end{proof}
%%%%%%%%%%%%

We now introduce the symbol $\sum^{'}$, 
which is often used in this article. 
For any $a \in \ZZ_{\ge 0}$ and a sequence $(c_k)_{k\in\Z}$ in $\C$, 
we set 
$$
\sideset{}{'}\sum_{k=-[a/2]}^{[a/2]} c_k
:=
\left\{
\begin{array}{ll}
\displaystyle\frac{1}{2} c_{-a/2}+\sum_{k=-a/2+1}^{a/2-1} c_k+\frac{1}{2}c_{a/2},
&\mbox{if $a$ is an even integer,}
\\[6pt]
\displaystyle\sum_{k=-[a/2]}^{[a/2]} c_k,
&\mbox{otherwise.}
\end{array}
\right.
$$

From now, we present several examples of Theorem \ref{thm:I-E}.

%%%%%%%%%%%%%%%%%%%%%%%%%%%%%%%%%
\begin{example}	Take $t\in \C$, $m \in \ZZ_{>0}$ and $x\in \Z$. 
If $\Gamma=m\Z$ then $\Gamma^\ast= \frac{1}{m}\Z$ and $\vol(\R/\Gamma)=m$ hold. 
Since $x$ is an integer, we have 
{\allowdisplaybreaks
\begin{align}\label{I-integer} 
	\sum_{\gamma \in m\Z} {I}_{x+\gamma }(t)
&=
\dfrac{1}{m}
\sum_{\gamma^\ast \in \frac{1}{m}\Z }\1_{[-1/2,1/2]}(\gamma^*)
\exp(t\cos 2\pi \gamma^\ast)e^{2\pi \ii x \gamma^\ast}
\\
&=
\dfrac{1}{m}
\sideset{}{'}\sum_{k=-[\,m/2\,]}^{[\,m/2\,]}
\exp\left(t\cos \frac{2\pi k}{m}\right)e^{2\pi \ii x \frac{k}{m}}
\notag 
\\
&=
\dfrac{1}{m}
\sum_{k=0}^{m-1}\exp\left(t\cos \frac{2\pi k}{m}\right)e^{2\pi \ii x \frac{k}{m}}.
\notag
\end{align}
}The identity \eqref{I-integer} is the same as in \cite[Theorem 2]{KN}
{\rm (}cf.\ \cite[Theorem 1]{ADG} for \eqref{I-integer}  at $x=0${\rm )}. 
\end{example}
%%%%%%%%%%%%%%%%%%%%%%%%%%%%%%%%%

\begin{example}\label{ex:discretetorus}
Take $x=(x_1,\ldots,x_n)\in\Z^n$ and $t_1,\dots,t_n \in \C$.  Let $m_1,\dots, m_n$ be positive integers. 
If $\Gamma=\prod_{j=1}^n m_j \Z$ then the equalities
$\Gamma^\ast=\prod_{j=1}^n \frac{1}{m_j} \Z$ and $\vol(\R^n/\Gamma)=m_1\cdots m_n$ hold. We have 
\begin{align}\label{I-E-discretetorus}
	&
	\sum_{(k_1,\dots,k_n)\in \Z^n}
	\prod_{j=1}^n I_{x_j+m_j k_j}(t_j)\\
	=&
	\dfrac{1}{m_1\cdots m_n}
	\sum_{(k_1,\dots, k_n) \in \prod_{j=1}^n (\Z \cap [0, m_j-1])} 
\prod_{j=1}^{n}	\exp\left(t_j\cos \frac{2\pi k_j}{m_j}\right)
	e^{2\pi \ii \frac{x_j k_j}{m_j}}.\notag
\end{align}
Furthermore, if $x=0$ and $t_1=\dots =t_n=2t$ $(t\in\C)$ then 
by multiplying the both sides of \eqref{I-E-discretetorus} by $e^{-2nt}$, 
we have 
\begin{align}\label{discrete-torus}
\sum_{(k_1,\dots,k_n)\in \Z^n}
\prod_{j=1}^ne^{-2t}I_{m_j k_j}(2t)
=
\dfrac{1}{\vol(\R^n/\Gamma)}\sum_{\lambda \in \Spec(\Delta_{{\rm DT}_m})}e^{-\lambda t},
\end{align}
where $$
\Spec(\Delta_{{\rm DT}_m})
=\left\{2n-2\sum_{j=1}^n \cos \dfrac{2\pi k_j}{m_j}  \ {\bigg| }\ 0\le k_j <m_j, j=1,2,\dots,n\right\}
$$
is the multiset of all eigenvalues of the Laplacian $\Delta_{{\rm DT}_m}=2nI_{m_1\cdots m_n}-A_{{\rm DT}_m}$ of the discrete torus
${\rm DT}_m:=\prod_{j=1}^n C_{m_j}$
with $m:=(m_j)_j$.
Here $A_{{\rm DT}_m}$ is the adjacency matrix of ${\rm DT}_m$ and $C_{m_j}$ is the cycle graph with $m_j$ vertices. 
The identity \eqref{discrete-torus} is the same as in \cite[Lemma 3.1]{CJK}.  
\end{example}
%%%%%%%%%%%%%%%%%%%%%%%%%%%%%%%%%%%%%%%%%%%%%%%%%%%%%%%%%%%%%%
\begin{lemma}\label{lem:eigenvalue}
Let $\Gamma=\Z^n A$ be a lattice of $\R^n$. 
Let $\{ e_1 = (1,0,\dots,0), \dots, e_n = (0,\dots,0,1) \}$ 
be the standard basis of $\R^n$.
For a bounded function $f : \R^n/\Gamma  \to \C$, 
let $\Delta'$ denote the Laplacian defined by 
\begin{align}\label{laplacian'}
\Delta' f(x) := \dfrac{1}{2}\sum_{i=1}^n  \left( f(x+e_i)+f(x-e_i)-2f(x) \right)
\end{align}
{\rm (}cf.\ \cite[pp.74--75]{Borthwick}{\rm )}. For a fixed $\gamma^\ast=(\gamma_1^\ast,\ldots, \gamma_n^\ast) \in \Gamma^\ast$, 
we define $g_{\gamma^\ast} : \R^n/\Gamma \to \C$ as
$$
g_{\gamma^\ast} (x)
:=e^{2\pi \ii\langle x,\gamma^\ast \rangle}.
$$
Then the function $g_{\gamma^\ast}(x)$ 
is an eigenfunction of $\Delta'$ of the eigenvalue
$$\sum_{i=1}^n(\cos 2\pi  \gamma^\ast_i -1).$$
Furthermore, the multiset of all eigenvalues of $\Delta'$ 
is 
\begin{align}\label{eigen of Delta'}
\left\{ \sum_{i=1}^n(\cos 2\pi   \gamma^\ast_i -1) \mid \gamma^\ast=(\gamma_1^\ast,\ldots, \gamma_n^\ast) \in \Gamma^\ast \right\}.
\end{align}
\end{lemma}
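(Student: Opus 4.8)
The plan is to verify directly that $g_{\gamma^\ast}$ is an eigenfunction, then argue that these functions exhaust the spectrum by a dimension/orthonormal-basis count on $L^2(\R^n/\Gamma)$. First I would compute $\Delta' g_{\gamma^\ast}(x)$ by substituting the definition \eqref{laplacian'}. Since $g_{\gamma^\ast}(x \pm e_i) = e^{2\pi\ii\langle x, \gamma^\ast\rangle} e^{\pm 2\pi\ii\langle e_i, \gamma^\ast\rangle} = g_{\gamma^\ast}(x) e^{\pm 2\pi\ii\gamma_i^\ast}$, we get
$$
\Delta' g_{\gamma^\ast}(x) = \frac{1}{2}\sum_{i=1}^n \left( e^{2\pi\ii\gamma_i^\ast} + e^{-2\pi\ii\gamma_i^\ast} - 2 \right) g_{\gamma^\ast}(x) = \left\{\sum_{i=1}^n (\cos 2\pi\gamma_i^\ast - 1)\right\} g_{\gamma^\ast}(x),
$$
which is exactly the claimed eigenvalue. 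One should note here that $g_{\gamma^\ast}$ is well defined on $\R^n/\Gamma$ precisely because $\langle \gamma, \gamma^\ast\rangle \in \Z$ for $\gamma \in \Gamma$ (the defining property of $\Gamma^\ast$), and it is bounded, so applying $\Delta'$ makes sense.

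For the second assertion — that the multiset \eqref{eigen of Delta'} lists all eigenvalues — I would invoke the fact already used in the proof of Theorem \ref{PSF}: the family $\{\vol(\R^n/\Gamma)^{-1/2} g_{\gamma^\ast} \mid \gamma^\ast \in \Gamma^\ast\}$ is a complete orthonormal system of $L^2(\R^n/\Gamma)$ (it is the set of characters of the compact abelian group $\R^n/\Gamma$, i.e.\ its Pontrjagin dual). Since $\Delta'$ is diagonalized by this orthonormal basis, with $g_{\gamma^\ast}$ having eigenvalue $\sum_{i=1}^n(\cos 2\pi\gamma_i^\ast - 1)$, every eigenvalue of $\Delta'$ must appear in the list, and conversely each listed value is attained; counted with multiplicity (distinct $\gamma^\ast$ may give the same value), this yields the multiset \eqref{eigen of Delta'}.

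The computation in the first part is entirely routine. The only point requiring a little care is the second part: one must be precise about the sense in which ``the multiset of all eigenvalues'' is meant — namely as the spectral decomposition of $\Delta'$ acting on $L^2(\R^n/\Gamma)$, with multiplicities recorded by the cardinality of the fiber $\{\gamma^\ast \in \Gamma^\ast : \sum_i(\cos 2\pi\gamma_i^\ast-1) = \lambda\}$. Since $\Delta'$ maps each one-dimensional subspace $\C g_{\gamma^\ast}$ to itself and these span a dense subspace, there are no other eigenvalues, so this is the main (and only mild) obstacle; I do not expect it to cause real difficulty.
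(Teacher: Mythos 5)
Your proposal is correct and follows essentially the same route as the paper: a direct computation using $g_{\gamma^\ast}(x\pm e_i)=e^{\pm 2\pi\ii\gamma_i^\ast}g_{\gamma^\ast}(x)$ to identify the eigenvalue, followed by the observation that $\{\vol(\R^n/\Gamma)^{-1/2}g_{\gamma^\ast}\}_{\gamma^\ast\in\Gamma^\ast}$ is a complete orthonormal system of $L^2(\R^n/\Gamma)$ to conclude that these exhaust the spectrum. Your added remarks on well-definedness modulo $\Gamma$ and on how multiplicities are counted are sensible elaborations of what the paper leaves implicit.
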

%%%%%%%%%%%%%%%%%%%%%%%%%%%%%%%%%%%%%%%%%%%%%%%%%%%%%%%%%%%%%%%%
\begin{proof}
By definition we see 
\begin{align*}
\Delta' g_{\gamma^\ast}(x)
&=\dfrac{1}{2}\sum_{i=1}^n  \left( g_{\gamma^\ast}(x+e_i)+g_{\gamma^\ast}(x-e_i)-2g_{\gamma^\ast}(x) \right)
\\
&=
\dfrac{1}{2}
\sum_{i=1}^n
\left(
e^{2\pi \ii\langle x+e_i,\gamma^\ast \rangle}+e^{2\pi \ii\langle x-e_i,\gamma^\ast \rangle}
-2e^{2\pi \ii\langle x,\gamma^\ast \rangle}
\right). 
\end{align*}
By the relation
$$
e^{2\pi \ii\langle x+e_i,\gamma^\ast \rangle}+e^{2\pi \ii\langle x-e_i,\gamma^\ast \rangle}
-2e^{2\pi \ii\langle x,\gamma^\ast \rangle}
=2e^{2\pi \ii\langle x,\gamma^\ast \rangle}
\left(
\cos 2\pi  \langle e_i ,\gamma^\ast \rangle -1
\right),
$$
we obtain $\Delta' g_{\gamma^\ast}(x)=\sum_{i=1}^n(\cos 2\pi \gamma_i^\ast -1) g_{\gamma^\ast}(x)$. 
The set 
$\{\tfrac{1}{\sqrt{\vol(\R^n/\Gamma)}}g_{\gamma^\ast}(x)\}_{\gamma^\ast \in \Gamma^\ast}$ is a complete orthonormal system of $L^2(\R^n/\Gamma)$.  
Thus the multiset of all eigenvalues of $\Delta'$ 
is equal to
$\{ \sum_{i=1}^n(\cos 2\pi   \gamma^\ast_i -1) \mid \gamma^\ast \in \Gamma^\ast \}.$
\end{proof}
\begin{remark}\label{rem:partial trace}
Assume that $\Gamma=\Z^nA$ is a sublattice of $\ZZ^n$.
By Lemma \ref{lem:eigenvalue}, the right-hand side of  \eqref{I-E} is a sum of eigenfunctions of $\Delta'$ 
over the subset
$$
\{ \sum_{i=1}^n(\cos 2\pi   \gamma^\ast_i -1) \mid \gamma^\ast=(\gamma_1^\ast,\ldots, \gamma_n^\ast) \in \Gamma^\ast  \cap [-1/2,1/2]^n \}
$$
of the multiset \eqref{eigen of Delta'}.
By setting $t_1=\cdots=t_n=t/n$ $(t\in\C)$ and multiplying the both sides of \eqref{I-E} by $e^{-t}$,  we obtain  
\begin{align*}
	& \sum_{\gamma \in \Gamma} e^{-t}\prod_{j=1}^n I_{x_j+\gamma_j}(\tfrac{t}{n})\\
	= &
	\dfrac{1}{\vol(\R^n/\Gamma)}
	\sum_{\gamma^\ast \in \Gamma^\ast}
	\left\{\prod_{j=1}^{n}\1_{[-1/2,1/2]}(\gamma_j^*)\exp(\tfrac{t}{n}(\cos 2\pi \gamma_j^\ast-1))\right\}e^{2\pi \ii\langle x,\gamma^\ast \rangle}. \notag
\end{align*}
Assume $\Gamma=\Z^n$.
By the same argument as in \cite[\S3]{KN} {\rm (}cf.\ \cite[\S 5.2]{Slavik}{\rm )},
the function $\K_{\Z^n,t}(x):=e^{-t}\prod_{j=1}^n I_{x_j}(t/n)$, $(x=(x_1,\ldots,x_n) \in \Z^n)$ in the left-hand side
above is a solution to the semidiscrete heat equation on $\Z^n$ defined by
\begin{align*}
\begin{cases}
&		\partial_t \, \K_{\Z^n,t}(x) = \frac{1}{n}\Delta' \, \K_{\Z^n,t}(x), \qquad x \in \Z^n, \\
& \displaystyle \lim_{t\rightarrow +0}\K_{\Z^n,t}(x) = \delta_{x,0}, \qquad x \in \Z^n,
\end{cases}
\end{align*}
where $\delta_{x,y}$ is the Kronecker delta function on $\Z^n\times\Z^n$.
Note that the Laplacian $\Delta$ as \eqref{laplacian} in Section \ref{Sums of I-Bessel functions and heat equations on lattices} is 
different from $\Delta'$ in \eqref{laplacian'}
and that we have the relation $\Delta'=n\Delta_{\Z^n}$.
\end{remark}

%%%%%%%%%%%%%%%%%%%%%%%%%%%%%%%%%%%%%%%%%%%%%%%%%%%%%%%%%%%%%%%%%%
\section{Generalizations of sums of the $I$-Bessel functions on lattices with characters and theta transformation formulas}
\label{Generalizations of sums of the I-Bessel functions on lattices with characters and theta transformation formulas}
%%%%%%%%%%%%%%%%%%%%%%%%%%%%%%%%%%%%%%%%%%%%%%%%%%%%%%%%%%%%%%%%%%
In this section, we consider results for the theta functions 
of sublattices of $\Z^n$ and extend them with characters. 

Let $\Gamma = \Z^n A$ be  
a lattice in $\R^n$ with $A$ being an $n$-by-$n$ regular matrix.
We call
$$
\Theta_\Gamma (t) 
:=
\sum_{\gamma \in \Gamma} e^{-(2\pi)^2 \langle \gamma, \gamma \rangle t} 
=
\sum_{x \in \Z^n} e^{-(2\pi)^2 \langle xA, xA \rangle t} 
= 
\sum_{x \in \Z^n} e^{-(2\pi)^2 \langle xG, x \rangle t} ,\quad t>0
$$
the theta function associated to the real torus $\R^n/\Gamma$, 
where $G= A \, \t A$ is the Gram matrix of $A$. 
This theta function is often called the theta function of a lattice $\Gamma$  
(e.g., \cite{Conway-Sloane}, \cite[Section 9.1.1]{NRS} and \cite[Chapter 2]{Ebeling}).
We note that the symbol $\Theta_A(t)$ in \cite{CJK} is equal to $\Theta_{\Gamma^*}(t)$
in our notation\footnote{The dual lattice $A^*$ to $A$ in \cite[\S2.5]{CJK} is not difined there. In \cite[\S2.5]{CJK}, $A^*$ is the Gram matrix for ${}^{t}A^{-1}$.}.

Let $\alpha_1,\ldots, \alpha_n$ be positive real numbers. 
When $A=\diag(\alpha_1,\ldots, \alpha_n)$,
Chinta, Jorgenson and Karlsson proved in \cite[Proposition 5.2]{CJK}
that the formula \eqref{discrete-torus} is transformed into
the theta inversion formula 
of $\Theta_\Gamma(t)$ in the following way:
Let $L=L(u)$ be a sequence of positive integers indexed by 
$u \in \Z_{\ge 1}$ such that $\frac{L(u)}{u} \to \alpha$ as $u \to \infty$ 
holds for some positive real number $\alpha$. 
For a fixed $t>0$ and $k \in \Z_{\ge 0}$, 
we have the pointwise limit 
\begin{align}\label{limit-I}
L e^{-2u^2 t} I_{k L} (2u^2 t) \to 
\dfrac{\alpha}{\sqrt{4 \pi t}} e^{-\frac{(\alpha k)^2}{4t}} 
\end{align}
as $u \to \infty$ by \cite[Proposition 4.7]{CJK}.
More precisely, let $N(u)=(N_1(u),\dots, N_n(u))$ be 
$n$-tuples of integer sequences parametrized by $u\in\Z_{\ge 1}$ such that 
\begin{align*}
\displaystyle\frac{1}{u}N(u)
=\left(\frac{N_1(u)}{u},\dots, \frac{N_n(u)}{u}\right)\to 
(\alpha_1,\dots,\alpha_n) 
\end{align*}
as $u \to \infty$ holds for some positive real numbers $\alpha_1,\dots,\alpha_n$. 
Substituting $u^2 t$ for $t$ and $(m_1,\dots,m_n)=(N_1(u),\dots,N_n(u))$ in  \eqref{discrete-torus}, and multiplying \eqref{discrete-torus} by 
the number $\prod_{j=1}^n N_j(u)$ of vertices of the graph ${\rm DT}_{N(u)}= \prod_{j=1}^n C_{N_j(u)}$, we obtain 
\begin{align}\label{discrete-torus-N}
& \sum_{(k_1,\dots,k_n)\in \Z^n}
\prod_{j=1}^n  N_j(u)\, e^{-2u^2t}I_{N_j(u) k_j}(2u^2t)\\
= &
\dfrac{\prod_{j=1}^n N_j(u)}{\vol(\R^n/\Gamma(N(u)))}\sum_{\lambda \in \Spec(\Delta_{{\rm DT}_{N(u)}})}e^{-\lambda u^2 t}, \notag
\end{align}
where $\Gamma(N(u)):=\Z^n\diag(N_1(u),\dots, N_n(u))$.

As $u\to\infty$, the left-hand side and the right-hand side of \eqref{discrete-torus-N} converge to  the left-hand side and the right-hand side of the following 
identity, respectively:
$$
|\det A|(\tfrac{1}{4\pi t})^{n/2} \Theta_{\Gamma}(\tfrac{1}{16\pi^2 t})=\Theta_{\Gamma^*} (t), 
$$
where $\Gamma:=\Z^n\diag(\alpha_1,\dots,\alpha_n)$.
Now, we review the absolute convergence of the sums of the Bessel functions on lattices appearing in \eqref{discrete-torus-N}. 
%%%%%%%%%%
\begin{lemma}\label{lem:L-sum}
Let $t_1,\dots,t_n$ be any complex numbers and let $m$ be a positive integer.
Let $\Gamma$ be a sublattice of $\Z^n$. 
Take $(x_1,\ldots, x_n) \in \Z^n$. 
Then the sum
$$
\sum_{(\gamma_1,\dots,\gamma_n)\in \Gamma } \prod_{j=1}^n \sqrt{t_j}\cdot Le^{-L^2 |t_j|}   I_{(\gamma_j+x_j) L}(L^2 t_j)
$$
is bounded for $L\in\Z_{\ge 1}$.

In particular, for $b\in \ZZ$ the sum  
$$
\sqrt{t}\cdot Le^{-L^2 |t|} \sum_{k\in \Z}  I_{(mk+b)L}(L^2 t)
$$
is bounded for $L\in\Z_{\ge1}$.

\end{lemma}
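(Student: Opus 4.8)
The plan is to reduce everything to the uniform bound on the single-variable quantity $\sqrt{t}\,L e^{-L^2|t|} I_{aL}(L^2 t)$ for $a\in\Z$, $L\in\Z_{\ge1}$, $t\in\C$, and then recover the lattice sum by the same device used in Lemma \ref{convergence lem}: split each Bessel factor as $\tilde I_{aL}(L^2 t) = F(aL) + \frac{e^{-L^2 t}\sin\pi(aL)}{\pi aL}$, noting that $\sin\pi(aL)=0$ whenever $aL\in\Z$, so in fact $I_{aL}(L^2 t)=\tilde I_{aL}(L^2 t)=F(aL)$ on the integers (with the $a=0$ term handled separately). Thus the real content is an upper bound for $F_j$ evaluated at integer multiples of $L$.

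First I would record the integration-by-parts estimate already derived in the proof of Lemma \ref{convergence lem}: for $j$ fixed and $s=L^2 t_j$,
\begin{align*}
F(aL) = \frac{1}{\pi (aL)^2}\int_0^\pi e^{s\cos\theta}\bigl(-s^2\sin^2\theta + s\cos\theta\bigr)\cos(aL\theta)\,d\theta,
\end{align*}
so that $|F(aL)| \ll (|s|+|s|^2)\,e^{|s|}\,\frac{1}{(aL)^2}$ for $a\neq 0$, while for $a=0$ we have $|F(0)| \le 2e^{|s|}$. Substituting $s=L^2 t$ gives, for $a\neq0$,
\begin{align*}
\bigl|\sqrt{t}\,L e^{-L^2|t|} I_{aL}(L^2 t)\bigr| = |\sqrt t|\,L e^{-L^2|t|}|F(aL)| \ll |\sqrt t|\,\frac{(L^2|t|+L^4|t|^2)}{L\,a^2}\,e^{L^2(\Re t - |t|)} \ll \frac{|t|^{1/2}(L|t|+L^3|t|^2)}{a^2},
\end{align*}
which is exactly what one wants multiplied over $j$ and summed over $\gamma$ — except for the annoying growth in $L$. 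The key point is that this growth is absorbed: for $a=0$ the factor is $|\sqrt t|\,L e^{-L^2|t|}|F(0)|\le 2|\sqrt t|\,L e^{-L^2(|t|-\Re t)}\le 2|\sqrt t|\,L e^{-L^2|t|}e^{L^2\Re t}$, and since $|t|\ge \Re t$ is false only... — no, $|t|\ge\Re t$ always, but we need $|t|-\Re t$ could be $0$ (when $t>0$), in which case $L e^{-L^2|t|}$ is what survives from the $e^{-L^2 t}$ in $F(0)=\frac1\pi\int_0^\pi e^{s\cos\theta}d\theta - e^{s}$; there $e^{s}e^{-L^2|t|}$ with $s=L^2t$, $t>0$, gives exactly $1$, and $\frac1\pi\int_0^\pi e^{L^2 t\cos\theta}d\theta\le e^{L^2 t}$, so the whole bracket is $\le 2$ and the bound is $\le 2|\sqrt t|\,L$, which is \emph{not} bounded in $L$. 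So the bracket estimate must be sharpened: $F(0) = \frac1\pi\int_0^\pi(e^{s\cos\theta}-e^{s\cos 0})d\theta \cdot$(correction) — better, use $\frac1\pi\int_0^\pi e^{s\cos\theta}d\theta - e^s = -\frac1\pi\int_0^\pi(e^s - e^{s\cos\theta})d\theta$ and note $|e^s - e^{s\cos\theta}| = |\int_{\cos\theta}^1 s e^{s\rho}d\rho|\le |s|(1-\cos\theta)e^{|s|}$, which restores the $|s|e^{|s|}$ factor and hence the $\frac{L^2|t|}{L}\cdot|\sqrt t|$ type bound — still $O(L)$. Hmm.

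The resolution I would actually pursue: one should \emph{not} bound term-by-term with the crude $e^{|s|}$ but keep the oscillation. Write $F(aL) = \frac{1}{\pi(aL)^2}\int_0^\pi \phi_s(\theta)\cos(aL\theta)d\theta$ with $\phi_s(\theta) = e^{s\cos\theta}(s\cos\theta - s^2\sin^2\theta)$; integrate by parts once more in $\theta$ to gain another factor $\frac{1}{aL}$ at the cost of differentiating $\phi_s$ (which brings down at most a power of $s$), so that $|F(aL)|\ll \mathrm{poly}(|s|)e^{\Re s}\,\frac{1}{|aL|^3}$ when $a\neq 0$ and, crucially, with $e^{\Re s}$ rather than $e^{|s|}$ — because on $[0,\pi]$, $\Re(s\cos\theta)\le \max(\Re s, -\Re s)=|\Re s|$... that is still not $\Re s$. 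The honest fix, and the one I expect the authors use, is to exploit that $e^{-L^2|t_j|}$ beats \emph{any} polynomial in $L$ times $e^{L^2\Re t_j}$ precisely because $|t_j| - \Re t_j \ge 0$ with equality only for $t_j\in\R_{>0}$, and for $t_j>0$ real the combination $L^2 t_j$ inside $e^{s\cos\theta}$ with $\cos\theta<1$ except at $\theta=0$ gives, via Laplace's method on the integral $\int_0^\pi e^{L^2 t_j\cos\theta}\cdot(\text{poly})\,d\theta \sim C L^{-1}e^{L^2 t_j}$, exactly one power of $L^{-1}$ to kill the stray $L$. So the main obstacle — and the step I would spend the most care on — is establishing the uniform-in-$L$ bound in the \emph{positive real} case $t_j>0$, where a Laplace-type (stationary-phase at the endpoint $\theta=0$) estimate of $\int_0^\pi e^{L^2 t\cos\theta}\cos(aL\theta)\,d\theta$ is needed, yielding $|I_{aL}(L^2 t)| \ll_t L^{-1}e^{L^2 t}(1+a^2)^{-1}$; for complex $t$ with $|t|>\Re t$ the factor $e^{-L^2(|t|-\Re t)}$ is super-exponentially small and trivially absorbs all polynomial growth. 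Then the lattice sum is handled exactly as in Lemma \ref{convergence lem}, bounding $\sum_{\gamma\in\Gamma}\prod_j(1+|x_j+\gamma_j|)^{-2}$ by a convergent integral via \cite[Lemma A.5]{SugiyamaTsuzuki}, and the ``in particular'' clause is the special case $\Gamma = m\Z\times\{0\}^{n-1}$ read in one coordinate. I would present the complex case and the real-positive case separately, flagging the latter as the crux.
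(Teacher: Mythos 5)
Your overall architecture (reduce to a single-variable bound uniform in $L$ and the index, then sum over the lattice as in Lemma \ref{convergence lem}) is the right one, and you correctly diagnose that the integration-by-parts estimate from Lemma \ref{convergence lem} is insufficient: it leaves a stray factor of $L$. But the proposal then stops short of a proof at exactly the decisive point. The bound you need for real $t>0$, namely
$\sqrt{t}\,L e^{-L^2 t}\,|I_{aL}(L^2 t)| \ll_t (1+a^2)^{-1}$
uniformly in $L\in\Z_{\ge 1}$ and $a\in\Z$, is only asserted via a heuristic appeal to Laplace's method at the endpoint $\theta=0$. Making that endpoint analysis uniform in the second parameter $a$ (which enters through the oscillation $\cos(aL\theta)$ and may be arbitrarily large relative to $L^2t$) is precisely the content of the uniform asymptotics of $I_\nu(z)$ in the transition and large-order regimes; it is not routine and you do not carry it out. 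The paper sidesteps all of this by quoting the explicit elementary inequality of \cite[Corollary 4.4]{CJK}, $\sqrt{s}\,e^{-s}I_{|x|}(s)\le(1+|x|/s)^{-|x|/2}$ for $s>0$, which packages the needed uniform bound in closed form; the lattice sum is then dominated by $\sum_{\gamma}\prod_j(1+|\gamma_j+x_j|/(L|t_j|))^{-|\gamma_j+x_j|L/2}$, whose boundedness in $L$ follows from the monotone convergence theorem and \eqref{convergence of series}. Without this inequality (or a fully worked uniform Laplace estimate replacing it), your argument has a genuine hole.

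A secondary issue is your reduction of the complex case. The paper does not compare $e^{L^2\Re t}$ with $e^{L^2|t|}$; it uses the termwise inequality $|I_x(t)|\le I_{|x|}(|t|)$ in the defining power series, which reduces every complex $t$ directly to the positive real case in one line. Your proposed dichotomy (super-exponential saving when $|t|>\Re t$, Laplace's method when $t>0$) is both more complicated and slightly off: the saving is governed by $|t|-|\Re t|$, which vanishes for all real $t$ including negative $t$, and the crude bound $|I_{aL}(L^2t)|\le e^{L^2|\Re t|}$ that produces the saving discards the decay in $a$ that you need to sum over the infinite lattice, so you would still have to interleave it with the $a^{-2}$ decay from integration by parts. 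That can be repaired, but the paper's route makes it unnecessary.
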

\begin{proof}
We may assume $t_j\neq0$ for all $j\in \{1,\ldots,n\}$.
For $z\in\C$ with $|\arg z|<\pi$ and $\nu \in \R$, the identity 
$
I_\nu(z)=\sum_{m\ge 0} \frac{(z/2)^{\nu+2m}}{m!\Gamma(m+\nu+1)}
$	
implies
$$
|I_x(t)| =|I_{|x|}(t)| \le \sum_{m\ge 0} \dfrac{(|t|/2)^{|x|+2m}}{m!\Gamma(m+|x|+1)}=I_{|x|}(|t|)
$$
for $t\in\C$ and $x\in\Z$.
By the inequality
$$
\sqrt{s}\cdot e^{-s} I_{|x|}(s) \le \left(1+\frac{|x|}{s}\right)^{-\frac{|x|}{2}}
$$
for $s> 0$ and $x \in \Z$ in \cite[Corollary 4.4]{CJK},
we have
$$
|\sqrt{t}\cdot e^{-|t|} I_{x}(t)| \le 
\sqrt{|t|}\cdot e^{-|t|} I_{|x|}(|t|)\le \left(1+\frac{|x|}{|t|}\right)^{-\frac{|x|}{2}}
$$
for $t\in\C-\{0\}$ and $x\in\Z$. 
Under the condition $\gamma_j ,x_j , L\in \Z$ ($j=1,\dots,n$),
we obtain
\begin{align*}
&
\left|\sum_{(\gamma_1,\dots,\gamma_n)\in \Gamma } \prod_{j=1}^n \sqrt{t_j}\cdot Le^{-L^2 |t_j|}   I_{(\gamma_j+x_j) L}(L^2 t_j)\right|
\\
\le &
\sum_{(\gamma_1,\dots,\gamma_n)\in \Gamma} \prod_{j=1}^n
\left(
1+\dfrac{|\gamma_j+x_j|}{L|t_j|}
\right)^{-\frac{|\gamma_j+x_j|L}{2}}.
\end{align*}
By the monotone convergence theorem,
the limit of the right-hand side above as $L\rightarrow \infty$ is estimated as
\begin{align*}& \lim_{L\rightarrow \infty}
\sum_{\gamma=(\gamma_1,\dots,\gamma_n)\in \Gamma} \prod_{j=1}^n
\left(
1+\dfrac{|\gamma_j+x_j|}{L|t_j|}
\right)^{-\frac{|\gamma_j+x_j|L}{2}} \\
= &
\sum_{\gamma\in \Gamma}
\lim_{L\rightarrow \infty}\prod_{j=1}^n
\left(
1+\dfrac{|\gamma_j+x_j|}{L|t_j|}
\right)^{-\frac{|\gamma_j+x_j|L}{2}} \\
= &
\sum_{\gamma\in \Gamma} e^{-\sum_{j=1}^{n}\frac{|x_j+\gamma_j|^2}{2|t_j|}}
\le
\sum_{\gamma\in \Gamma} e^{-\frac{\langle \gamma+x, \gamma+x  \rangle}{2T}}
\ll \sum_{\gamma \in \Gamma}\prod_{j=1}^{n}\frac{1}{(1+\frac{1}{T^{1/2}}|x_j+\gamma_j|)^2}
\\ 
= & \sum_{\gamma \in \Gamma}\prod_{j=1}^{n}\frac{T}{(T^{1/2}+|x_j+\gamma_j|)^2}
\le \sum_{\gamma \in \Gamma}\prod_{j=1}^{n}\frac{T}{(1+|x_j+\gamma_j|)^2}
<\infty,
\end{align*}
where we set $T:=\max_{j=1,\ldots,n}|t_j|+1>1$ and use \eqref{convergence of series}.
Consequently, the sum in the assertion is bounded for $L \in \Z_{\ge 1}$.
\end{proof}
%%%%%%%%%%%%%%%%%%%
We can prove a kind of twisted Poisson summation formula for non-continuous test functions.
\begin{theorem}[The twisted Poisson summation formula]
\label{thm:c}
Let $q$ be a positive integer 
and let $c: \Z^n \to \C$ be a function such that 
$$
c(a_1+q,\dots,a_n+q)=c(a_1,\dots,a_n)
$$
for all $a=(a_1,\dots,a_n)\in\Z^n$. 
We regard $c$ as a function on $(\Z/q\Z)^n$.
Let 
$$
\hat{c}(m) := \sum_{a\in (\Z/q\Z)^n } c(a) e^{2\pi \ii\langle \frac{a}{q}, m \rangle} \quad (m \in \Z^n) 
$$
be the discrete Fourier transform of $c$. 
Let $\Gamma = \Z^n A$ be a lattice in $\R^n$.
Let $f:\R^n \to \C$ be an $L^1$-function 
satisfying the assumptions
(i) and (ii) in Theorem \ref{PSF} for $\Gamma$.
Then we have 
$$
\sum_{\gamma\in\Gamma} c(\gamma A^{-1} )f(x+\tfrac{\gamma}{q})
=\dfrac{1}{|\det A|}
\sum_{\gamma^\ast \in \Gamma^\ast}
\hat{c} (\gamma^\ast \t A) \hat{f} (\gamma^\ast) e^{2\pi \ii\langle x, \gamma^\ast \rangle}, \qquad x \in \R^n.
$$
\end{theorem}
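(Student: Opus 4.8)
The plan is to reduce the twisted Poisson summation formula to the untwisted one (Theorem~\ref{PSF}) applied to a suitable new test function built from $f$ and the periodic weight $c$. Since $\Gamma = \Z^n A$ and $\Gamma^\ast = \Z^n\,{}^t\!A^{-1}$, the map $\gamma \mapsto \gamma A^{-1}$ identifies $\Gamma$ with $\Z^n$, and likewise $\gamma^\ast \mapsto \gamma^\ast\,{}^t\!A$ identifies $\Gamma^\ast$ with $\Z^n$; this will let me rewrite everything in terms of the standard lattice $\Z^n$. First I would expand $c$ on $(\Z/q\Z)^n$ by its own discrete Fourier inversion, $c(b) = \frac{1}{q^n}\sum_{m \in (\Z/q\Z)^n} \hat c(m)\, e^{-2\pi\ii\langle b, m/q\rangle}$ (taking care that the normalization matches the definition of $\hat c$ given in the statement), so that the left-hand side becomes a finite linear combination, over residues $m$, of twisted sums $\sum_{\gamma \in \Gamma} e^{-2\pi\ii\langle \gamma A^{-1}, m/q\rangle} f(x + \gamma/q)$.

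Next I would recognize each such twisted sum as an ordinary $\Gamma$-periodization: the character $\gamma \mapsto e^{-2\pi\ii\langle \gamma A^{-1}, m/q\rangle}$ on $\Gamma$ extends to the additive character $z \mapsto e^{-2\pi\ii\langle z A^{-1}, m/q\rangle} = e^{-2\pi\ii\langle z, {}^t\!A^{-1} m/q\rangle}$ on $\R^n$, which corresponds to a shift by the point ${}^t\!A^{-1} m/q \in \Gamma^\ast/q$ on the Fourier side. More precisely, after substituting $\gamma \mapsto q\gamma'$ is not available (the sum is genuinely over $\Gamma$, not $q\Gamma$), so instead I would set $g_m(z) := e^{-2\pi\ii\langle z, {}^t\!A^{-1} m/q\rangle} f(qz)$ — no, better: work directly with $F_m(x) := \sum_{\gamma \in \Gamma} e^{-2\pi\ii\langle \gamma, {}^t\!A^{-1}m/q\rangle} f(x + \gamma/q)$ and observe that this equals a $\Gamma$-periodization of $z \mapsto e^{\text{(linear)}} f(\text{affine in } z)$ after the dilation $z = x + w/q$; applying Theorem~\ref{PSF} to the lattice $\Gamma$ and the test function $h_m(w) := f(w/q)\,e^{-2\pi\ii\langle w, {}^t\!A^{-1}m/q\rangle}$ (whose Fourier transform is $\hat h_m(\xi) = q^n \hat f(q\xi + {}^t\!A^{-1}m/q)$ up to a scalar) yields a Fourier expansion over $\Gamma^\ast$. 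Collecting the $\gamma^\ast$-term over all residues $m$ and using $\sum_{m \in (\Z/q\Z)^n}$ of the shifted arguments to reassemble $\hat c$, the coefficient of $e^{2\pi\ii\langle x, \gamma^\ast\rangle}$ should collapse to $\frac{1}{|\det A|}\,\hat c(\gamma^\ast\,{}^t\!A)\,\hat f(\gamma^\ast)$, as claimed. Throughout, the hypotheses (i) and (ii) for $f$ transfer to each $h_m$ because multiplication by a unitary character and dilation by $q$ preserve $L^1$, absolute convergence, and continuity of the periodized series (each $F_m$ is a finite $\C$-linear combination of translates of $\sum_\gamma f(\cdot + \gamma)$-type series, hence continuous), so Theorem~\ref{PSF} applies legitimately.

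I expect the main obstacle to be purely bookkeeping: getting the dilation-by-$q$ Jacobian, the $q^n$ from Fourier inversion of $c$, and the normalizations of $\hat c$ and $\hat f$ to cancel correctly so that the final constant is exactly $1/|\det A|$ and the argument of $\hat c$ is exactly $\gamma^\ast\,{}^t\!A$ (an integer vector, since $\gamma^\ast \in \Z^n\,{}^t\!A^{-1}$ forces $\gamma^\ast\,{}^t\!A \in \Z^n$, so $\hat c$ is evaluated at a lattice point as required). A secondary point needing care is the interchange of the finite sum over residues $m$ with the infinite sum over $\Gamma^\ast$, but this is harmless because the $m$-sum is finite and each inner series is absolutely convergent by assumption (ii) applied to $f$ (the shift of the argument of $\hat f$ by ${}^t\!A^{-1}m/q$ does not affect absolute convergence since only finitely many shifts occur). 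Once the constants are pinned down, the identity follows.
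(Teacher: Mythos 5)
Your proposal is correct in substance, but it takes a route genuinely different from the paper's. You invert $c$ on $(\Z/q\Z)^n$, writing $c(k)=q^{-n}\sum_{m}\hat c(m)e^{-2\pi\ii\langle k,m/q\rangle}$, absorb each additive character into a modulated, dilated test function $h_m(w)=f(x+wA/q)e^{-2\pi\ii\langle w,m/q\rangle}$, apply Theorem \ref{PSF} once for each of the $q^n$ residues $m$, and reassemble: since $\hat h_m(l)=\frac{q^n}{|\det A|}e^{2\pi\ii\langle x,\gamma^*\rangle}\hat f(\gamma^*)$ with $\gamma^*=(ql+m)\,\t A^{-1}$, these dual points exhaust $\Gamma^*$ exactly once as $(l,m)$ varies while $\gamma^*\,\t A\equiv m\pmod q$, so the $q^n$ cancels against the $q^{-n}$ from inversion and the coefficient collapses to $\frac{1}{|\det A|}\hat c(\gamma^*\,\t A)\hat f(\gamma^*)$. (One bookkeeping slip to fix: the shift in the argument of $\hat f$ is $m\,\t A^{-1}$, not $\t A^{-1}m/q$ as you wrote; that extra factor of $q$ is exactly what makes the constants close up.) The paper never inverts $c$: it groups the sum over $\Z^n$ by residues mod $q$ to get $\sum_{a}c(a)F(x+a/q)$ with $F$ the $\Z^n$-periodization of $f$, substitutes the Fourier expansion of $F$ supplied by Theorem \ref{PSF}, reads off $\hat c(m)$ after exchanging two absolutely convergent sums, and passes to a general $\Gamma=\Z^nA$ by the single change of variables $f_{x,A}(y)=f(x+yA)$. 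The paper's version invokes Theorem \ref{PSF} only once and has no normalization constants to track; yours costs the (routine but necessary) verification of hypotheses (i) and (ii) for each $h_m$, which you handle correctly by noting that each twisted periodization $F_m$ is a finite linear combination of translates of the $\Gamma$-periodization of $f$ and that the shifted dual lattices partition $\Gamma^*$.
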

%%%%%%%%%%
\begin{proof}
We prove the assertion by generalizing the argument of Elkies' lecture note \cite{Elkies} for $n=1$.
First, we consider the case where $\Gamma=\Z^n$.  
For $x \in \R^n$, we set 
$F(x):=\sum_{k \in \Z^n} f(x+k)$. 
From Theorem \ref{PSF}, 
we have $F(x)=\sum_{m \in \Z^n} \hat{f}(m) e^{2 \pi \ii\langle x, m \rangle}$,
where we use $(\Z^{n})^* = \Z^{n}$. 
This implies
$$
\sum_{k \in \Z^n} c(k) f(x + \tfrac{k}{q}) 
= \sum_{a \in (\Z/q\Z)^n} c(a) F(x + \tfrac{a}{q})
= \sum_{m \in \Z^n} \hat{c}(m) \hat{f}(m) e^{2\pi \ii\langle x, m \rangle}. 
$$
In particular we have 
$\sum_{k\in\Z^n} c(k) f(\tfrac{k}{q})= \sum_{m \in \Z^n} \hat{c}(m) \hat{f}(m)$.

Next we consider the case where $\Gamma=\Z^n A$. 
For a fixed $x \in \R^n$, we set 
$f_{x,A}(y) := f(x+yA)$, ($y \in \R^n$). 
The function $f_{x,A}$ is an $L^1$-function satisfying the assumption (i) in Theorem \ref{PSF} for $\Z^n$.
The Fourier transform of $\hat{f}_{x,A}$ is computed as
\begin{align*}
& \hat{f}_{x,A}(m)
= \int_{\R^n} f(x+ yA) e^{-2 \pi \ii\langle y, m \rangle} \, dy 
= \dfrac{1}{|\det A|} \int_{\R^n} f(z) e^{-2 \pi \ii\langle (z-x) A^{-1}, m \rangle} \, dz 
\\
&= \dfrac{1}{|\det A|} \int_{\R^n} f(z) e^{-2 \pi \ii\langle z-x, m \t A^{-1} \rangle} \, dz 
= \dfrac{1}{|\det A|} \hat{f}(m \t A^{-1}) e^{2 \pi \ii\langle x, m \t A^{-1} \rangle}. 
\end{align*}
By this formula, $f_{x,A}$ satisfies (ii) in Theorem \ref{PSF} for $\Z^n$.
Taking $f_{x,A}$ instead of $f$ in the identity of the first case, 
we have $\sum_{k\in\Z^n} c(k) f_{x,A}(\tfrac{k}{q})
=\sum_{m\in\Z^n}\hat{c}(m)\hat{f}_{x,A}(m)$. 
Hence, we obtain 
$$
\sum_{k \in \Z^n} c(k) f(x+ \tfrac{1}{q} kA) 
= \dfrac{1}{|\det A|} \sum_{m \in \Z^n} \hat{c}(m) 
\hat{f}(m \t A^{-1}) e^{2\pi \ii\langle x,  m (\t A)^{-1} \rangle}.
$$
This completes the proof.
\end{proof}

Take an integer $q$ such that $q\ge2$. 
Let $\chi:(\Z/q\Z)^\times\to\C^\times$ be a Dirichlet character modulo $q$. 
The function $\chi$ is extended to a function $\tilde{\chi}$ on $\Z$ by 
$$
\tilde{\chi}(a)=\left\{
\begin{array}{ll}
	\chi(a \ \mod\, q) & \textrm{(if $a\in \ZZ$ and $\gcd(a,q)= 1$),}\\
	 0 & \textrm{(if $a\in \ZZ$ and $\gcd(a,q)\neq 1$).}
	\end{array}
\right.
$$
By abuse of notation, we write $\chi$ for $\tilde{\chi}$.  
Let $\1_q$ be the principal Dirichlet character modulo $q$. 
If $q=1$ then a Dirichlet character $\chi$ modulo $q$ means $\chi(a)=1$ for any $a\in\Z$. 

For a Dirichlet character $\psi$ modulo $q$, define the Gauss sum of $\psi$ by
$$
\mathcal{G}(\psi):= \sum_{a=0}^{q-1} \psi(a) e^{2\pi \ii \tfrac{a}{q}}.
$$
For $q=1$, set $\mathcal{G}(\psi):=1$.
It is well known that
$\mathcal{G}(\bar \psi)=\psi(-1)\overline{\mathcal{G}(\psi)}$.
If $\psi$ is primitive, then we have $|\mathcal{G}(\psi)|^2=\mathcal{G}(\psi)\overline{\mathcal{G}(\psi)}=\sqrt{q}$ (see \cite{Elkies}).
%%%%%%%%%%%
\begin{lemma}\label{lem:gauss-sum}
For $j=1,\dots,n$, let $\chi_j$ be a Dirichlet character modulo $q$.
Let $\chi:\Z^n\to\C$ be a function defined by 
\begin{align*}
\chi(a):=\prod_{j=1}^n \chi_j(a_j)
\end{align*}
for $a=(a_1,\dots, a_n)\in \Z^n$. 
Then 
\begin{align}\label{Gauss}
\hat{\chi}(m)= \left\{\prod_{j=1}^n\mathcal{G}(\chi_j)\right\} \bar{\chi}(m)
\end{align}
holds for all $m\in ((\Z/q\Z)^\times)^n$, where $\bar{\chi}$ is the complex conjugate of $\chi$. 
If all $\chi_j$ are primitive, then \eqref{Gauss} holds for all $m\in\Z^n$. 
\end{lemma}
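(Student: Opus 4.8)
The plan is to reduce the claim about $\hat{\chi}$ to the classical fact relating the discrete Fourier transform of a Dirichlet character to its Gauss sum, and then to exploit the product structure of $\chi$ to split everything into $n$ one-dimensional factors. First I would write out the definition of the discrete Fourier transform from Theorem \ref{thm:c}: for $m=(m_1,\dots,m_n)\in\Z^n$,
\begin{align*}
\hat{\chi}(m)
=\sum_{a\in(\Z/q\Z)^n}\chi(a)\,e^{2\pi\ii\langle a/q,\,m\rangle}
=\sum_{a_1=0}^{q-1}\cdots\sum_{a_n=0}^{q-1}\prod_{j=1}^n\chi_j(a_j)\,e^{2\pi\ii a_jm_j/q}
=\prod_{j=1}^n\left(\sum_{a=0}^{q-1}\chi_j(a)\,e^{2\pi\ii am_j/q}\right),
\end{align*}
so the problem is entirely reduced to computing the one-variable twisted Gauss sum $S_j(m_j):=\sum_{a=0}^{q-1}\chi_j(a)e^{2\pi\ii am_j/q}$ for each $j$.

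Next I would handle $S_j(m_j)$ in the two regimes the lemma distinguishes. If $\gcd(m_j,q)=1$, then $a\mapsto am_j$ is a permutation of $\Z/q\Z$; substituting $b=am_j$ (so $a=b m_j^{-1}$) gives $S_j(m_j)=\sum_{b=0}^{q-1}\chi_j(bm_j^{-1})e^{2\pi\ii b/q}=\chi_j(m_j)^{-1}\sum_{b}\chi_j(b)e^{2\pi\ii b/q}=\bar\chi_j(m_j)\,\mathcal{G}(\chi_j)$, using $\chi_j(m_j)^{-1}=\overline{\chi_j(m_j)}=\bar\chi_j(m_j)$ since $|\chi_j(m_j)|=1$. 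Taking the product over $j$ yields \eqref{Gauss} whenever every $m_j$ is a unit mod $q$, i.e.\ for all $m\in((\Z/q\Z)^\times)^n$. For the case where $\chi_j$ is primitive but $m_j$ is not a unit, I would invoke the standard separability property of primitive characters: for primitive $\chi_j$, $S_j(m)=\bar\chi_j(m)\mathcal{G}(\chi_j)$ holds for \emph{all} $m\in\Z$, because when $\gcd(m,q)=d>1$ both sides vanish — the left side by the usual argument that $\sum_{a\bmod q}\chi_j(a)e^{2\pi\ii am/q}=0$ when $\chi_j$ does not factor through $\Z/(q/d)\Z$, and the right side because $\bar\chi_j(m)=0$ as $m$ is not a unit. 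If all $\chi_j$ are primitive, multiplying these identities over $j$ gives \eqref{Gauss} for every $m\in\Z^n$.

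The only mildly delicate point is the vanishing of $S_j(m)$ for a primitive character $\chi_j$ when $\gcd(m,q)>1$; this is precisely the statement that a primitive character is not "induced" from any proper divisor of $q$, and it is exactly the ingredient Elkies' note \cite{Elkies} records (together with $\mathcal{G}(\bar\psi)=\psi(-1)\overline{\mathcal{G}(\psi)}$ and $|\mathcal{G}(\psi)|^2=q$, which are already quoted in the excerpt). So I would cite that reference for the one-variable separability and keep the write-up short: state the factorization into $S_j(m_j)$, dispatch the unit case by the permutation substitution, and dispatch the non-unit case under primitivity by the cited separability, observing that the right-hand side vanishes simultaneously. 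No real obstacle is expected — the work is bookkeeping of the product decomposition plus one invocation of a classical fact.
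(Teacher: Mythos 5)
Your proposal is correct and follows essentially the same route as the paper: factor the $n$-dimensional discrete Fourier transform into a product of one-variable twisted Gauss sums and then apply the classical identity $\sum_{a}\chi_j(a)e^{2\pi\ii am/q}=\bar{\chi}_j(m)\mathcal{G}(\chi_j)$ factor by factor. The only difference is that the paper simply cites the one-dimensional case as well known (referring to \cite{Elkies}), whereas you spell out the permutation substitution for unit $m_j$ and the separability/vanishing argument for primitive $\chi_j$ with $\gcd(m_j,q)>1$.
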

\begin{proof}
For $n=1$, the identity \eqref{Gauss} is well-known (see \cite{Elkies}).
By the definition of $\chi$, we obtain 
$$
\hat{\chi}(m)
=\prod_{j=1}^n \hat{\chi_j}(m_j)
=\prod_{j=1}^n\mathcal{G}(\chi_j) \bar{\chi}(m_j)
=\left\{\prod_{j=1}^n\mathcal{G}(\chi_j)\right\} \bar{\chi}(m).
$$
This completes the proof.
\end{proof}
%%%%%%%%%%%%
As a variant \eqref{chi-I-E-infty}  of the transformation formula of the Riemann theta function with a character (\cite[Chapter II]{Mumford}), we obtain Theorem \ref{thm:I-E:char} in the following way.
\begin{proof}[Proof of Theorem \ref{thm:I-E:char}]
We prove the assertion by replacing $\Gamma$ with $\Gamma^*$ as in Theorem  \ref{thm:I-E}.
Assume that $\Gamma=\Z^n\, \t A^{-1}$ is a lattice such that $\Gamma^*=\Z^n A\subset \ZZ^n$ and that every entry of $A$ is divisible by $q$.
Fix any $x \in \ZZ^n$, $y \in \R^n$ and a family $\chi=(\chi_1,\ldots,\chi_n)$ of Dirichlet characters modulo $q$.
We set $f(z):=\prod_{j=1}^n\1_{[-1/2,1/2]}(z_j)e^{t_j\cos 2\pi z_j}$.
By applying Theorem \ref{thm:c} to the function $f(z)e^{-2\pi \ii\langle x, z\rangle}$, the character $\bar\chi$
and the lattice $q\Gamma$,
we can prove the identity
$$
\sum_{\gamma\in q\Gamma} \bar\chi(\tfrac{\gamma}{q} \t A ){f}(y+\tfrac{\gamma}{q})e^{-2\pi \ii\langle x,y+\frac{\gamma}{q} \rangle}
=\dfrac{1}{|\det q\,\t A^{-1}|}
\sum_{\gamma^\ast \in (q\Gamma)^\ast}
\hat{\bar\chi} (\gamma^\ast q A^{-1}) \hat f(\gamma^\ast+x) e^{2\pi \ii\langle y, \gamma^\ast \rangle}
$$
(cf.\ \eqref{phi-f=phi-f} when $\chi$ is trivial). By $q\Gamma = \Z^n q\,\t A^{-1}$, $(q\Gamma)^\ast = \Z^n \frac{1}{q}A$ and Lemma \ref{lem:gauss-sum}, this can be written as
$$
\sum_{m\in\Z^n} \bar\chi(-m)f(y-m\,\t A^{-1})e^{-2\pi \ii\langle x, y-m\,\t A^{-1}\rangle}
=\dfrac{\prod_{j=1}^n\mathcal{G}(\bar \chi_j)}{q^n|\det\t A^{-1}|}
\sum_{k \in \Z^n}
\chi(k) \hat{f}(x+\tfrac{k A}{q}) e^{2\pi \ii\langle y,  \frac{k A}{q} \rangle}.
$$
Since every entry of $A$ is divisible by $q$,
every entry of $\frac{k A}{q}$ is contained in $\Z$.
Furthermore, the series in the right-hand side is bounded by $\sum_{\gamma \in \ZZ^n}|\hat{f}(\gamma)|$ which is convergent by Lemma \ref{convergence lem}.
Therefore we have 
{\allowdisplaybreaks \begin{align*}
&\dfrac{e^{2\pi\ii\langle x, y\rangle}\prod_{j=1}^{n}\chi_j(-1)\mathcal{G}(\chi_j)}{|\det A|}
\times \sum_{m \in \Z^n}\bar{\chi}(-m) e^{-2\pi \ii\langle x, y-m\t A^{-1} \rangle} \\
& \times \bigg\{\prod_{j=1}^{n}\1_{[-1/2,1/2]}(y_j- (m \t A^{-1})_j)\exp(t_j\cos 2\pi(y_j- (m \t A^{-1})_j))\bigg\}
\notag \\
=& \dfrac{e^{2\pi\ii\langle y,x\rangle}\prod_{j=1}^n\chi_j(-1)\mathcal{G}(\chi_j)}{|\det A|}\times \dfrac{\prod_{j=1}^n\mathcal{G}(\bar \chi_j)}{q^n|\det\t A^{-1}|}\sum_{k\in\Z^n} \chi(k)
\bigg\{\prod_{j=1}^n I_{x_j+\tfrac{(k A)_j}{q}}(t_j)\bigg\}e^{2\pi \ii\langle y, \tfrac{k A}{q}\rangle},
\notag
\end{align*}
}where $(kA)_j$ and $(m\t A^{-1})_j$ are the $j$th entries of $kA$ 
and of $m\t A^{-1}$, respectively.
As we see $\chi_j(-1)\mathcal{G}(\bar\chi_j)\mathcal{G}(\chi_j)=|\mathcal{G}(\chi_j)|^2=q$,
we obtain \eqref{chi-I-E}.
\end{proof}
\begin{proof}[Proof of Corollary \ref{thm:renormalization}]
Now we apply Theorem \ref{thm:I-E:char} by substituting $L\Gamma$, $Lx$ and $y/L$ for $\Gamma$, $x$ and $y$, respectively.  
Noting $(L\Gamma)^\ast=\frac{1}{L}\Gamma^\ast$,
we have 
\begin{align*}
	&\sum_{k\in\Z^n} \chi(k)
\bigg\{\prod_{j=1}^n I_{\left(x_j+\tfrac{(kA)_j}{q}\right)L}(t_j)\bigg\}e^{2\pi \ii\langle \tfrac{y}{L},\big(x+\tfrac{kA}{q}\big)L\rangle}\\
= &
	\dfrac{\prod_{j=1}^n\mathcal{G}(\chi_j)}{L^n|\det A|}
	\sum_{m \in \Z^n}\bar{\chi}(m) \times e^{2\pi \ii\langle Lx, \frac{m\t A^{-1}}{L} \rangle}\notag\\
& \times \bigg\{	\prod_{j=1}^{n}\1_{[-1/2,1/2]}\left( \frac{y_j-(m\t A^{-1})_j}{L}\right)\exp\left({t_j\cos \frac{2\pi\left(y_j-  (m\t A^{-1})_j\right)}{L}}\right)\bigg\}.
\end{align*}
By multiplying the both sides with $\prod_{j=1}^n \sqrt{t_j}e^{-t_j}$ 
and replacing $t_j$ with $L^2t_j$ for any $L\in \ZZ_{\ge 1}$, we have the equation
{\allowdisplaybreaks \begin{align}\label{chi-I-E-L}
	&\sum_{k \in \Z^n} \chi(k)
	\bigg\{\prod_{j=1}^n L\sqrt{t_j}e^{-L^2t_j} I_{\left(x_j+\tfrac{(kA)_j}{q}\right)L}(L^2t_j)\bigg\}
	e^{2\pi \ii\langle \tfrac{y}{L},\big(x+\tfrac{kA}{q}\big)L\rangle}
	\\
	=&
	\dfrac{\prod_{j=1}^n\mathcal{G}(\chi_j)}{|\det A|}
	\sum_{m \in \Z^n}
	\bar{\chi} (m) \times e^{2\pi \ii\langle Lx,\frac{m\,\t A^{-1}}{L} \rangle} \notag\\
	&\times
	\bigg\{\prod_{j=1}^{n}\sqrt{t_j}\,e^{-L^2t_j}\1_{[-1/2,1/2]}\left(\frac{y_j- (m\,\t A^{-1})_{j}}{L}\right)\exp\left(L^2t_j\cos \frac{2\pi(y_j-(m\, \t A^{-1})_{j})}{L}\right)
	\bigg\}.
	\notag
\end{align}
}Suppose $t_1,\dots,t_n\in \R_{> 0}$.
By taking the limit as $L\to\infty$
and applying the Lebesgue dominated convergence theorem
with the aid of Lemma \ref{lem:L-sum},
we have the following identity of theta series with characters: 
\begin{align*}
	&\frac{1}{(2\pi)^{n/2}}\sum_{k\in\Z^n} \chi(k)e^{-\frac{1}{2}\sum_{j=1}^{n}\frac{\left(x_j+\tfrac{(kA)_j}{q}\right)^2}{t_j}}e^{2\pi\ii\langle y, x+\tfrac{kA}{q}\rangle}
	\\
	&=
	\dfrac{\prod_{j=1}^n\mathcal{G}(\chi_j)}{|\det A|}
	\sum_{m\in \Z^n}\bar{\chi}(m) 
	\{\prod_{j=1}^{n}\sqrt{t_j} \}\exp(-2\pi^2 \sum_{j=1}^n(y_j-(m\,\t A^{-1})_j)^2 t_j+2\pi \ii\langle x,m\,\t A^{-1} \rangle),
	\notag
\end{align*}
where we use \eqref{limit-I}.
Thus we obtain \eqref{chi-I-E-infty}.
\end{proof}

	We derived \eqref{chi-I-E-infty} from \eqref{chi-I-E} by taking the limit $L \rightarrow \infty$.
We call the manipulation
the \textit{continuum limit} of \eqref{chi-I-E}.

\begin{remark}
As for the equality \eqref{chi-I-E-infty}, we consider the case of $q=1$.
Set
$\Omega=\,\t A^{-1}\diag(2\pi\ii t_1, \ldots, 2\pi\ii t_n)A^{-1}.$
Then the equality \eqref{chi-I-E-infty} is written as
\begin{align*}
		& \sum_{m \in \Z^n}\exp\left(\pi\ii\,(xA^{-1}+m)(-\Omega^{-1})\, {}^{t}(xA^{-1}+m)+2\pi \ii\, y\,\t A\, {}^{t}m \right)\exp(2\pi\ii y\,{}^{t}x)\\
		= & \det(\Omega/\ii)^{1/2}
		\sum_{m \in \Z^n}\exp\left(\pi\ii \,(y\,\t A+m)\Omega\,{}^{t}(y\,\t A+m) -2\pi\ii xA^{-1}\,{}^{t}m\right).
\end{align*}
Here we note that our vectors are row vectors.
This equation above also follows from the functional equation of the Riemann theta function
$\vartheta\left[\begin{matrix}n_1 \\ n_2
\end{matrix}\right](z,\Omega)$ with characteristics for the action of
$[\begin{smallmatrix}
	0_n & -1_n\\
	1_n & 0_n
\end{smallmatrix}]$, where $1_n$ is the $n$-by-$n$ unit matrix, $0_n$ is the $n$-by-$n$ zero matrix, $n_1=y\,\t A$ and $n_2=-xA^{-1}${\rm:}
	$$\vartheta\left[\begin{matrix}-n_2 \\ n_1
	\end{matrix}\right](0, -\Omega^{-1})=\det(\Omega/\ii)^{1/2}
\exp(-2\pi\ii n_1\,{}^{t}n_2)\vartheta\left[\begin{matrix}
	n_1\\ n_2
\end{matrix}\right](0,\Omega)$$
{\rm(}see \cite[(5.2)]{Mumford} for $[\begin{smallmatrix}
	A & B\\
	C & D
\end{smallmatrix}]=[\begin{smallmatrix}
0_n & -1_n\\
1_n & 0_n
\end{smallmatrix}]${\rm)}.
Here the vectors $n_1$ and $n_2$ are elements of $\R^n$ in contrast to \cite{Mumford}.
Furthermore, \eqref{chi-I-E-infty} for $q=1$ is equivalent to \cite[(5.2)]{Mumford}
for $[\begin{smallmatrix}
	A & B\\
	C & D
\end{smallmatrix}]=[\begin{smallmatrix}
0_n & -1_n\\
1_n & 0_n
\end{smallmatrix}]$
with the aid of the identity theorem for functions of several complex variables.
\end{remark}

\begin{remark}\label{trace formula for torus}
Note that the transformation formula such as \eqref{chi-I-E-infty} of the theta series of a lattice $\Gamma$  
can be seen as 
the trace formula of the Laplacian of the real torus $\R^n/\Gamma$
{\rm(}e.g., \cite{McKean, McKeanC, Sunada1980, Sunada1981}{\rm)}.
Similarly, by considering Remark \ref{rem:partial trace}, the right-hand side of \eqref{chi-I-E-L} for $y=0$ and $q=1$
is the sum of eigenfunctions of the Laplacian $\Delta'$ corresponding to 
the subset of eigenvalues given by
$$
\left\{
L^2\sum_{j=1}^n \left(\cos 2\pi  \frac{\gamma_j^\ast}{L}-1\right)
\, \bigg| \,
\frac{\gamma^\ast}{L}
=\left(\frac{\gamma_1^\ast}{L}, \dots, \frac{\gamma_n^\ast}{L}\right) \in \frac{1}{L} \Gamma^\ast \cap [-1/2,1/2]^n
\right\}.
$$
Here $\Delta'$ 
is defined by \eqref{laplacian'}.
Set
$$a_{L,\gamma^*} := \begin{cases}L^2\sum_{j=1}^n \left(\cos 2\pi  \frac{\gamma_j^\ast}{L}-1\right) & (\gamma^* \in \Gamma^* \cap [-L/2, L/2]^n),\\
0 & (\gamma^* \in \Gamma^*-[-L/2, L/2]^n).
\end{cases}$$
Then we have $\lim_{L\rightarrow \infty} a_{L, \gamma^*}=-2\pi^2\|\gamma^*\|$,
and the multiset
$$\{ -2\pi^2 \| \gamma^\ast \|^2  \mid \gamma^\ast \in \Gamma^\ast \}
$$
consists of all eigenvalues of $\frac{1}{2}\Delta_{\R^n/\Gamma}$, where $\Delta_{\R^n/\Gamma}$ is the Laplacian of $\R^n/\Gamma$ {\rm(}e.g., \cite[Proposition 3.14]{Sakai}{\rm)}.
\end{remark}
%%%%%%%%%%%%
We present several examples of Theorem \ref{thm:I-E:char} and Corollary \ref{thm:renormalization}.
%%%%%%%%%%%%%
\begin{example}\label{ex:eta}
First, we give an identity of theta series with characters. 
Let $x$ be an integer and let $a$ and $q$ be positive integers with $q\ge 2$. 
Let $\chi : (\Z/q\Z)^\times \to \C^\times$ be a primitive Dirichlet character modulo $q$. Suppose that $q | a$. 
Applying Theorem \ref{thm:I-E:char} to the case where $n=1$, $\Gamma= a \Z$, $\Gamma^\ast= \frac{1}{a} \Z$ and $y=0$, we obtain 
\begin{align}\label{ex:1dim:chi}
\sum_{k \in \Z} \chi(k) I_{x+ a \tfrac{k}{q}}(t) =  \dfrac{\mathcal{G}(\chi)}{a} \sideset{}{'} \sum_{j=-[a/2]}^{[a/2]} 
\bar{\chi}(j) \exp(t \cos \tfrac{2\pi j}{a}) e^{2\pi \ii x \tfrac{j}{a}}.
\end{align}
Let $b$ and $L$ be positive integers. 
Take $\ell \in \frac{1}{2} \Z_{\geq 1}$ and $\alpha>0$, and 
suppose $q | 2 \ell$. 
When $x= bL$, $a= 2\ell L$ and $t= \dfrac{2L^2z}{\alpha^2}$ in \eqref{ex:1dim:chi},
we have 
\begin{align}\label{chi-I}
\sum_{k \in \Z} \chi(k)I_{\left(2\ell \tfrac{k}{q}+b\right) L} \left( \tfrac{2L^2z}{\alpha^2} \right) 
= \dfrac{\mathcal{G}(\chi)}{2\ell L} \sideset{}{'} \sum_{j=-\ell L}^{\ell L} 
\bar{\chi}(j) \exp 
\left(
\tfrac{2L^2z}{\alpha^2} \cos \tfrac{\pi j }{\ell L}
\right)
e^{\pi \ii b \tfrac{j}{\ell}}. 
\end{align}
By multiplying the both sides by $L e^{-\frac{2L^2z}{\alpha^{2}}}$
and taking the limit as $L\to\infty$, with the aid of \eqref{limit-I} 
and Lemma \ref{lem:L-sum}, 
we have the following identity of theta series with characters{\rm :}
\begin{align}\label{chi-theta}
\dfrac{\alpha}{\sqrt{4 \pi z}} \sum_{k \in \Z} \chi(k) 
e^{-\tfrac{\alpha^2 (2 \ell k/q+ b)^2}{4z}} 
= \dfrac{\mathcal{G}(\chi)}{2 \ell}
\sum_{j \in \Z} \bar{\chi}(j) e^{-\tfrac{\pi^2 j^2 z}{\alpha^2 \ell^2}} 
e^{\pi \ii b \frac{j}{\ell}}. 
\end{align}
Here we use 
\begin{align*} 
e^{-\frac{2L^2z}{\alpha^{2}}} \exp \left( \tfrac{2L^2z}{\alpha^2} \cos \tfrac{\pi j }{\ell L} \right) 
&= \exp \left( \tfrac{2L^2z}{\alpha^2} \left( \cos \tfrac{\pi j }{\ell L}- 1 \right) \right) \\
&= \exp \left( \tfrac{-4 L^2z}{\alpha^2} \sin^{2} \tfrac{\pi j }{2\ell L} \right) 
\to e^{-\tfrac{\pi^2 j^2 z}{\alpha^2 \ell^2}}, \qquad (L\rightarrow \infty). 
\end{align*}
This manipulation from \eqref{ex:1dim:chi} to \eqref{chi-theta} is a continuum limit.

Next we give a transformation identity of the Dedekind eta function. 
Let $(\tfrac{12}{\cdot})$ be the unique primitive Dirichlet character modulo $12$. 
By specializing our case to $\chi(k)= (\tfrac{12}{k})$, $b=0$, $\alpha=1$, $q=12$, $\ell=6$ and $z=3t$ in \eqref{chi-theta}, 
we have 
\begin{align*}
\dfrac{1}{\sqrt{12 \pi t}} \sum_{k \in \Z} \left( \frac{12}{k} \right) e^{-\frac{k^2}{12t}} 
= \dfrac{\mathcal{G}((\tfrac{12}{\cdot}))}{12} 
\sum_{j \in \Z} \left( \frac{12}{j} \right) e^{-\frac{\pi^2 j^2t}{12}}. 
\end{align*}
By using $\mathcal{G}((\tfrac{12}{\cdot}))=\sqrt{12}$
and setting $t=\frac{\tau}{\ii\pi}$ in the identity above, 
we obtain 
$$
\sqrt{\frac{i}{\tau}} \sum_{k \in \Z} \left( \frac{12}{k} \right) 
e^{2\pi \ii (\tfrac{-1}{\tau}) \tfrac{k^2}{24}} 
= 
\sum_{j \in \Z} \left( \frac{12}{j} \right) e^{2 \pi \ii \tau \tfrac{j^2}{24}}. 
$$
As the Dedekind eta function $\eta$ can be written as 
$\eta(\tau)= \frac{1}{2} \sum_{n \in \Z} \left( \frac{12}{n} \right) 
e^{2 \pi \ii \tau \frac{n^2}{24}}$, 
finally we obtain the transformation formula of $\eta(\tau)${\rm:}
$$
\sqrt{\frac{\ii}{\tau}} \eta(-1/\tau)= \eta(\tau).
$$
\end{example}
%%%%%%%%%%%%%%%%%%%
\begin{example}
We see certain transformation formulas of theta  functions 
which can be obtained by \eqref{I-E} and \eqref{limit-I}. 
Take $L \in \ZZ_{\ge 1}$.  
By Theorem \ref{thm:I-E:char} for $x=L$, $a=2L$ and $q=1$, we have 
$$
\sum_{k \in \Z}  I_{(2k+1) L}(t)
=
\dfrac{1}{2L}
\sideset{}{'}\sum_{j=-L}^{L}
(-1)^j
\exp\left(t\cos \frac{\pi j}{L}\right).
$$
By replacing $t$ with $2L^2t$, multiplying the both sides by $Le^{-2L^2t}$
and taking the limit as $L\to\infty$, we have
\begin{align}\label{theta-identity}
\dfrac{1}{\sqrt{4\pi t}}\sum_{k\in\Z}e^{-\frac{(2k+1)^2}{4t}}
=\dfrac{1}{2}\sum_{j\in\Z}(-1)^j e^{-\pi^2 j^2 t}
\end{align}
with the aid of \eqref{limit-I}.
Indeed, the right-hand side is deduced from
\begin{align*}
&\dfrac{1}{2}
\sideset{}{'}\sum_{j=-L}^{L}
(-1)^j
\exp\left(2L^2t\left(\cos \frac{\pi j}{L}-1\right)\right)
=
\dfrac{1}{2}
\sideset{}{'}\sum_{j=-L}^{L}
(-1)^j
\exp\left(-4L^2t\cdot \sin^2 \frac{\pi j}{2L}\right)
\\
&=
\dfrac{1}{2}
\sum_{j\in \Z}
\1_{[-L,L]}(j)(-1)^j
\exp\left(-\pi^2 j^2t\left(\frac{\sin \frac{\pi j}{2L}}{\frac{\pi j}{2L}}\right)^2\right)
\to
\dfrac{1}{2}\sum_{j\in\Z}(-1)^j e^{-\pi^2 j^2 t}
\end{align*}
as $L\rightarrow \infty$.
The identity \eqref{theta-identity} is a well-known relation of the theta functions explained as follows.  
Following the notation of \cite[\S 76]{Rademacher}, we set
$$
\vartheta_2(v|\tau):=\sum_{n\in\Z}e^{\pi \ii \tau (n+1/2)^2}e^{(2n+1)\pi \ii v},
\quad 
\vartheta_4(v|\tau):=\sum_{n\in\Z}(-1)^n e^{\pi \ii \tau n^2}e^{2\pi \ii nv}
$$
for any $v,\tau \in \C$ such that $\Im\,\tau>0$. 
By substituting $t=\frac{\tau}{\ii\pi}$, the identity \eqref{theta-identity} is rewritten as the following {\rm (}cf.\ \cite[(79.9)]{Rademacher}{\rm ):}
$$
\dfrac{1}{2}\sqrt{\dfrac{\ii}{\tau}}\,\vartheta_2(0|-\tfrac{1}{\tau})
=
\dfrac{1}{2}\,\vartheta_4(0|\tau).
$$
This is a specialization of Jacobi's imaginary transformation 
$$\vartheta_2(\tfrac{v}{\tau}|-\tfrac{1}{\tau})
=
e^{-\pi \ii /4}\sqrt{\tau}e^{\pi \ii v^2/\tau}\vartheta_4(v|\tau).
$$
\end{example}

%%%%%%%%%%%%%%%%%%%%%%%%%%%%%%%%%%%%%%%%%%%%%%%%%%%%%%%%%%%%%%%%%%
\section{Lattice sums of $I$-Bessel functions coming from linear codes}
\label{Lattice sums of I-Bessel functions coming from linear codes}
%%%%%%%%%%%%%%%%%%%%%%%%%%%%%%%%%%%%%%%%%%%%%%%%%%%%%%%%%%%%%%%%%%

In this section, we express $I$-Bessel lattice sums 
where the lattice involved is defined by a linear code over a finite commutative ring.
For general references of theory of linear codes over rings,
see \cite{NRS,Dougherty}. 
Let $m$ be an integer with $m \ge 2$. 
Now consider the quotient ring $R:= \Z/m\Z$. 
For $N \in \Z_{\ge 1}$, let $\rho_N: \Z^N \to R^N$ be the group homomorphism 
defined by the reduction modulo $m$, that is, $\rho_N((a_j)_{j=1}^{N})=(a_j \pmod m)_{j=1}^{N}$.
We omit the subscript $N$ of $\rho_N$ if there is no confusion. 
Let $n$ be a positive integer.
A {\it linear code $C$ over $R$ of length $n$} is an $R$-submodule of $R^n$. 
The dual code $C^\bot$ is defined by 
$$
C^\bot := \{ x \in R^n \mid \mbox{ $x \cdot y= 0$ for all $y\in C$} \}, 
$$
where $x \cdot y:= \sum_{i=1}^n x_i y_i$ for $x, y \in C$. 
We remark $\#R^n=\#C \times \#C^{\bot}$ (cf.\ \cite[Proposition 7.6 and Theorem 7.7]{Wood}).
For a subset $B$ of $R^n$, the {\it complete weight enumerator}
of $B$ is a polynomial of $X_0, \dots, X_{m-1}$ defined by 
$$
\cwe_B (X_0,\dots, X_{m-1}):=\sum_{c \in B}\prod_{j=0}^{m-1} X_j^{n_j(c)} \in \Z[X_0, X_1, \ldots, X_{m-1}],
$$ 
where $n_j(c)$ is the number of components of $c$ that are congruent to $j$ (mod $m$). 
This polynomial $\cwe_B (X_0,\dots, X_{m-1})$ is homogeneous, 
that is, 
$$
\cwe_B (t X_0, \dots, t X_{m-1})= t^{n} \cdot \cwe_B (X_0, \dots, X_{m-1}). 
$$
If $X= X_0$ and $Y= X_1=\cdots =X_{m-1}$, 
then $W_B(X, Y):=\cwe_B(X,Y, \ldots,Y)=$ is called the {\it weight enumerator} of $B$.
The weight enumerator $W_B(X,Y)$ is expressed as
$$
W_B(X,Y)=\sum_{c \in B}X^{n-\wt(c)}Y^{\wt(c)},
$$
where $\wt(c):= \# \{i \in \{1,\ldots, n\} \mid c_i \not \equiv 0 \pmod m \}$ is 
called the {\it Hamming weight} of $c$.
If $m$ is a prime number, then $R=\Z/m\Z$ is a field and hence
$C$ is a free $R$-module. 
In this case, we can define the dimension of $C$.
An {\it $[n,k]$-linear code} means a linear code whose length and dimension 
are $n$ and $k$, respectively. 

Since the kernel of $\Z^{n} \to R^{n} \to R^{n}/C$ is $\rho^{-1}(C)$, 
we obtain a group isomorphism $\Z^n/\rho^{-1}(C) \cong R^n/C$. 
Hence, $\rho^{-1}(C)$ is a free $\Z$-submodule with finite index $\# R^n/\# C$ in $\Z^n$. 
Note $\rank_\Z \rho^{-1}(C)=n$ and $\vol(\R^n/\rho^{-1}(C))= \# R^n/\# C$.
By Theorem \ref{thm:I-E}, we have
\begin{align}\label{Identity from Possion}
& \sum_{\gamma \in x+ \rho^{-1}(C)} \prod_{j=1}^n I_{\gamma_j}(t_j) \\
= &
\frac{\#C}{\#R^n}
 \sum_{\gamma^\ast \in (\rho^{-1}(C))^\ast}
\left\{ \prod_{j=1}^{n} \1_{[-1/2,1/2]}(\gamma_j^*)
\exp(t_j \cos 2 \pi \gamma_j^\ast)\right\} e^{2 \pi \ii\langle x, \gamma^\ast \rangle} \notag
\end{align}
for $x \in \ZZ^n$.
%%%%%%%%%%%%%%%%%%%%
\begin{remark}
	An $I$-Bessel lattice sum defined by a system of some linear congruence conditions can be written  as a sum over a lattice 
	related to a linear code.
	We explain it as follows.
	Take an integer $m\ge 2$ and a $k\times n$ matrix $H$ whose components are contained in $\ZZ$.
	Let $C$ be the linear code over $R$
	defined by  
	$$
	C=\{ c \in R^n \mid \rho (H)\ ^t c =0 \},
	$$ 
	where $\rho(H)$ is the reduction of $H$ modulo $m$.
	The matrix $\rho(H)$ is called a {parity-check matrix} of $C$.
	 We obtain 
	\begin{align*}
		\sum_{\substack{\gamma\in \ZZ^n \\ H ^t\gamma\equiv 0 \textrm{ {\rm(mod} $m${\rm)}}}} \prod_{j=1}^n I_{\gamma_j}(t_j)
		=
		\sum_{\gamma \in \rho^{-1}(C)} \prod_{j=1}^n I_{\gamma_j}(t_j).
	\end{align*}
\end{remark}
%%%%%%%%%%%%%%%%%%%%
In the following, we give other expressions of the aforementioned $I$-Bessel lattice sum 
(i.e., the left hand-side of \eqref{Identity from Possion}) in terms of the complete weight enumerator of 
a linear code $C$.
%%%%%%%%%%%%%%%%%%%%%%%%%%%%%%%%%
\begin{lemma}\label{I-cwe}
Let $C$ be a linear code over $R=\ZZ/m\ZZ$ of length $n$ with $m\ge2$.
Take $x= (x_{1}, \ldots, x_{n}) \in \ZZ^n$ and $t_1, \dots, t_n \in \C$. 
Then, we obtain 
\begin{align}\label{I-A}
\sum_{\gamma \in x+ \rho^{-1}(C)} \prod_{j=1}^n {I}_{\gamma_j}(t_j)
=
\sum_{c \in C}
\prod_{i=0}^{m-1}
\prod_{1 \le j \le n \atop c_j = \rho(i)} A_{x_j+ i}(t_j),
\end{align}
where $\gamma= (\gamma_{1}, \ldots, \gamma_{n})$, $c= (c_{1}, \ldots, c_{n})$ and 
$$
A_y(t)
:=
\sum_{\gamma \in y+ m\Z} {I}_{\gamma }(t)
= \dfrac{1}{m}
\sideset{}{'} \sum_{j= -[m/2]}^{[m/2]} \exp\left(t \cos \frac{2 \pi j}{m}\right) e^{2 \pi \ii y \frac{j}{m}},
\qquad y \in \Z.$$ 

In particular, for $x\in \Z^n$ and $t \in \C$ we have
\begin{align}\label{I-A-Z}
\sum_{\gamma \in x+ \rho^{-1}(C)} \prod_{j=1}^n I_{\gamma_j}(t)
= \cwe_{\rho(x)+ C}(A_0(t),\dots, A_{m-1}(t)).
\end{align}

\end{lemma}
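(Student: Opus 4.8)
The plan is to compute the left-hand side of \eqref{I-A} by splitting the lattice $\rho^{-1}(C)$ into cosets of $m\Z^n$ indexed by the codewords $c \in C$, and then to evaluate the resulting per-coordinate sums using the single-variable identity \eqref{I-integer}. First I would observe that, as a set, $x + \rho^{-1}(C) = \bigsqcup_{c \in C}\{\gamma \in \Z^n : \rho(\gamma) = \rho(x) + c\}$, and that for each fixed $c$ the inner set is a product over $j = 1,\dots,n$ of the one-dimensional cosets $x_j + i_j + m\Z$, where $i_j \in \{0,\dots,m-1\}$ is the unique representative with $\rho(i_j) = c_j$. Since the sum defining the left-hand side is absolutely convergent by Lemma \ref{convergence lem} (here $\rho^{-1}(C)$ is a sublattice of $\Z^n$ and $x \in \Z^n$), I may freely interchange the order of summation and factor the product over coordinates, giving
\begin{align*}
\sum_{\gamma \in x+ \rho^{-1}(C)} \prod_{j=1}^n I_{\gamma_j}(t_j)
= \sum_{c \in C} \prod_{j=1}^{n} \left( \sum_{\gamma_j \in x_j + i_j + m\Z} I_{\gamma_j}(t_j) \right)
= \sum_{c \in C} \prod_{j=1}^{n} A_{x_j + i_j}(t_j).
\end{align*}
Grouping the factors according to the common value $i$ of $i_j$ — equivalently according to which coordinates satisfy $c_j = \rho(i)$ — rewrites this last product as $\prod_{i=0}^{m-1}\prod_{1\le j\le n,\ c_j = \rho(i)} A_{x_j+i}(t_j)$, which is exactly \eqref{I-A}. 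The identity $A_y(t) = \sum_{\gamma\in y+m\Z} I_\gamma(t) = \frac{1}{m}\sideset{}{'}\sum_{j=-[m/2]}^{[m/2]}\exp(t\cos\frac{2\pi j}{m})e^{2\pi\ii y\frac{j}{m}}$ is the content of \eqref{I-integer} (with the lattice $m\Z$, after shifting by the integer $y$), so that formula requires no further work.

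For the special case \eqref{I-A-Z}, set $t_1 = \cdots = t_n = t$. Then each factor $A_{x_j+i}(t)$ depends on $x_j + i$ only through its residue modulo $m$: indeed $A_y(t)$ is periodic in $y$ with period $m$ because $e^{2\pi\ii(y+m)j/m} = e^{2\pi\ii yj/m}$, so $A_{x_j+i}(t) = A_{\rho(x_j)+i}(t) = A_{(\rho(x)+c)_j}(t)$ for $j$ with $c_j = \rho(i)$. Hence for each codeword $c$ the product $\prod_{i=0}^{m-1}\prod_{j:\,c_j=\rho(i)} A_{x_j+i}(t)$ collapses to $\prod_{i=0}^{m-1} A_i(t)^{n_i(\rho(x)+c)}$, where $n_i(w)$ counts coordinates of $w$ congruent to $i$. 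Summing over $c \in C$ — equivalently over $w = \rho(x) + c$ ranging over the coset $\rho(x) + C$ — and comparing with the definition of the complete weight enumerator gives $\cwe_{\rho(x)+C}(A_0(t),\dots,A_{m-1}(t))$, as claimed.

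I do not expect any serious obstacle here; the argument is essentially bookkeeping once the coset decomposition of $\rho^{-1}(C)$ is in place. The one point that genuinely needs the hypotheses of earlier results is the interchange of summation and the factorization of the product over coordinates, which is licensed by the absolute convergence established in Lemma \ref{convergence lem}; without that, rearranging a sum indexed by a non-rectangular sublattice into a product of one-dimensional sums would not be automatic. The only mild care is to track the correspondence $c_j \leftrightarrow i_j$ (representative of $c_j$ in $\{0,\dots,m-1\}$) consistently and to use the period-$m$ behavior of $A_y(t)$ in $y$ when specializing to a single $t$.
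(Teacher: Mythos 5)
Your proposal is correct and follows essentially the same route as the paper: decompose $x+\rho^{-1}(C)$ into cosets of $m\Z^n$ indexed by $c\in C$, factor each coset sum coordinate-by-coordinate into the one-dimensional sums $A_{x_j+i}(t_j)$ (justified by the absolute convergence from Lemma \ref{convergence lem}), and then use the period-$m$ dependence of $A_y(t)$ on $y$ to collapse the equal-$t$ case to the complete weight enumerator of $\rho(x)+C$. No gaps.
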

%%%%%%%%%%%%%%%%%
\begin{proof}
For a vector $v=(v_1,\dots,v_n)\in\R^n$ and a subset $S\subset \{1,\dots, n\}$, 
let $v^S \in \R^{n-\#S}$ denote the following punctured vector: 
$$
v^S:=( v_j  )_{j \in \{1,\dots,n\} -S}.
$$
For example, we have $v^{\{1\}}= (v_{2}, \ldots, v_{n})$ and $v^{\{1,2\}}= (v_{3}, \ldots, v_{n})$. 

Let $x= (x_{1}, \ldots, x_{n} ) \in \Z^n$ and $c= (c_{1}, \ldots, c_{n}) \in C$, and fix them. 
Then we have
{\allowdisplaybreaks \begin{align*}
&\sum_{\gamma \in x+ \rho^{-1}(c)} \prod_{j=1}^n {I}_{\gamma_j}(t_j) 
\\
= &
\left(
\sum_{\gamma_1 \in x_1+ \rho^{-1}(c_1)} {I}_{\gamma_1}(t_1 )
\right)
\left(
\sum_{\gamma^{\{ 1 \}} \in x^{\{ 1 \}} + \rho^{-1}(c^{\{ 1 \}}) }
\prod_{j=2}^n {I}_{\gamma_j}(t_j)
\right)
\\
= &
\left(
\sum_{\gamma_1 \in x_1+ \rho^{-1}(c_1)} {I}_{\gamma_1}(t_1)
\right)
\left(
\sum_{\gamma_2 \in x_2+ \rho^{-1}(c_2)} {I}_{\gamma_2}(t_2)
\right) 
\\
& \times 
\left(
\sum_{\gamma^{\{ 1,2 \}} \in x^{\{ 1,2 \}} + \rho^{-1}(c^{\{ 1,2 \}})}
\prod_{j=3}^n {I}_{\gamma_j }(t_j)
\right)
\\
= &
\cdots
\\
= &
\prod_{j=1}^n
\left(
\sum_{
\gamma_j \in x_j+ \rho^{-1}(c_j)} {I}_{\gamma_j}(t_j)
\right)
\\
= &
\prod_{j=1}^n
A_{x_j}(t_j)^{\delta(\rho^{-1}(c_j)= m\Z)}
A_{x_j+ 1}(t_j)^{\delta(\rho^{-1}(c_j)= 1+ m\Z)}
\cdots
A_{x_j+ m-1}(t_j)^{\delta(\rho^{-1}(c_j)= m-1+ m\Z)}
\\
= &
\prod_{j=1}^n
A_{x_j}(t_j)^{\delta(c_j= 0)}
A_{x_j+ 1}(t_j)^{\delta(c_j= 1)}
\cdots
A_{x_j+ m-1}(t_j)^{\delta(c_j= m-1)}\\
= & 
\prod_{i=0}^{m-1}\prod_{1 \le j \le n \atop c_j = \rho(i)} A_{x_j+ i}(t_j), 
\end{align*}
}where, for a condition $\rm P$, $\delta(\rm P)$ denotes the generalized Kronecker symbol so that $\delta(\rm P)=1$ if
$\rm P$ is true, and $\delta(\rm P)=0$ if $\rm P$ is false, respectively.
Hence we have \eqref{I-A}. 

Next we suppose $t_1= \cdots= t_n= t$.
We have the identities
$$
A_y(t) = 
\sum_{\gamma \in y+ m\Z} {I}_{\gamma }(t)
= \dfrac{1}{m} \sum_{j= 0}^{m-1} \exp\left(t \cos \frac{2 \pi j}{m}\right) e^{2 \pi \ii y \frac{j}{m}}
$$
and $A_{y}= A_{y+m\ell}$ for all $\ell \in \Z$. 
Combining these, we obtain
$$
\prod_{i=0}^{m-1} \prod_{1 \le j \le n \atop c_j= \rho(i)}A_{x_j+ i}(t)
=
\prod_{\ell= 0}^{m-1} A_{\ell}(t)^{\# \{1 \le j \le n \mid \rho(x_j)+ c_j= \rho(\ell) \}}
$$
for $x \in \Z^n$. 
This implies
\begin{align*}
\sum_{\gamma \in x+ \rho^{-1}(C)} \prod_{j=1}^n I_{\gamma_j}(t)
&= \sum_{c \in C} \prod_{\ell= 0}^{m-1} A_{\ell}(t)^{\# \{1 \le j \le n \mid \rho(x_j)+ c_j= \rho(\ell)\}}
\\
&= \cwe_{\rho(x)+C} (A_0(t), \dots, A_{m-1}(t)).
\end{align*}
Therefore \eqref{I-A-Z} follows and hence we are done.
\end{proof}

\begin{theorem}\label{thm:code-main}
For any $t \in \C$ and any $x \in \Z^n$, we have
\begin{align}\label{x-MacWilliams-I}
	&\cwe_{\rho(x)+ C}(mA_0(t),\dots, mA_{m-1}(t)) 
	=\#C\sum_{c\in C^\bot}\prod_{j=1}^{n}e^{t\cos(\frac{2\pi c_j}{m})} e^{\frac{2\pi\ii x_jc_j}{m}}.
\end{align}
In particular,
for any $t \in \C$ and any $x \in \Z$ we have
\begin{align}\label{diag-MacWilliams-I}
	&\cwe_{\rho(\iota(x))+C}(mA_0(t), \ldots, mA_{m-1}(t))\\
	=&\#C\, \cwe_{C^\bot}(e^{t\cos\frac{2\pi\cdot0}{m}}e^{2\pi\ii x\frac{0}{m}},\ldots,
	e^{t \cos\frac{2\pi \cdot (m-1)}{m}} e^{2\pi\ii x\frac{m-1}{m}}),
	\notag
\end{align}
where $\iota : \R \hookrightarrow \R^n$ is the diagonal embedding.
\end{theorem}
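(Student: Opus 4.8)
The plan is to reduce the left-hand side of \eqref{x-MacWilliams-I} to an $I$-Bessel lattice sum via Lemma \ref{I-cwe}, then to evaluate that sum by a finite Fourier expansion together with orthogonality of characters on the finite abelian group $C$, and finally to deduce \eqref{diag-MacWilliams-I} by specializing to the diagonal.

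Since the complete weight enumerator is homogeneous of degree $n$, the identity \eqref{I-A-Z} gives
$$\cwe_{\rho(x)+C}(mA_0(t),\dots,mA_{m-1}(t))=m^n\,\cwe_{\rho(x)+C}(A_0(t),\dots,A_{m-1}(t))=m^n\sum_{\gamma\in x+\rho^{-1}(C)}\prod_{j=1}^n I_{\gamma_j}(t),$$
so it suffices to evaluate the $I$-Bessel lattice sum on the right, which converges absolutely by Lemma \ref{convergence lem}. I would fix for each $c\in C$ a lift $\tilde c\in\Z^n$, so that $\rho^{-1}(C)=\bigsqcup_{c\in C}(\tilde c+m\Z^n)$, and reorganize the sum coordinatewise --- exactly as in the proof of Lemma \ref{I-cwe} --- into $\sum_{c\in C}\prod_{j=1}^n A_{x_j+\tilde c_j}(t)$. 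Substituting $A_y(t)=\tfrac1m\sum_{k=0}^{m-1}\exp(t\cos\tfrac{2\pi k}{m})e^{2\pi\ii yk/m}$ and interchanging the finite sums, the part depending on $c$ becomes, for each $k=(k_1,\dots,k_n)\in\{0,\dots,m-1\}^n$, the character sum $\sum_{c\in C}e^{2\pi\ii\langle\tilde c,k\rangle/m}$: the map $c\mapsto e^{2\pi\ii\langle\tilde c,k\rangle/m}$ is a well-defined character of $C$ (its value depends only on $\langle\tilde c,k\rangle\bmod m$, hence only on $c$ and $k\bmod m$), and it is trivial precisely when $k\bmod m\in C^\bot$, so by orthogonality the sum is $\#C$ in that case and $0$ otherwise. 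Re-indexing the surviving $k$ by $c\in C^\bot\subseteq R^n$ and using $\#R^n=m^n$ yields
$$\sum_{\gamma\in x+\rho^{-1}(C)}\prod_{j=1}^n I_{\gamma_j}(t)=\frac{\#C}{m^n}\sum_{c\in C^\bot}\prod_{j=1}^n e^{t\cos(2\pi c_j/m)}\,e^{2\pi\ii x_jc_j/m},$$
and multiplying by $m^n$ gives \eqref{x-MacWilliams-I}. (The same display also follows from \eqref{Identity from Possion} once one checks $(\rho^{-1}(C))^\ast=\tfrac1m\rho^{-1}(C^\bot)$ and that the rectangular cutoff, with value $\tfrac12$ at $\pm\tfrac12$, collapses the dual sum onto $C^\bot$; the direct route above is cleaner since it avoids that boundary bookkeeping.) For \eqref{diag-MacWilliams-I}, I would put $x=\iota(x)=(x,\dots,x)$ in \eqref{x-MacWilliams-I} and group the product over $j$ by the residue $\ell$ of $c_j$, so that $\prod_{j=1}^n e^{t\cos(2\pi c_j/m)}e^{2\pi\ii xc_j/m}=\prod_{\ell=0}^{m-1}\bigl(e^{t\cos(2\pi\ell/m)}e^{2\pi\ii x\ell/m}\bigr)^{n_\ell(c)}$; summing over $c\in C^\bot$ reassembles $\cwe_{C^\bot}$ evaluated at the arguments $e^{t\cos(2\pi\ell/m)}e^{2\pi\ii x\ell/m}$, $\ell=0,\dots,m-1$.

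The main obstacle is not analytic --- every manipulation is a finite sum or a rearrangement of an absolutely convergent series (Lemma \ref{convergence lem}) --- but rather the bookkeeping around the character-orthogonality step: verifying that $c\mapsto e^{2\pi\ii\langle\tilde c,k\rangle/m}$ is well defined on $C$ and trivial precisely on $k\bmod m\in C^\bot$, so that $\sum_{c\in C}e^{2\pi\ii\langle\tilde c,k\rangle/m}=\#C\cdot\1_{C^\bot}(k\bmod m)$, together with keeping straight the identifications among $\Z^n$, its $m\Z^n$-cosets and $R^n=(\Z/m\Z)^n$, and the relation $\#C\cdot\#C^\bot=m^n$.
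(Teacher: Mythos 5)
Your proof is correct, but it reaches the dual-code sum by a different mechanism than the paper. The paper applies the $n$-dimensional Poisson summation identity \eqref{Identity from Possion} (Theorem \ref{thm:I-E} for the lattice $\rho^{-1}(C)$), identifies the left-hand side with the complete weight enumerator via \eqref{I-A-Z}, and then carries out exactly the ``boundary bookkeeping'' you chose to avoid: it proves $\rho^{-1}(C^{\bot})=m\{\rho^{-1}(C)\}^{*}$ by one containment plus a volume count, and collapses the dual-lattice sum with the rectangular cutoff onto $C^{\bot}$. You instead stay at the level of the coset decomposition \eqref{I-A}, insert the finite exponential-sum formula for $A_y(t)$ (the $n=1$ case of Theorem \ref{thm:I-E}), and use orthogonality of the well-defined characters $c\mapsto e^{2\pi\ii\langle\tilde c,k\rangle/m}$ of the finite group $C$ to pick out the $k$ with $\rho(k)\in C^{\bot}$. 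Your route is more elementary once the one-dimensional identity is in hand --- it trades the dual-lattice computation and the weight-$\tfrac12$ boundary terms of $\1_{[-1/2,1/2]}$ for finite character orthogonality --- while the paper's route exhibits the statement as a direct instance of the general lattice-sum identity \eqref{Identity from Possion} and records the geometric fact $\rho^{-1}(C^{\bot})=m\{\rho^{-1}(C)\}^{*}$, which is of independent interest; the two are of course equivalent, your orthogonality step being the finite-group shadow of that dual-lattice identification. Your derivation of \eqref{diag-MacWilliams-I}, grouping the coordinates of $c\in C^{\bot}$ by their residues, coincides with the paper's.
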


\begin{proof}
The left-hand side of \eqref{Identity from Possion} for $t=t_1=\cdots=t_n$ is given as
$$\cwe_{\rho(x)+C}(A_0(t), \ldots, A_{m-1}(t))$$
by \eqref{I-A-Z}.
We consider the right-hand side of \eqref{Identity from Possion}.

First we show $\rho^{-1}(C^{\bot})=m\{\rho^{-1}(C)\}^*$.
If $x \in \rho^{-1}(C^{\bot})$, we have $\rho(x)\cdot c=0$ for any $c \in C$,
which leads us $\langle x, y\rangle \in m\Z$ for all $y \in \rho^{-1}(C)$.
This implies the containment $\rho^{-1}(C^{\bot})\subset m\{\rho^{-1}(C)\}^{*}$.
By combining
$\vol(\R^n/\rho^{-1}(C^{\bot})) = \#R^n/\#C^{\bot}$ and
$$\vol(\R^n/m\{\rho^{-1}(C)\}^*) =  
\frac{m^n}{\vol(\R^n/\rho^{-1}(C))}
= \frac{m^n\#C}{\#R^n}=\#C$$
with $\#C^\bot=\frac{\#R^n}{\#C}$,
we obtain $\vol(\R^n/\rho^{-1}(C^{\bot})) = \vol(\R^n/m\{\rho^{-1}(C)\}^*)$.
Hence the proof of $\rho^{-1}(C^{\bot})=m\{\rho^{-1}(C)\}^*$ is complete.

Next we transform the right-hand side of \eqref{Identity from Possion} for $t=t_1=\cdots=t_n$
into
\begin{align*}
& \frac{\#C}{\#R^n} \sum_{\gamma^{*}\in \rho^{-1}(C^\bot)}\prod_{j=1}^{n}1_{[-1/2,1/2]}\left(\frac{\gamma_j^*}{m}\right)
\exp\left(t\cos\left(\frac{2\pi\gamma_j^{*}}{m}\right)\right) e^{2\pi\ii x_j \frac{\gamma_j^*}{m}} \\
= & \frac{\#C}{\#R^n}\sum_{c \in C^\bot}
\prod_{j=1}^{n}\exp\left(t\cos\left(\frac{2\pi c_j}{m}\right)+2\pi\ii x_j\frac{c_j}{m}\right) \notag\end{align*}
for $x \in \Z^n$.
This shows \eqref{x-MacWilliams-I}.

In particular, when there exists $x_0\in \Z$ such that $x_j=x_0$ for all $j\in\{1,\ldots, n\}$,
the right-hand side of the equality above is evaluated as
\begin{align}
& \frac{\#C}{\#R^n}\sum_{c \in C^\bot}
\prod_{\ell=0}^{m-1}\left\{\exp\left(t \cos\left(\frac{2\pi \ell}{m}\right)+2\pi\ii x_0\frac{\ell}{m}\right)\right\}^{n_{\ell}(c)} \notag \\
= & \frac{\#C}{\#R^n}\cwe_{C^\bot}(z_0,\ldots, z_{m-1}),\notag
\end{align}
where $z_\ell= \exp(t \cos(\frac{2\pi \ell}{m})+2\pi\ii x_0\frac{\ell}{m})$.
Therefore we obtain 
\begin{align*}
& \cwe_{\iota(x_0)+C}(\sum_{\ell=0}^{m-1}y_\ell, \sum_{\ell=0}^{m-1}\zeta^\ell y_{\ell},
\ldots, \sum_{\ell=0}^{m-1}\zeta^{(m-1)\ell}y_\ell )\\
= & \#C\, \cwe_{C^\bot}(y_0 e^{2\pi\ii x_0\frac{0}{m}},\ldots, y_{m-1}e^{2\pi\ii x_0\frac{m-1}{m}})
\end{align*}
with $\zeta=e^{\frac{2\pi\ii}{m}}$ and $y_\ell= \exp(t \cos(\frac{2\pi \ell}{m}))$.
This completes the proof of \eqref{diag-MacWilliams-I}.
\end{proof}

\begin{remark}
Wood \cite[Theorem 8.1]{Wood} proved the MacWilliams type identity for a linear code $C$ over a finite Frobenius ring $R$. If $R=\Z/m\Z$, it becomes 
\begin{align}\label{Wood identity}
	& \cwe_{\rho(\iota(x_0))+C}(\sum_{\ell=0}^{m-1}X_\ell, \sum_{\ell=0}^{m-1}\zeta^\ell X_\ell,
	\ldots, \sum_{\ell=0}^{m-1}\zeta^{(m-1)\ell}X_\ell )\\
	= & \#C\, \cwe_{C^\bot}(X_0 e^{2\pi\ii x_0\frac{0}{m}},\ldots, X_{m-1}e^{2\pi\ii x_0\frac{m-1}{m}}).\notag
\end{align}
Note that Wood proved a more general identity than \eqref{Wood identity} above.
Theorem \ref{thm:code-main} is a discrete analogue of 
the equation satisfying theta functions of linear codes. 
For $R=\Z/m\Z$ and $x=(0,\dots,0)$, the corresponding identities of theta functions become \eqref{Wood identity} by the substitution  
$$
X_j=\theta_{\frac{j}{\sqrt{m}}+\sqrt{m}\Z}(\tau), \quad (j=0,\dots,m-1),
$$
{\rm(}e.g., \cite[p.87]{Nishimura} when $m$ is a prime number{\rm)}.
Here $\theta_{\Gamma}(\tau)$ is the theta function of a lattice $\Gamma$ defined by 
$$
\theta_{y+\Gamma}(\tau):=\sum_{\gamma \in \Gamma} \exp(\pi \ii \tau \|y+\gamma\|^2), \qquad \tau \in \C, \ \Im(\tau)>0.
$$
For general references of theta functions of lattices  and 
linear codes, see \cite{NRS, CS, Ebeling}. 
\end{remark}

%%%%%%%%%%%%%%%%%%%%%%%%%%%%%%%%%
\begin{example}\label{ex:codes}
If $\Gamma=2\Z$, then we have $\Gamma^\ast= \frac{1}{2}\Z$ and $\vol(\R/\Gamma)=2$. 
For $x \in \Z$ and $t\in \C$, $A_x(t)$ is described as
\begin{align*}
A_x(t)=\sum_{\gamma \in x+2\Z} {I}_{\gamma }(t)
=
\dfrac{1}{2}\sideset{}{'}
\sum_{j=-[2/2]}^{[2/2]}\exp(t\cos \pi j )e^{2\pi \ii x \frac{j}{2}}
=\frac{e^t+(-1)^xe^{-t}}{2}.
\end{align*}

Let $C$ be an $[n,k]$-binary linear code, i.e., an $[n,k]$-linear code over $\Z/2\Z$.
By 
Theorem \ref{thm:code-main}, we obtain
\begin{align*}W_{C}(2A_x(t), 2A_{x+1}(t)) = & \cwe_{\rho(\iota(x))+C}(2A_{0}(t), 2A_{1}(t))
=\#C\cwe_{C^\bot}(e^t, (-1)^{x}e^{-t})\\
= &2^{k} W_{C^\bot}(e^t, (-1)^{x}e^{-t})
\end{align*}
for $x\in \Z$.
We remark that this identity also follows from the MacWilliams identity
\begin{align*}
W_{C^\bot}(X,Y)=\dfrac{1}{2^k}W_C(X+Y,X-Y).
\end{align*}
\end{example}

%%%%%%%%%%%%%%%%%%%%%%%%%%%%%%%%%%%%%%%%%%%%%%%%%%%%%%%%%%%%%%%%%%%
\section{Sums of $I$-Bessel functions and heat equations on lattices}
\label{Sums of I-Bessel functions and heat equations on lattices}
%%%%%%%%%%%%%%%%%%%%%%%%%%%%%%%%%%%%%%%%%%%%%%%%%%%%%%%%%%%%%%%%%%%

The continuous time discrete heat equation on $\Z^n$ 
has been studied by many researchers.
For the equation on $\Z$, the works \cite{GI, BBDS, CGRTV} discussed 
the fundamental solution and the heat semigroup.
For the equation on $\Z^n$, there are related works such as \cite{KN, CJK, CS, Yamasaki, Faustino}.
In particular, the works \cite{KN, CJK, Slavik, CJK2, Yamasaki} discussed 
heat equations on $\Z^n$ as an infinite graph.
There are many works related to heat equations on graphs.
To our best knowledge, the first research on heat equations on graphs 
is due to F.~R.~Chung and S.-T.~Yau (see \cite{Chung, Chung-Yau97, Chung-Yau2001} and 
the literatures cited there for details).

Let $\Lambda = \Z^n B$ be a lattice in $\R^n$.
We regard $\Lambda$ as a $2n$-regular graph.
For a bounded function $f : \Lambda \to \R$, 
let $\Delta_\Lambda$ denote the Laplacian defined by 
\begin{align}\label{laplacian}
\Delta_\Lambda f(x) := \dfrac{1}{2n} \sum_{y \sim x} \left( f(y)-f(x) \right)
=\frac{1}{2n}\sum_{j=1}^{n}(f(x+b_j)+f(x-b_j)-2f(x)), 
\end{align}
where $y$ runs over all vertices in $\Lambda$ 
such that $y$ is adjacent to $x$ in the $2n$-regular graph $\Lambda$, and
$b_j$ for each $j \in \{1,\ldots,n\}$ is the $j$th row vector of $B$, i.e., $B=\left(\begin{smallmatrix}b_1\\ \vdots \\ b_n\end{smallmatrix}\right)$.

The semidiscrete heat equations
(the continuous time discrete heat equations)
on discrete spaces have been studied (see \cite{GI, CS, Slavik, CGRTV, KLW}, etc.).
As for $\Lambda = \Z^n$, an explicit formula of the heat kernel $\K_{\Z^n, t}$ for $\Z^n$ is given as follows.

\begin{theorem}
[{\cite[\S2]{KN}, \cite[\S5.2]{Slavik}}]
	\label{th:heat:Z^n}
Assume $\Lambda = \Z^{n}$. 
Let $u_0 : \Lambda \to \R$ be a bounded function.
Then the heat equation 
	\begin{align}
		\begin{cases}
			\partial_t u(x,t) = \Delta_{\ZZ^n} u(x,t), & \qquad (x, t) \in \Z^n \times \R_{\ge0}, \\
			u(x,0) = u_0(x), & \qquad x \in \Z^n
		\end{cases}
	\end{align}
has a unique bounded solution $u : \Z^n \times \R_{\ge 0} \to \R$ which is differentiable in $t \in \R_{>0}$ and continuous in $t\in \R_{\ge0}$. Moreover it is
explicitly given by 
$$
u(x,t) = (u_0 * \K_{\Z^n,t})(x) 
:= \sum_{y \in \Z^n} u_0 (x-y) \K_{\Z^n,t}(y), 
$$
where $y= (y_{1}, y_{2}, \ldots, y_{n})$ and
$$
\K_{\Z^n ,t}(y) := e^{-t} \prod_{j=1}^n I_{y_j}(\tfrac{t}{n}). 
$$
\end{theorem}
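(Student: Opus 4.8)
The plan is to verify directly that the convolution $u(x,t)=(u_0*\K_{\Z^n,t})(x)$ is a bounded solution, and then to settle uniqueness by a maximum principle tailored to the infinite graph $\Z^n$. First I would show that, for each $t>0$, the kernel $\K_{\Z^n,t}$ is a probability measure on $\Z^n$. Applying Theorem \ref{thm:I-E} with $\Gamma=\Z^n$, $x=0$ and $t_1=\cdots=t_n=t/n$ collapses the right-hand side to the single lattice point $\gamma^\ast=0$ (the only $\gamma^\ast\in\Z^n$ with $\1_{[-1/2,1/2]}(\gamma_j^\ast)\neq0$ for every $j$), yielding $\sum_{y\in\Z^n}\prod_{j=1}^n I_{y_j}(t/n)=e^{t}$ and hence $\sum_{y\in\Z^n}\K_{\Z^n,t}(y)=1$. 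Since $I_k(s)>0$ for $s>0$ and $k\in\Z$, the kernel is nonnegative, so for $|u_0|\le M$ the defining series converges absolutely with $|u(x,t)|\le M$, giving boundedness.

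Next I would check that $u$ solves the equation with the prescribed initial datum. Remark \ref{rem:partial trace} already records that the kernel itself satisfies $\partial_t\K_{\Z^n,t}=\Delta_{\Z^n}\K_{\Z^n,t}$. Writing $u(x,t)=\sum_{z\in\Z^n}u_0(z)\,\K_{\Z^n,t}(x-z)$ and using the translation invariance of $\Delta_{\Z^n}$, I would differentiate the series term by term---legitimate for $t\in\R_{>0}$ since the recurrence $I_k'=\tfrac12(I_{k-1}+I_{k+1})$ reduces the differentiated series to $I$-Bessel lattice sums covered by the local uniform convergence in Lemma \ref{convergence lem}---to transfer the Laplacian onto $\K_{\Z^n,t}$ and obtain $\partial_t u=\Delta_{\Z^n}u$, with $u$ differentiable in $t>0$. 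For the initial condition I would write $u(x,t)-u_0(x)=\sum_{y\in\Z^n}\big(u_0(x-y)-u_0(x)\big)\K_{\Z^n,t}(y)$, in which the $y=0$ term drops out; bounding the remainder by $2M\big(1-\K_{\Z^n,t}(0)\big)$ and using $\K_{\Z^n,t}(0)=e^{-t}I_0(t/n)^n\to1$ as $t\to0^+$ shows $u$ is continuous at $t=0$ with $u(x,0)=u_0(x)$.

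The substantive step is \textbf{uniqueness}. By linearity it suffices to prove that a bounded $v:\Z^n\times\R_{\ge0}\to\R$, continuous in $t$ and differentiable in $t>0$, with $\partial_t v=\Delta_{\Z^n}v$ and $v(\cdot,0)=0$, vanishes identically. The key device is a quadratic barrier: a direct computation gives $\Delta_{\Z^n}\|x\|^2=1$, each coordinate contributing $\tfrac{1}{2n}\big((x_j+1)^2+(x_j-1)^2-2x_j^2\big)=\tfrac1n$. Thus for $\epsilon>0$ and any $C>1$ the function $h(x,t)=\epsilon(\|x\|^2+Ct)$ is a strict supersolution, $\partial_t h-\Delta_{\Z^n}h=\epsilon(C-1)>0$. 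On a slab $\Z^n\times[0,T]$ the difference $w=v-h$ satisfies $w\le M$ and $w\to-\infty$ as $\|x\|\to\infty$ uniformly in $t\in[0,T]$, so its supremum is attained at some $(x_0,t_0)$.

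If $t_0>0$, then at the spatial maximizer $x_0$ one has $\Delta_{\Z^n}w(x_0,t_0)\le0$, while maximality in time forces $\partial_t w(x_0,t_0)\ge0$; together with the identity $\partial_t w=\Delta_{\Z^n}w-\epsilon(C-1)$ this gives $0\le-\epsilon(C-1)<0$, a contradiction. Hence $t_0=0$, where $w(x_0,0)=v(x_0,0)-\epsilon\|x_0\|^2\le0$, so $w\le0$ throughout the slab, i.e.\ $v(x,t)\le\epsilon(\|x\|^2+Ct)$; letting $\epsilon\to0$ with $(x,t)$ fixed yields $v\le0$, and the same argument applied to $-v$ gives $v\equiv0$. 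I expect this maximum-principle step to be the main obstacle: the non-compactness of $\Z^n$ is precisely what the barrier $\|x\|^2$---whose discrete Laplacian is the constant $1$---is designed to tame, forcing the supremum to be attained so that the interior comparison applies.
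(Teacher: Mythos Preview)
Your argument is correct. The existence half mirrors what the paper does (in its self-contained proof of the more general Theorem~\ref{th:heatLambda}, of which the $\Lambda=\Z^n$ case is Theorem~\ref{th:heat:Z^n}): the normalization $\sum_y \K_{\Z^n,t}(y)=1$ via Theorem~\ref{thm:I-E}, boundedness of the convolution, and termwise differentiation justified by Lemma~\ref{convergence lem} are exactly the paper's ingredients, and your treatment of the initial condition via $2M(1-\K_{\Z^n,t}(0))$ is in fact more explicit than the paper's.

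The genuine difference is in uniqueness. The paper does not argue it out: it invokes \cite[Theorem~4.2]{Dodziuk} (and remarks that \cite[Theorem~7.2]{KLW} also applies), treating uniqueness as a black box from the general theory of heat equations on infinite graphs. Your route is an elementary parabolic maximum principle with the barrier $h(x,t)=\epsilon(\|x\|^2+Ct)$, exploiting the pleasant identity $\Delta_{\Z^n}\|x\|^2=1$ to manufacture a strict supersolution and force the supremum of $v-h$ onto the compact region $\{\|x\|\le R\}\times[0,T]$. This buys a fully self-contained proof with no external references, at the cost of being specific to $\Z^n$ (or any lattice, after pulling back by $B^{-1}$); the cited results in the paper, by contrast, handle much broader classes of graphs but hide the mechanism. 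Both are valid; yours is the more transparent for this particular statement.
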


The theorem above can be generalized for any lattice $\Lambda = \Z^{n} B$ which is not necessarily contained in $\Z^n$, and we give an explicit formula of the heat kernel
$\K_{\Lambda,t}$ for $\Delta_\Lambda$ as follows.

\begin{theorem}\label{th:heatLambda}
Let $\Lambda = \Z^{n} B$ be a lattice in $\R^n$
and $u_0 : \Lambda \to \R$ a bounded function.
Then, the heat equation
	\begin{align}\label{eq:heatLambda}
		\begin{cases}
			\partial_t u(x,t) = \Delta_\Lambda u(x,t), & \qquad (x, t) \in \Lambda \times \R_{\ge0}, \\
			u(x,0) = u_0(x), & \qquad x \in \Lambda
		\end{cases}
	\end{align}
has a unique bounded solution $u : \Lambda  \times \R_{\ge 0} \to \R$
which is differentiable in $t \in \R_{>0}$ and continuous in $t\in \R_{\ge0}$.
This is explicitly given by 
$$
u(x,t) = (u_0*\K_{\Lambda,t})(x):=\sum_{y \in \Lambda} u_0 (x-y) \K_{\Lambda, t}(y),
$$
where 
\begin{align*}
\K_{\Lambda, t}(y) := & \dfrac{1}{\vol(\R^n/\Lambda^*)}
e^{-t}
\int_{\R^n/\Lambda^*} e^{2 \pi \ii\langle \xi, y \rangle}
\exp\left(\frac{t}{n} \sum_{j=1}^n \cos 2 \pi \langle \xi, b_j \rangle \right) \, d\xi \\
= & e^{-t}
\prod_{j=1}^{n}I_{(yB^{-1})_j}(\tfrac{t}{n}), \qquad (y \in \Lambda, t \in \R_{\ge0}),
\end{align*}
$b_j$ is the $j$th row of $B$, and $(yB^{-1})_j$ is the $j$th component of the vector $yB^{-1}$.
\end{theorem}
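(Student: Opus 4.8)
The strategy is to reduce the heat equation on the general lattice $\Lambda = \Z^n B$ to the already-solved case $\Lambda = \Z^n$ (Theorem \ref{th:heat:Z^n}) by means of the linear isomorphism $x \mapsto xB^{-1}$ sending $\Lambda$ onto $\Z^n$. First I would observe that the map $\varphi : \Lambda \to \Z^n$, $\varphi(x) = xB^{-1}$, is a bijection of vertex sets carrying the $2n$-regular graph structure of $\Lambda$ (with neighbours $x \pm b_j$) to that of $\Z^n$ (with neighbours $\varphi(x) \pm e_j$), so that for any bounded $v : \Z^n \to \R$ one has $\Delta_\Lambda (v \circ \varphi)(x) = (\Delta_{\Z^n} v)(\varphi(x))$. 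Consequently, if $w(z,t)$ solves the heat equation on $\Z^n$ with initial datum $w_0 = u_0 \circ \varphi^{-1}$, then $u(x,t) := w(xB^{-1},t)$ solves \eqref{eq:heatLambda} with datum $u_0$; conversely any bounded solution on $\Lambda$ pulls back this way. Existence, uniqueness, and the regularity in $t$ then follow immediately from Theorem \ref{th:heat:Z^n}.

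**Identifying the kernel.** Having transported the problem, I would then unwind the convolution formula. Theorem \ref{th:heat:Z^n} gives $w(z,t) = \sum_{z' \in \Z^n} w_0(z-z') \K_{\Z^n,t}(z')$ with $\K_{\Z^n,t}(z') = e^{-t}\prod_{j=1}^n I_{z'_j}(t/n)$. Substituting $z = xB^{-1}$ and reindexing the sum by $y = z'B \in \Lambda$ (so $z' = yB^{-1}$ and $z - z' = (x-y)B^{-1}$), we obtain
\begin{align*}
u(x,t) = \sum_{y \in \Lambda} u_0(x-y)\, e^{-t} \prod_{j=1}^n I_{(yB^{-1})_j}(\tfrac{t}{n}),
\end{align*}
which identifies $\K_{\Lambda,t}(y) = e^{-t}\prod_{j=1}^n I_{(yB^{-1})_j}(t/n)$, the second displayed expression in the statement. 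The integral representation — the first displayed expression — then follows from the Fourier inversion identity on the torus $\R^n/\Lambda^*$: since $\{ \vol(\R^n/\Lambda^*)^{-1/2} e^{2\pi\ii\langle \xi, y\rangle} \}_{y \in \Lambda}$ is an orthonormal basis of $L^2(\R^n/\Lambda^*)$ (here $(\Lambda^*)^* = \Lambda$), expanding the exponential $\exp(\tfrac{t}{n}\sum_j \cos 2\pi\langle \xi, b_j\rangle)$ — equivalently, applying \eqref{i-e} or the one-variable case of Theorem \ref{thm:I-E} coordinatewise after the change of variables $\xi \mapsto \xi\,{}^t\!B$ which sends $\R^n/\Lambda^*$ to $\R^n/\Z^n$ — recovers $e^{-t}\prod_j I_{(yB^{-1})_j}(t/n)$ via Lemma \ref{lem:Ftrans-I}.

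**Verification and the main obstacle.** Finally I would verify directly that the proposed $u(x,t)$ is a bounded solution, differentiable in $t > 0$ and continuous at $t = 0$: boundedness of $u_0$ combined with $\sum_{y\in\Lambda}|\K_{\Lambda,t}(y)| = e^{-t}\prod_j \sum_{k\in\Z}|I_k(t/n)| = e^{-t}e^{t} = 1$ (using $\sum_{k\in\Z} I_k(s) = e^s$ for real $s$) gives $|u(x,t)| \le \|u_0\|_\infty$, and the rapid decay of $I_k(t/n)$ in $k$ from Lemma \ref{convergence lem} justifies differentiating under the sum and applying $\partial_t \K_{\Lambda,t} = \Delta_\Lambda \K_{\Lambda,t}$ (which is the $\Z^n$ heat-kernel property transported through $\varphi$). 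I expect the only genuinely delicate point to be the \emph{uniqueness} of bounded solutions: one must confirm that the uniqueness statement in Theorem \ref{th:heat:Z^n} is preserved under the pull-back $\varphi$, i.e.\ that a bounded solution on $\Lambda$ corresponds precisely to a bounded solution on $\Z^n$ — this is immediate since $\varphi$ is a bijection and boundedness is intrinsic, so uniqueness on $\Lambda$ follows from uniqueness on $\Z^n$. The rest is routine bookkeeping with the change of variables and the Fourier expansion on $\R^n/\Lambda^*$.
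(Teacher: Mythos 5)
Your proposal is correct, and it is precisely the shortcut that the paper acknowledges in the opening sentence of its own proof ("This theorem can be proved simply by using Theorem \ref{th:heat:Z^n} and Lemma \ref{lem:Ftrans-I} if we set $\K_{\Lambda,t}(y):=\K_{\Z^n,t}(yB^{-1})$") and then deliberately declines to follow. Your key observation --- that $\varphi(x)=xB^{-1}$ is a graph isomorphism from the $2n$-regular graph $\Lambda$ (edges $x\sim x\pm b_j$) onto $\Z^n$ (edges $z\sim z\pm e_j$), since $b_jB^{-1}=e_j$, so that $\Delta_\Lambda(v\circ\varphi)=(\Delta_{\Z^n}v)\circ\varphi$ --- transports existence, uniqueness, boundedness and $t$-regularity wholesale from Theorem \ref{th:heat:Z^n}, and the reindexing $y=z'B$ plus the change of variables $\xi\mapsto\xi\,\t B$ in the torus integral (using Lemma \ref{lem:Ftrans-I} coordinatewise) correctly identifies both displayed expressions for $\K_{\Lambda,t}$. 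The paper instead gives a self-contained argument: it proves the integral--product identity by the same change of variables, derives $\sum_{y\in\Lambda}\K_{\Lambda,t}(y)=1$ from Theorem \ref{thm:I-E} rather than from the generating-function identity $\sum_k I_k(s)=e^s$ that you invoke, verifies $\partial_t\K_{\Lambda,t}=\Delta_\Lambda\K_{\Lambda,t}$ by differentiating under the integral over the compact torus, and cites Dodziuk (or Keller--Lenz--Wojciechowski) for uniqueness of bounded solutions on the graph $\Lambda$ directly. What your route buys is brevity; what the paper's route buys is independence from Theorem \ref{th:heat:Z^n}, which is imported from the literature rather than proved in the paper. Both are sound; your handling of the one delicate point (that uniqueness transports through $\varphi$ because boundedness is intrinsic to the bijection of vertex sets) is correct.
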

\begin{proof}
This theorem can be proved simply by using Theorem \ref{th:heat:Z^n} and Lemma \ref{lem:Ftrans-I} if we set $\K_{\Lambda,t}(y):=\K_{\Z^n,t}(yB^{-1})$.
However, we give a proof of our explicit formula of $\K_{\Lambda,t}(y)$ without Theorem \ref{th:heat:Z^n}
in order to make our proof self-contained.

We prove the identity between $\K_{\Lambda,t}(y)$
	and the product of $I$-Bessel functions in the assertion as follows.
Let $\tilde b_j$ for each $j \in \{1,\ldots,n\}$ be the $j$th row vector of ${}^{t}B^{-1}$, i.e.,
${}^{t}B^{-1}=\left(\begin{smallmatrix}\tilde b_1\\ \vdots \\ \tilde b_n\end{smallmatrix}\right)$.
We put
$$\1_{\R^n/\Lambda^*}\left(\sum_{j=1}^{n}c_j\tilde b_j\right):=\prod_{j=1}^{n}\1_{[-1/2,1/2]}(c_j), \qquad (c_1,\ldots, c_n) \in \R^n,$$
where ${\bf1}_{[-1/2,1/2]}$ is the rectangular function given as \eqref{ch}.
Then $\K_{\Lambda,t}(y)$ is transformed into
{\allowdisplaybreaks\begin{align*}
	 &e^{-t}\int_{\R^n} \dfrac{1}{\vol(\R^n/\Lambda^*)}e^{2\pi\ii\langle\xi, y\rangle}\1_{\R^n/\Lambda^*}(\xi)\exp\left(\frac{t}{n} \sum_{j=1}^n \cos 2 \pi \langle \xi, b_j \rangle\right) d\xi \\
= & e^{-t}\int_{\R^n} e^{2\pi\ii\langle x{}^{t}B^{-1}, y\rangle}\1_{\R^n/\Lambda^*}(x{}^{t}B^{-1})\exp\left(\frac{t}{n} \sum_{j=1}^{n} \cos 2 \pi \langle x{}^{t}B^{-1}, b_j\rangle\right) dx \,  \\
= & e^{-t}\int_{\R^n}e^{2\pi\ii\langle x, yB^{-1}\rangle}
\left\{\prod_{j=1}^{n}{\bf1}_{[-1/2,1/2]}(x_j)
\exp\left(\frac{t}{n} \cos 2 \pi x_j \right) \right\}dx \\
= & e^{-t}
\prod_{j=1}^{n}\int_{\R}e^{2\pi\ii x_j(yB^{-1})_j}{\bf1}_{[-1/2,1/2]}(x_j)
\exp\left(\frac{t}{n} \cos 2 \pi x_j\right) dx_j\\
= & e^{-t}
\left\{\prod_{j=1}^{n}I_{(yB^{-1})_j}(\tfrac{t}{n})\right\},
\end{align*}
}where we use Lemma \ref{lem:Ftrans-I} to give the expression in the last line.
By Theorem \ref{thm:I-E} for $\Gamma=\ZZ^n$,  $x=y=0$ and $t_j=\frac{t}{n}$,
we obtain
\begin{align*}
\sum_{y \in \Lambda}\K_{\Lambda,t}(y)
= & \sum_{\gamma \in \ZZ^n}e^{-t}
\prod_{j=1}^{n}I_{\gamma_j}(\tfrac{t}{n}) 
= \sum_{\gamma^* \in \ZZ^n}e^{-t} \prod_{j=1}^{n}{\bf1}_{[-1/2,1/2]}(\gamma_j^*)
e^{\frac{t}{n} \cos 2 \pi \gamma_j^*}
=1.
\end{align*}
First we prove the existence of bounded solutions.
For any fixed bounded function $u_0:\Lambda\rightarrow \R$,
we prove that $u(x,t):=(u_0*\K_{\Lambda,t})(x)$
is a bounded solution to \eqref{eq:heatLambda}.
The defining series is absolutely convergent by
$$\sum_{y \in \Lambda}|u_0(x-y)||\K_{\Lambda,t}(y)|\le\sup_{z \in \Lambda}|u_0(z)|\sum_{y\in\Lambda}\K_{\Lambda,t}(y)=\sup_{z \in \Lambda } |u_0(z)|<\infty,$$
where we use $I_{(yB^{-1})_{j}}(\tfrac{t}{n})\ge 0$. In particular, $u$ is bounded.
The derivative $\partial_t \K_{\Lambda, t}(y)$ is computed as
\begin{align*}
	\partial_t \K_{\Lambda, t}(y)= &
	\dfrac{e^{-t}}{\vol(\R^n/\Lambda^*)}\int_{\R^n/\Lambda^*}
	e^{2\pi \ii\langle\xi,y\rangle}\left\{\frac{1}{n}\sum_{j=1}^{n}(\cos2\pi\langle\xi, b_j\rangle-1)\right\}\\
	& \times \exp\left(\frac{t}{n} \sum_{j=1}^n \cos 2 \pi \langle \xi, b_j \rangle\right) d\xi,
\end{align*}
where we exchange $\partial_t$ and the integral by the compactness of the domain.
The equality
$$e^{2\pi \ii\langle\xi,y\rangle}(\cos2\pi\langle\xi, b_j\rangle-1)
=\frac{1}{2} (e^{2\pi \ii\langle\xi,y+b_j\rangle}
+e^{2\pi \ii\langle\xi,y-b_j\rangle}-2e^{2\pi \ii\langle\xi,y\rangle})$$
leads us to
$\partial_t\K_{\Lambda,t}(y)=\Delta_\Lambda \K_{\Lambda, t}(y)$,
and hence
\begin{align*} & \Delta_\Lambda u(x,t)=\Delta_\Lambda (u_0*\K_{\Lambda, t})(x)\\
	=\, & \frac{1}{2n}\Bigg\{\sum_{y \in \Lambda}u_0(x+b_j-y)\K_{\Lambda,t}(y)
	+\sum_{y \in \Lambda}u_0(x-b_j-y)\K_{\Lambda,t}(y) \\
	& \qquad -2\sum_{y \in \Lambda}u_0(x-y)\K_{\Lambda,t}(y)\Bigg\}\\
	=\, & \frac{1}{2n}\Bigg\{\sum_{y' \in \Lambda}u_0(x-y')\K_{\Lambda,t}(y'+b_j)
	+\sum_{y' \in \Lambda}u_0(x-y')\K_{\Lambda,t}(y'-b_j)\\
& \qquad -2\sum_{y \in \Lambda}u_0(x-y)\K_{\Lambda,t}(y) \Bigg\}\\
	=\, & (u_0*\Delta_\Lambda \K_{\Lambda, t})(x) = (u_0*\partial_t\K_{\Lambda,t})(x)
	=\partial_t (u_0*\K_{\Lambda,t})(x)=\partial_t u(x,t).
\end{align*}
Here we implicitly exchange $\sum_{y\in\Lambda}$ and $\partial_t$ for proving
$(u_0*\partial_t\K_{\Lambda,t})(x)
=\partial_t (u_0*\K_{\Lambda,t})(x)$ above,
and this is justified by the locally uniform convergence of $(u_0*\partial_t\K_{\Lambda,t})(x)$ on $\R_{\ge 0}$ as a function in $t$.
Indeed, the locally uniform convergence follows from
\begin{align*}|(u_0*\partial_t\K_{\Lambda,t})(x)|=& |(u_0*\Delta_\Lambda \K_{\Lambda, t})(x)|
	\le \sup_{z \in \Lambda}|u_0(z)|\times\frac{1}{2n}\times 4 \sum_{y \in \Lambda}\K_{\Lambda,t}(y)
%\frac{2}{n}\sup_{z \in \Lambda}|u_0(z)|<\infty.
\end{align*}
and Lemma \ref{convergence lem}.
Consequently, $u$ is a bounded solution to \eqref{eq:heatLambda}.

The uniqueness of bounded solutions $u$ under the condition that $u_0:\Lambda\rightarrow \R$ is bounded
follows from \cite[Theorem 4.2]{Dodziuk}. We remark that the uniqueness of bounded solutions follows also from \cite[Theorem 7.2]{KLW} since $L:=-2n\Delta_\Lambda$ satisfies {\rm (i.a)} of \cite[Theorem 7.2]{KLW},
i.e., $(L+\alpha)1=\alpha$ for all $\alpha>0$.
\end{proof}

\begin{remark}[The Dedekind eta function]\label{heat eq and eta}
By Theorems \ref{thm:I-E:char} and \ref{th:heatLambda}, 
various theta functions can be obtained as a limit value at the position $x=0$ of heat equations on lattices. 
For example,
if $L\in \Z_{\ge1}$ is relatively prime to $12$, we have
	$$
	\left(\dfrac{12}{L}\right)\sum_{k\in\Z}\left(\dfrac{12}{Lk}\right)
	Le^{-6L^2t}I_{Lk}\left(6L^2t\right)
	=\dfrac{1}{\sqrt{12}}
	\sideset{}{'}\sum_{j=-6 L}^{6 L}
	\left(\dfrac{12}{j}\right)\exp\left(
	6L^2t(\cos \tfrac{\pi j }{6L}-1)
	\right)
	$$
	by \eqref{chi-I}.
	Since $\left(\tfrac{12}{-k}\right)=\left(\tfrac{12}{k}\right)$, 
	the left-hand side of the equality above is rewritten as
\begin{align*} \sum_{k\in\Z}\left(\dfrac{12}{-Lk}\right)
		Le^{-6L^2t}I_{Lk}\left(6L^2t\right)
		&=
		\sum_{k\in L\Z}\left(\dfrac{12}{0-k}\right)
		Le^{-6L^2t}I_{k}\left(6L^2t\right)
		\\
		&=
		L \cdot (\psi_{12,L} * \K_{\Z,6L^2t})(0),
		\end{align*}
		where $\chi_{12}(\cdot):=\left(\tfrac{12}{\cdot}\right)$, $\psi_{12, L}=\chi_{12}\1_{L\ZZ}$ and $\1_{L\ZZ}$ is the characteristic function of $L\ZZ$ on $\ZZ$.
	By Theorem \ref{th:heatLambda}, 
	the function $u(x,t):=L \cdot\chi_{12}(L)\cdot (\psi_{12,L} * \K_{\Z,t})(x)$ is a solution of the following heat equation on the lattice $\Z${\rm:}
	\begin{align}\label{eq:heatZ^n-eta}
		\begin{cases}
			\partial_t u(x,t) = \Delta_{\Z} u(x,t), & \qquad (x, t) \in \Z \times \R_{\ge0}, \\
			u(x,0) = u_0(x), & \qquad x \in \Z.
		\end{cases}
	\end{align}
	The solution of \eqref{eq:heatZ^n-eta} is also expressed as 
	$u(x,t)=\exp(t\Delta_{\Z})u(x,0)$.
	The value $L \cdot \chi_{12}(L)\cdot (\psi_{12,L} * \K_{\Z,6L^2t})(0)$ of the solution at the position $x=0$ and at time $6L^2t$ converges to 
	$\tfrac{2}{\sqrt{12\pi t}}\eta(-\tfrac{1}{\ii\pi t})=\frac{1}{\sqrt{3}}\eta(\ii\pi t)$ as $L\to\infty$
by Example \ref{ex:eta}.
In other words, we have
$$\lim_{L\to\infty}
	[\exp(6L^2t\Delta_{\Z})\,L\chi_{12}(L)\,\psi_{12,L}](0)
	=\frac{2}{\sqrt{12\pi t}}\eta(-\tfrac{1}{\ii\pi t})=\frac{1}{\sqrt{3}}\eta(\ii \pi t).
	$$
	This means that, for the initial condition $u(x,0)=L\cdot\chi_{12}(L)\psi_{12,L}(x)$,  the temperature at $x=0$ of 
	the space $\Z$ and at time $6L^2t$ converges to
	the Dedekind eta function as $L\to\infty$.  
\end{remark}

As an application of Theorem \ref{thm:code-main} and Theorem \ref{th:heat:Z^n}, 
we can consider the heat equation on $\Z^n$ with the initial condition assigned by a linear code 
and we obtain Theorem \ref{thm:heateq-code} as follows.
\begin{proof}[Proof of Theorem 1.3]
Note 
\begin{align*}
(u_0*\K_{\Z^n, t})(x)
=\sum_{\gamma \in \Z^n} \1_{\rho^{-1}(C)}(x-y) \K_{\Z^n, t}(y)
=\sum_{\substack{y \in \Z^n \\ x-y \in \rho^{-1}(C)}} \K_{\Z^n, t}(y).
\end{align*}
Since $x-y \in \rho^{-1}(C)$ is equivalent to 
$y \in x+\rho^{-1}(C)$, we have
\begin{align*}
(u_0*\K_{\Z^n,t})(x)
=\sum_{y \in x+\rho^{-1}(C)} \K_{\Z^n,t}(y)
=\sum_{y \in x+\rho^{-1}(C)}e^{-t}\prod_{j=1}^nI_{y_j}(\tfrac{t}{n}).
\end{align*}
Hence \eqref{solution} follows from \eqref{x-MacWilliams-I} in Theorem \ref{thm:code-main}.
Moreover, $\rho(x)\in C$ implies $\rho(x)+C=C$ and hence we obtain
\eqref{solution-c} by \eqref{diag-MacWilliams-I} for $x=0$ in Theorem \ref{thm:code-main}.
This completes the proof of Theorem \ref{thm:heateq-code}.
\end{proof}

%%%%%%%%%%%%%%%%%%%%%%%%%%%%%%%%%
\section*{Acknowledgements}
%%%%%%%%%%%%%%%%%%%%%%%%%%%%%%%%%
The authors would like to thank a referee for informing them of the paper \cite{XZZ}.
Takehiro Hasegawa was supported by JSPS KAKENHI (grant number JP22K03246).
Hayato Saigo was supported by JSPS KAKENHI (grant number JP22K03405).
Seiken Saito was supported by JSPS KAKENHI (grant number JP22K03405) and Research Origin for Dressed Photon. 
Shingo Sugiyama was supported by JSPS KAKENHI (grant number JP20K14298).

%%%%%%%%%%%%%%%%%%%%%%%%%%%%%%%%%%%%%%%%%%%%%%%%%%%

\end{document}